\newcolumntype{L}[1]{>{\raggedright\let\newline\\\arraybackslash\hspace{0pt}}p{#1}}
\newcolumntype{C}[1]{>{\centering\let\newline\\\arraybackslash\hspace{0pt}}p{#1}}
\newcolumntype{R}[1]{>{\raggedleft\let\newline\\\arraybackslash\hspace{0pt}}p{#1}}
\newtheorem{theorem}{Theorem}
\newtheorem{corollary}{Corollary}
\newtheorem{example}{Example}
\newtheorem{lemma}{Lemma}
\newtheorem{proposition}{Proposition}
\newtheorem{remark}{Remark}
\newtheorem{assumption}{Assumption}
\newcommand{\mM}{{\mathcal M}}
\newcommand{\MM}{{M}}
\newcommand{\YY}{Y}
\newcommand{\ZZ}{Z}
\title{Determining the Number of Factors in High-dimensional
Generalised Latent Factor Models}
\author{Yunxiao Chen,\\
London School of Economics and Political Science\\
  Xiaoou Li,\\
  University of Minnesota}
\date{}
\begin{document}
\maketitle

\doublespacing

\begin{abstract}
As a generalization of the classical linear factor model, generalized latent factor models are useful for analyzing multivariate data of different types, including binary choices and counts. This paper proposes an information criterion to determine the number of factors in generalized latent factor models.   The consistency of the proposed information criterion is established under a high-dimensional setting where both the sample size and the number of manifest  variables grow to infinity, and data may have many missing values. An error bound is established for the parameter estimates, which plays an important role in establishing the consistency of the proposed information criterion. This error bound improves several existing results and may be of independent theoretical interest. We evaluate the proposed method by a simulation study and an application to Eysenck's personality questionnaire.
\end{abstract}

\noindent
Keywords: 
Generalized latent factor model;  Joint maximum likelihood estimator; High-dimensional data; Information criteria; Selection consistency

\section{Introduction}\label{sec:intro}
Factor analysis is a popular method in social and behavioral sciences, including psychology, economics, and marketing \citep{bartholomew2011latent}.  It uses a relatively small number of factors to model the variation in a large number of observable variables,  often known as manifest variables. For example, in psychological science, manifest variables may correspond to personality questionnaire items for which factors are often interpreted as personality traits.
Multivariate data in social and behavioral sciences often involve categorical or count variables, for which the classical linear factor model may not be suitable. Generalized latent factor models \citep{skrondal2004generalized,chen2019structured} provide a flexible framework for more types of data by combining generalized linear models and factor analysis. Specifically, item response theory models \citep{embretson2000item,reckase2009multidimensional}, which are widely used in psychological measurement and educational testing, can be viewed as   special cases of generalized latent factor models. The generalized latent factor models are also closely related to several low-rank models for count data \citep{liu2018pca,robin2019low,mcrae2019low} and mixed data \citep{collins2002generalization,robin2020main} that make
similar probabilistic assumptions, though these works do not pursue interpretations from the factor analysis perspective.

Factor analysis is often used in an exploratory manner for generating scientific hypotheses. In this case, which is known as exploratory factor analysis, the number of factors and the corresponding loading structure are unknown and need to be learned from data. Quite a few methods have been proposed for determining the number of factors in linear factor models, including eigenvalue-based criteria \citep{kaiser1960application,cattell1966scree, onatski2010determining,ahn2013eigenvalue}, information criteria \citep{bai2002determining,bai2018consistency,choi2019model}, cross-validation \citep{owen2016bi}, and parallel analysis \citep{horn1965rationale,buja1992remarks,dobriban2019deterministic}. However, fewer methods are available for determining the number of factors in generalized latent factor models, and statistical theory remains to be developed, especially under a high-dimensional setting when the sample size and the number of manifest variables are large.



Traditionally, statistical inference of generalized latent factor models is typically carried out based on a marginal likelihood function \citep{bock1981marginal,skrondal2004generalized}, in which latent factors are treated as random variables and are integrated out from the likelihood function. However, for high-dimensional data involving large numbers of observations, manifest variables and factors, marginal-likelihood-based inference tends to suffer from a high computational burden and thus may not always be feasible. In that case,  a joint likelihood function  that treats factors as fixed model parameters  may be a good alternative\citep{chen2019joint,chen2019structured,zhu2016personalized}. Specifically, a joint maximum likelihood estimator is proposed in \cite{chen2019joint,chen2019structured} that is easy to compute and   also statistically optimal in the minimax sense when both the sample size and the number of manifest variables grow to infinity. With a diverging number of parameters in the joint likelihood function, the classical information criteria, such as the Akaike information criterion \citep[AIC;][]{akaike1974new} and the Bayesian information criterion \citep[BIC;][]{schwarz1978estimating}, may no longer be suitable.


This paper proposes a joint-likelihood-based information criterion (JIC) for determining the number of factors in generalized latent factor models. The proposed criterion is suitable for high-dimensional data with large numbers of observations and manifest variables and can be used even when data contain many missing values. Under a very general setting, we prove the consistency of the proposed JIC when both the numbers of samples and manifest variables grow to infinity. Specifically, the missing entries are allowed to be non-uniformly distributed in the data matrix, and their proportion is allowed to grow to one, i.e., the proportion of observable entries is allowed to decay to zero. An error bound for the joint maximum likelihood estimator is established under a general setting where the data entries can be non-uniformly missing, and the number of factors can grow to infinity. This error bound substantially extends the existing results on the estimation of generalized latent factor models and related models, including \cite{cai2013max}, \cite{davenport20141},  \cite{bhaskar20151}, \cite{ni2016optimal}, and \cite{chen2019structured}. Simulation shows that the proposed JIC has good finite sample performance under different settings, and an application to the revised Eysenck's personality questionnaire \citep{eysenck1985revised} finds three factors, which confirms the design of this personality survey.


\section{Joint-likelihood-based Information Criterion}

\subsection{Generalized Latent Factor Models}

We consider multivariate data involving $N$ individuals and $J$ manifest variables. Let $y_{ij}$ be a random variable that denotes the $i$th individual's value on the $j$th manifest variable. Factor models assume that each individual is associated with $K$ latent factors, denoted by a vector $F_i = (f_{i1}, ..., f_{iK})^T$. We assume that the distribution of $y_{ij}$  given $F_i$ follows an exponential family distribution with natural parameter
$d_j + A_j^T F_i,$
and possibly a scale parameter $\phi$ that is also known as a dispersion parameter, where $d_j$ and $A_j = (a_{j1}, ..., a_{jK})^T$ are manifest-variable-specific parameters. Specifically, $d_j$ can be viewed as an intercept parameter, and $a_{jk}$ is known as a loading parameter.
More precisely, the probability density/mass function for $y_{ij}$ takes the form
\begin{equation}\label{eq:models}
g(y|A_j, d_j, F_i, \phi) = \exp\left\{\frac{y(d_j + A_j^T F_i) - b(d_j + A_j^T F_i)}{\phi} + c(y, \phi)\right\},
\end{equation}
where $b$ and $c$ are pre-specified functions that depend on  the exponential
family distribution. Given all the person- and manifest-variable-specific parameters, data $y_{ij}$, $i = 1, ..., N$, $j = 1, ..., J$, are assumed to be independent.
In particular, linear factor models for continuous data,
logistic factor model for binary data, and Poisson factor model for counts, are special cases of model \eqref{eq:models}.
We present the logistic and Poisson models as two examples, while pointing out that \eqref{eq:models} also includes linear factor models as a special case when the exponential family distribution is chosen to be a Gaussian distribution.

\begin{example}\label{emp:binary}
When data are binary, \eqref{eq:models} leads to a logistic model. That is, by letting
$b(d_j + A_j^T F_i) = \log\{1+\exp(d_j + A_j^T F_i)\}$, $\phi = 1$, and $c(y, \phi) = 0$, \eqref{eq:models} implies that
$y_{ij}$ follows a Bernoulli distribution with success probability $\exp(d_j + A_j^T F_i)/\{1+\exp(d_j + A_j^T F_i)\}$. This model is known as the multi-dimensional two-parameter logistic model \citep{reckase2009multidimensional} that is widely used in educational testing and psychological measurement.
\end{example}
\begin{example}\label{emp:poisson}
For count data,  \eqref{eq:models} leads to a  Poisson model by letting $b(d_j + A_j^T F_i) = \exp(d_j + A_j^T F_i)$, $\phi  = 1$, and $c(y, \phi ) = -\log(y!)$. Then $y_{ij}$ follows a Poisson distribution with intensity $\exp(d_j + A_j^T F_i)$.  This model is known as the Poisson factor model for count data \citep{wedel2003factor}.
\end{example}

We further take missing data into account under an ignorable missingness assumption. Let $\omega_{ij}$ be a binary random variable, indicating the missingness of $y_{ij}$. Specifically, $\omega_{ij} = 1$ means that $y_{ij}$ is observed, and $\omega_{ij} = 0$ if $y_{ij}$ is missing.
It is assumed that, given all the person- and manifest-variable-specific parameters,  the missing indicators $\omega_{ij}$, $i = 1, ..., N, j = 1, ..., J$, are independent of each other, and are also independent of data $y_{ij}$.
The same missing data setting is adopted in \cite{cai2013max} for a 1-bit matrix completion problem and \cite{zhu2016personalized}
for collaborative filtering. For nonignorable missing data, one may need to model the distribution of $\omega_{ij}$
given $y_{ij}$, $F_i$, $A_j$, and $d_j$. See \cite{little2019statistical} for more discussions on nonignorable missingness. For the ease of explanation, in what follows, we assume the dispersion parameter $\phi>0$ is known and does not change with $N$ and $J$. Our theoretical development below can be extended to the case when $\phi$ is unknown; see Remark~\ref{rmk:dispersion} below for a discussion.

\subsection{Proposed Information Criterion}\label{section:method}

Under the above setting for generalized latent factor models, the log-likelihood function for observed data takes the form
\begin{equation}\label{eq:jointlik}
l_K(F_1,..., F_N, A_1, d_1, ..., A_J, d_J) = \sum_{\omega_{ij}=1} \log  g(y_{ij}|A_j, d_j, F_i, \phi).
\end{equation}
Note that a subscript $K$ is added to the likelihood function to emphasize the  number of factors in the current model.

For exploratory factor analysis, we consider the following constrained joint maximum likelihood estimator as proposed in \cite{chen2019joint,chen2019structured}
\begin{equation}\label{eq:jml}
\begin{aligned}
(\hat F_1,..., \hat F_N, \hat A_1, \hat d_1, ..., \hat A_J, \hat d_J) \in ~~ \arg\max &~~ l_K(F_1,..., F_N, A_1, d_1, ..., A_J, d_J),\\
s.t. &~~ (\Vert F_i \Vert^2 + 1)^{\frac{1}{2}} \leq C, i = 1, ..., N, \\
     &~~ (d_j^2 + \Vert A_j \Vert^2)^{\frac{1}{2}}  \leq C, j = 1, ..., J,
\end{aligned}
\end{equation}
where $\Vert \cdot\Vert$ denotes the standard Euclidian norm. Here $C$ is a reasonably large constant to ensure that a finite solution to \eqref{eq:jml} exists and satisfies certain regularity conditions.

As there is no further constraint imposed under the exploratory factor analysis setting, the solution to \eqref{eq:jml} is not unique.  This indeterminacy of the solution will not be an issue when determining the number of factors, since the proposed JIC only depends on the log-likelihood function value rather than the value of the specific parameters.
The computation of \eqref{eq:jml} can be done by an alternating maximization algorithm
which has good convergence properties according to numerical experiments \citep{chen2019joint,chen2019structured}, even though \eqref{eq:jml} is a non-convex optimization problem. See Appendix~\ref{app:optimization} of the online supplement for further discussions on the computation of \eqref{eq:jml} and the choice of the constraint constant $C$.

Let $n$ be the number of observed data entries,
i.e., $$n = \sum_{i=1}^N \sum_{j=1}^J \omega_{ij}.$$
The proposed JIC takes the form
$$\text{JIC}(K) = -2 \hat l_K + v(n,N,J,K),$$
where $\hat l_K = l_K(\hat F_1,..., \hat F_N, \hat A_1, \hat d_1, ..., \hat A_J, \hat d_J) $
with $\hat F_i$, $\hat A_j$, and $\hat d_j$ given by \eqref{eq:jml}, and
$v(n,N,J,K)$  is a penalty term depending on $n$, $N$, $J$, and $K$. We choose $\hat K$ that minimizes $\text{JIC}(K)$.

As
will be shown in Section~\ref{section:theory}, the consistency of $\hat K$ can be guaranteed under a wide range of choices of $v(n,N,J,K)$.
In practice, we suggest to use
\begin{equation}\label{eq:suggested}
 v(n,N,J,K) = {K}({N \vee J})\log\{n/(N\vee J)\},
\end{equation}
where $N \vee J$  denotes the maximum  of $N$ and $J$. When there is no missing data, i.e., $n = NJ$, then \eqref{eq:suggested} becomes  $v(n,N,J,K) = {K}({N \vee J})\log(N\wedge J)$, where $N \wedge J$  denotes the minimum  of $N$ and $J$.
The advantage of this choice will be explained in Section~\ref{section:theory}.

\section{Theoretical Results}\label{section:theory}
We start with the definition of several useful quantities. Let $p_{ij}=\Pr(\omega_{ij}=1)$ be the sampling weight for $y_{ij}$ and $p_{\min}=\min_{1\leq i\leq N,1\leq j\leq J}p_{ij}$ be their minimum. Also let $n^*=\sum_{i=1}^N\sum_{j=1}^J p_{ij}$, $n^*_{i\cdot}=\sum_{j=1}^J p_{ij}$, and $n^*_{\cdot j}=\sum_{i=1}^N p_{ij}$ be the expected number of observations in the entire data matrix, each row and each column, respectively. Let $p_{\max}= (J^{-1}\max_{1\leq i\leq N}n^*_{i\cdot})\vee (N^{-1}\max_{1\leq j\leq J}n^*_{\cdot j})$ be the maximum average sampling weights for different columns and rows.
Let $m^*_{ij}=d^*_j + (A^*_j)^T F^*_i$ be the true natural parameter for $y_{ij}$, and let $M^*=(m^*_{ij})_{1\leq i\leq N,1\leq j\leq J}$. We also denote $\hat{M}=(\hat{d}_j + \hat{A}_j^T \hat{F}_i)_{N\times J}$ to be the corresponding estimator of $M$ obtained from \eqref{eq:jml}. To emphasize the dependence on the number of factors, we use $\hat{M}^{(K)}$ to denote the estimator when assuming $K$ factors in the model. Let $K_{\max}$ denote the maximum number of factors considered in the model selection process and let $K^*$ be the true number of factors.

The following two assumptions are made throughout the paper.
\begin{assumption}\label{assump:support-b}
  For all $x\in [-2C^2,2C^2]$, $b(x)<\infty$.
\end{assumption}
\begin{assumption}\label{assump:true}
  The true model parameters $F_i^*$, $A_j^*$, and $d_j^*$ satisfy the constraint in \eqref{eq:jml}. That is, $(\Vert F^*_i \Vert^2 + 1)^{\frac{1}{2}} \leq C$ and $\{(d^*_j)^2 + \Vert A^*_j \Vert^2\}^{\frac{1}{2}}  \leq C$, for all $i$ and $j$.
\end{assumption}
In the rest of the section, we will first present error bounds for the joint maximum likelihood estimator, and then present conditions on $v(n,N,J,K)$ that guarantee consistent model selection.
\begin{theorem}\label{thm:error-bound-simple}
Assume
$n^*/(\log n^*)^2\geq (N\wedge J)\log(N+J)$ and the true number of factors satisfies $1\leq K^*\leq K_{\max}$.
 Then, there is a finite constant $\kappa$ depending on $p_{\max}/p_{\min}$, $C$, $\phi$, the function $b$ and independent of $K_{\max}$, $N$, $J$ and $n^*$, such that with probability at least $1-(n^*)^{-1}-2(N+J)^{-1}$,
  \begin{equation}\label{eq:error-bound-simple}
   \max_{ K^*\leq K\leq K_{\max}}\Big\{(NJ)^{-1/2}\big\|\hat{M}^{(K)}-M^*\big\|_F\Big\}\leq \kappa \Big\{\frac{K_{\max} (N\vee J)}{n^*}\Big\}^{1/2}.
  \end{equation}
  In particular, if $K^*$ is known, then we have $(NJ)^{-1/2}\|\hat{M}^{(K^*)}-M^*\|_F\leq \kappa \big\{{K^* (N\vee J)}/{n^*}\big\}^{1/2}$.
\end{theorem}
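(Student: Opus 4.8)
The plan is to exploit the optimality of the constrained joint MLE together with the low-rank and boundedness structure of the natural-parameter matrix. Writing $\tilde F_i = (1, F_i^\top)^\top$ and $\tilde A_j = (d_j, A_j^\top)^\top$, the natural parameter matrix factorizes as $M = \tilde F \tilde A^\top$, so that each candidate $\hat M^{(K)}$ has rank at most $K+1$ and, by the constraints in \eqref{eq:jml} and Cauchy--Schwarz, entries bounded by $C^2$ in absolute value; the same holds for $M^*$ under Assumption~\ref{assump:true}. Since $M^*$ has rank at most $K^*+1 \le K+1$, it is feasible for the $K$-factor problem whenever $K \ge K^*$, and the difference $\Delta^{(K)} := \hat M^{(K)} - M^*$ has rank at most $2K_{\max}+2$ and sup-norm at most $2C^2$ for every $K$ in the range. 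This uniform rank bound is what will let me handle the maximum over $K$ with a single concentration event, and is the reason $K_{\max}$, rather than $K$, appears in \eqref{eq:error-bound-simple}.

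First I would record the basic inequality. Because $\hat M^{(K)}$ maximizes $l_K$ over a set containing $M^*$, we have $l_K(\hat M^{(K)}) \ge l_K(M^*)$. Writing $b(\hat m_{ij}) - b(m^*_{ij}) = b'(m^*_{ij})(\hat m_{ij} - m^*_{ij}) + D_b(\hat m_{ij}, m^*_{ij})$, where $D_b$ is the Bregman divergence of the convex cumulant function $b$, this rearranges to
\begin{equation*}
\frac{1}{\phi}\sum_{i,j}\omega_{ij} D_b(\hat m_{ij}, m^*_{ij}) \;\le\; \frac{1}{\phi}\sum_{i,j}\omega_{ij}\bigl(y_{ij} - b'(m^*_{ij})\bigr)\bigl(\hat m_{ij} - m^*_{ij}\bigr).
\end{equation*}
The left-hand side is a nonnegative ``signal'' term and the right-hand side a mean-zero ``noise'' term that is linear in the error $\Delta^{(K)}$.

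Next I would bound each side. For the signal, Assumption~\ref{assump:support-b} guarantees $b$ is finite, hence smooth and strictly convex, on $[-2C^2, 2C^2]$, so $b''$ is bounded below by some $\underline{b}>0$ there and $D_b(\hat m_{ij}, m^*_{ij}) \ge \tfrac{\underline{b}}{2}(\hat m_{ij}-m^*_{ij})^2$. The crux is then to pass from the sampled quadratic form $\sum_{i,j}\omega_{ij}(\hat m_{ij}-m^*_{ij})^2$ to $p_{\min}\|\Delta^{(K)}\|_F^2$: setting $W_{ij} = \omega_{ij}-p_{ij}$, I would control $\sup |\langle W, \Delta\circ\Delta\rangle|$ over the low-rank, bounded-entry set, using that the Hadamard square $\Delta\circ\Delta$ still has bounded rank and Frobenius norm at most $2C^2\|\Delta\|_F$, together with a spectral-norm bound $\|W\|_{\mathrm{op}} \lesssim \sqrt{p_{\max}(N\vee J)}$. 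This yields a restricted strong convexity statement valid once $\|\Delta^{(K)}\|_F$ exceeds a threshold of the same order as the target bound, below which the conclusion is immediate. For the noise, setting $E_{ij} = \omega_{ij}(y_{ij}-b'(m^*_{ij}))$, I would use $\langle E, \Delta^{(K)}\rangle \le \|E\|_{\mathrm{op}}\|\Delta^{(K)}\|_* \le \|E\|_{\mathrm{op}}\sqrt{2K_{\max}+2}\,\|\Delta^{(K)}\|_F$, and then bound $\|E\|_{\mathrm{op}} \lesssim \sqrt{p_{\max}(N\vee J)}$ by a matrix Bernstein inequality, the entrywise variances being of order $p_{ij}$ since $\mathrm{Var}(y_{ij}) = \phi b''(m^*_{ij})$ is bounded on the region.

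Combining the two bounds gives a quadratic inequality of the form $c\,p_{\min}\|\Delta^{(K)}\|_F^2 \lesssim \sqrt{p_{\max}(N\vee J)K_{\max}}\,\|\Delta^{(K)}\|_F$, which solves to $\|\Delta^{(K)}\|_F \lesssim p_{\min}^{-1}\sqrt{p_{\max}(N\vee J)K_{\max}}$; dividing by $\sqrt{NJ}$ and using $n^* = NJ\bar p$ with $\bar p := n^*/(NJ) \in [p_{\min},p_{\max}]$ converts this into the stated rate $\kappa\{K_{\max}(N\vee J)/n^*\}^{1/2}$, the ratio $p_{\max}/p_{\min}$ being absorbed into $\kappa$. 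The probability $1-(n^*)^{-1}-2(N+J)^{-1}$ corresponds to the failure probabilities of the two spectral-norm concentration bounds, $2(N+J)^{-1}$, and of a truncation step, $(n^*)^{-1}$. I expect the main obstacle to be the restricted strong convexity step: unlike the noise term, which is linear in $\Delta$ and dispatched by a single spectral bound, the signal lower bound requires uniform-over-$\Delta$ control of a random quadratic form, and the unbounded responses in the Poisson case force a truncation of $y_{ij}$ whose level must be reconciled with the concentration inequalities---precisely the role of the hypothesis $n^*/(\log n^*)^2 \ge (N\wedge J)\log(N+J)$.
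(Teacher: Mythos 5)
Your overall architecture coincides with the paper's: the basic inequality $l_K(\hat M^{(K)})\geq l_K(M^*)$ from feasibility of $M^*$ when $K\geq K^*$, the Bregman/Taylor split into a linear noise term and a curvature term, the lower bound on $b''$ over the bounded region, the decomposition $\omega_{ij}=p_{ij}+(\omega_{ij}-p_{ij})$, and the duality bound $|\langle A,B\rangle|\leq \|A\|_2\|B\|_*$; this is exactly the paper's Lemma~\ref{lemma:correct-specify-upper} combined with spectral-norm control of $Z\circ\Omega$ and $\Omega-P$. However, two of your ingredients are too weak to produce \eqref{eq:error-bound-simple}, and they are precisely the two places where the paper's proof is not routine. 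The first gap is the Hadamard term. Writing $\Delta=\hat M^{(K)}-M^*$, you bound $\|\Delta\circ\Delta\|_*$ using ``bounded rank'' together with $\|\Delta\circ\Delta\|_F\leq 2C^2\|\Delta\|_F$. But the rank of a Hadamard product is bounded only by the \emph{product} of the ranks, so $\mathrm{rank}(\Delta\circ\Delta)\leq (2K_{\max}+2)^2$, and your route gives $\|\Delta\circ\Delta\|_*\leq (2K_{\max}+2)\cdot 2C^2\|\Delta\|_F$ --- a factor of order $K_{\max}$, not $K_{\max}^{1/2}$. Carried through the quadratic inequality, this yields $\|\Delta\|_F\lesssim p_{\min}^{-1}K_{\max}\{p_{\max}(N\vee J)\}^{1/2}$, i.e.\ $K_{\max}$ in place of $K_{\max}^{1/2}$ in \eqref{eq:error-bound-simple}; your displayed inequality $c\,p_{\min}\|\Delta\|_F^2\lesssim \{p_{\max}(N\vee J)K_{\max}\}^{1/2}\|\Delta\|_F$ does not follow from the steps you describe. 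The paper closes this with Lemma~\ref{lemma:Hadamard}, which applies the singular-value majorization for Hadamard products (Theorem 2 of \cite{horn1995norm}) to the factorization $\Delta_{ij}=\tilde B_i^T\tilde G_j$ with $\|\tilde B_i\|,\|\tilde G_j\|\leq \sqrt{2}C$, obtaining $\|\Delta\circ\Delta\|_*\leq 2C^2\|\Delta\|_*\leq 2C^2(r+r^*)^{1/2}\|\Delta\|_F$. The bounded factor norms --- not merely bounded entries plus low rank --- are what buy the square root; this is exactly the improvement over \cite{chen2019structured} that the paper highlights.

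The second gap is the spectral-norm bound. You invoke matrix Bernstein to claim $\|Z\circ\Omega\|_2\lesssim \{p_{\max}(N\vee J)\}^{1/2}$, but Bernstein cannot deliver this: at confidence level $1-(N+J)^{-1}$ it gives $\|Z\circ\Omega\|_2\lesssim \sigma\log^{1/2}(N+J)+L\log(N+J)$ with $\sigma^2\asymp p_{\max}(N\vee J)$, so the dimensional logarithm multiplies the leading variance term. The hypothesis $n^*/(\log n^*)^2\geq (N\wedge J)\log(N+J)$ cannot remove a multiplicative $\log^{1/2}(N+J)$ from that term, so your route proves at best the rate $\{K_{\max}(N\vee J)\log(N+J)/n^*\}^{1/2}$ --- essentially the rate of \cite{bhaskar20151}, which the paper's Remark 2 explicitly improves upon. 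The paper instead symmetrizes, truncates at level $\lambda\asymp(\alpha\vee\nu)\log n^*$, and applies the sharp bound of \cite{bandeira2016sharp} (via Lemmas~\ref{lemma:general-spectral-bound} and \ref{lemma:spectral-subexpnential-with-missing}), obtaining $\|Z\circ\Omega\|_2\leq 4\sigma + c\lambda\log^{1/2}(N+J)$: the logarithms multiply only the truncation level $\lambda$, never $\sigma$, and the hypothesis on $n^*$ is exactly what guarantees $\lambda\log^{1/2}(N+J)\lesssim \{n^*/(N\wedge J)\}^{1/2}\leq \{p_{\max}(N\vee J)\}^{1/2}$, making the leading term log-free. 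Your truncation intuition is right in spirit, but the concentration tool must have a log-free variance term; standard matrix Bernstein does not, and with it (and with your Hadamard bound) you would prove a strictly weaker statement than the theorem claims.
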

The upper bound established in Theorem 1 is sharp, in the sense that the following lower bound holds under mild conditions.

\begin{proposition}[Lower bound]\label{prop:lower-bound}
Assume $(K^*)^2(J+N)\leq n^*$. Then, there are  constants $\kappa, N_0,J_0>0$, such that for any $N\geq N_0$, $J\geq J_0$, and any estimator $\bar{M}$,
  \begin{equation}
    \sup_{M^*\in\mathcal{G}}\Pr\Big[(NJ)^{-1/2}\big\|\bar{M}-M^*\big\|_F\geq \kappa^{-1} \{{K^* (N\vee J)}/{n^*}\}^{1/2}\Big]\geq \frac{1}{2},
  \end{equation}
  where $\mathcal{G}=\{M^*=(m^*_{ij}):   F^*_i,A^*_j\in R^{K^*}, (\Vert F^*_i \Vert^2 + 1)^{\frac{1}{2}} \leq C, \{(d^*_j)^2 + \Vert A^*_j \Vert^2\}^{\frac{1}{2}}\leq C \text{ for all }i, j\}$ denotes the parameter space. {Here, $\kappa$ is a constant that depends on $p_{\max}/p_{\min}$, $C$, $\phi$, the function $b$, and is independent of $K^*$.
  It is possibly different from the $\kappa$  in Theorem~\ref{thm:error-bound-simple}. }
\end{proposition}

We make a few remarks on Theorem~\ref{thm:error-bound-simple}.
\begin{remark}
  It is well-known that in exploratory factor analysis, the factors $F_1,\cdots, F_N$ are not identifiable due to rotational indeterminacy, while $m_{ij}$s are identifiable. Thus, we establish error bounds for estimating the matrix $M$ rather than those of $F_i$s and $A_j$s. If additional design information is available and a confirmatory generalized latent factor model is used, then the methods described in Section~\ref{section:method} and theoretical results  in Theorem~\ref{thm:error-bound-simple} can be extended to establish error bounds for $F_i$s following a similar strategy as in \cite{chen2019structured}.

  The key assumption for Theorem~\ref{thm:error-bound-simple} to hold is that both $M^*$ and $\hat M$ are low-rank matrices. It can be easily generalized to other low rank models beyond the current generalized latent factor model, including the low-rank interaction model proposed in \cite{robin2019low}. For example, one may parameterize $m_{ij} = {d}_j + {A}_j^T {F}_i + d_i^\dagger$, where $d_i^\dagger$ is a person-specific intercept term.


\end{remark}

\begin{remark}\label{rmk:2}
  The error bound \eqref{eq:error-bound-simple}
  improves several recent results on low-rank matrix estimation and completion. For example, when $n^*=o\big\{(N\wedge J)^2\big\}$, it improves the error rate $O_p\big[\{{(N\vee J)}{(n^*)^{-1}} + {NJ}{(n^*)^{-3/2} }\}^{1/2}\big]$ in \cite{chen2019structured}, where a fixed $K^*$ and uniform sampling, i.e., $p_{\max}=p_{\min}$, are assumed.
  Other examples include \cite{ni2016optimal}  and \cite{bhaskar20151}, where the error rates are shown to be $O_p[\{{K^*(N\vee J)\log(N+J)}{(n^*)^{-1}}\}^{1/2}]$ and $O_p\{ K^*(N\vee J)^{1/2}(n^*)^{-1/2} + (N\vee J)^{3}(N\wedge J)^{1/2} (K^*)^{3/2}(n^*)^{-2}\}$, respectively, assuming binary data. The error estimate \eqref{eq:error-bound-simple} is also smaller than the optimal rate $\{ {K^*(N\vee J)}{(n^*)^{-1}}\}^{1/4}$ for approximate low rank matrix completion \citep{cai2013max,davenport20141}, which is expected as the parameter space in these works, which consists of
  nuclear-norm constrained matrices, is larger than that of our setting.
  Several technical tools are used to obtain the improved error bound including a sharp bound on the spectral norm of random matrices that extends a recent result in \cite{bandeira2016sharp} and an upper bound of singular values of  Hadamard products of low rank matrices based on a result established in \cite{horn1995norm}.
\end{remark}
Note that the constant $\kappa$ in Theorem 1 depends on $p_{\max}/p_{\min}$. Thus, it is most useful when $p_{\max}/p_{\min}$ is bounded by a finite constant that is independent of $N$ and $J$. In this case, the asymptotic error rate is similar between a uniform sampling and a weighted sampling. In the case where the sampling scheme is far from a uniform sampling, the next theorem provides a finite sample error bound.
\begin{theorem}\label{thm:finite-bound}
Let
$  \kappa_{2C^2}=\sup_{|x|\leq 2C^2}b''(x)$, $\delta_{C^2}=\frac{1}{2}\inf_{|x|\leq C^2}b''(x)$,
 $\kappa_{1,b,C,\phi}=8\delta_{C^2}^{-1}(\phi\kappa_{2C^2})^{1/2}+16C^2$ and $\kappa_{2,b,C,\phi}=(\phi/C^2)\vee (\phi\kappa_{2C^2})^{1/2}$. Then, there exists a universal constant $c$ such that with probability at least $1-2(N+J)^{-1}-(n^*)^{-1}$,
\begin{equation}\label{eq:finite-bound-thm}
  \begin{split}
        &\max_{K^*\leq K\leq K_{\max}}\big\|\hat{M}^{(K)}-M^*\big\|_F\\
        \leq & p_{\min}^{-1}K_{\max}^{1/2}\{ \kappa_{1,b,C,\phi}(\max_{i} n_{i\cdot}^*)^{1/2}\vee (\max_{j} n_{\cdot j}^*)^{1/2} + c(\kappa_{2,b,C,\phi}\log n^*+2C^2)\log^{1/2}(N+J) \}
  \end{split}
  \end{equation}
  for all $N\geq 1, J\geq 1$, $n^*\geq 6$ and $ K_{\max}\geq K^*\geq 1$.
\end{theorem}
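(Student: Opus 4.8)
The plan is to combine the optimality of the constrained joint MLE with a local strong-convexity lower bound for the cumulant function $b$, and then to bound the two resulting stochastic terms by sharp spectral-norm inequalities. I would reparametrise $m_{ij}=\tilde A_j^\top\tilde F_i$ with $\tilde F_i=(1,F_i^\top)^\top$ and $\tilde A_j=(d_j,A_j^\top)^\top$, so that each candidate $\hat M^{(K)}$ has rank at most $K+1$ and, by Assumption~\ref{assump:true}, $M^*$ is feasible with $|m_{ij}^*|\le C^2$. Write $\Delta=\hat M^{(K)}-M^*$, $\epsilon_{ij}=y_{ij}-b'(m_{ij}^*)$, $E=(\epsilon_{ij})$, $\Omega=(\omega_{ij})$, $P=(p_{ij})$, and let $D=(D_b(\hat m_{ij},m_{ij}^*))$ be the matrix of Bregman divergences $D_b(s,t)=b(s)-b(t)-b'(t)(s-t)$. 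Starting from $l_K(\hat M^{(K)})\ge l_K(M^*)$, expanding $y_{ij}=b'(m_{ij}^*)+\epsilon_{ij}$ and cancelling $\phi$ gives the basic inequality $\langle\Omega,D\rangle\le\langle E\circ\Omega,\Delta\rangle$, where $\circ$ is the Hadamard product and $\langle\cdot,\cdot\rangle$ the entrywise inner product.

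Next I would introduce population curvature. Since $\hat m_{ij},m_{ij}^*\in[-C^2,C^2]$, convexity gives $D_b(\hat m_{ij},m_{ij}^*)\ge\delta_{C^2}\Delta_{ij}^2$, so $\langle P,D\rangle\ge\delta_{C^2}\langle P,\Delta\circ\Delta\rangle\ge\delta_{C^2}p_{\min}\|\Delta\|_F^2$; this is exactly where the factor $p_{\min}^{-1}$ in the final bound originates. Writing $\langle\Omega,D\rangle=\langle P,D\rangle-\langle P-\Omega,D\rangle$ then yields the master inequality $\delta_{C^2}p_{\min}\|\Delta\|_F^2\le\langle E\circ\Omega,\Delta\rangle+\langle P-\Omega,D\rangle$. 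For the first (noise) term, trace duality together with $\mathrm{rank}(\Delta)\le K+K^*+2\le 2(K_{\max}+1)$ gives $\langle E\circ\Omega,\Delta\rangle\le\|E\circ\Omega\|_{\mathrm{op}}\{2(K_{\max}+1)\}^{1/2}\|\Delta\|_F$, which is the source of the $K_{\max}^{1/2}$ prefactor. The matrix $E\circ\Omega$ has independent, mean-zero, heteroscedastic entries of variance $p_{ij}\phi b''(m_{ij}^*)\le\phi\kappa_{2C^2}\,p_{ij}$ and sub-exponential tails of scale $\kappa_{2,b,C,\phi}$; truncating $\epsilon_{ij}$ at a level of order $\log n^*$ (which accounts for the $(n^*)^{-1}$ failure probability and the $\kappa_{2,b,C,\phi}\log n^*$ term) and applying a sharp spectral-norm bound extending \cite{bandeira2016sharp} yields $\|E\circ\Omega\|_{\mathrm{op}}$ of order $(\phi\kappa_{2C^2})^{1/2}\{(\max_i n^*_{i\cdot})\vee(\max_j n^*_{\cdot j})\}^{1/2}+(\kappa_{2,b,C,\phi}\log n^*+2C^2)\log^{1/2}(N+J)$, reproducing $\kappa_{1,b,C,\phi}$ and $\kappa_{2,b,C,\phi}$.

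The second term, $\langle P-\Omega,D\rangle$, is the crux, because $D$ depends on the data through $\hat M^{(K)}$ and so is not a sum of independent mean-zero summands. Using $0\le D_{ij}\le\tfrac12\kappa_{2C^2}\Delta_{ij}^2$, I would bound $\langle P-\Omega,D\rangle$ by $\|P-\Omega\|_{\mathrm{op}}$ times a nuclear-norm-type functional of $\Delta\circ\Delta$. The difficulty is that $\Delta\circ\Delta$ is generally full rank, so the naive estimate $\|\Delta\circ\Delta\|_*\le 2(K_{\max}+1)\|\Delta\|_F^2$ is far too lossy under non-uniform sampling; instead I would control the singular values of $\Delta\circ\Delta$ by the row and column Euclidean norms of $\Delta$ via the Hadamard-product inequality of \cite{horn1995norm}, obtaining a fluctuation of order $\|P-\Omega\|_{\mathrm{op}}$ times a quantity comparable to $\|\Delta\|_F^2$ with a small coefficient. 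A second spectral-norm bound on $\|P-\Omega\|_{\mathrm{op}}$ then makes this $\|\Delta\|_F^2$ contribution small enough to be absorbed into the left-hand side, the two spectral events contributing the $2(N+J)^{-1}$ in the failure probability.

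Finally, dividing the master inequality through by $\|\Delta\|_F$ and absorbing the second term leaves $\delta_{C^2}p_{\min}\|\Delta\|_F\le C' K_{\max}^{1/2}\{\cdots\}$, which is the asserted display once constants are collected; because every spectral-norm step is uniform over matrices of rank at most $2(K_{\max}+1)$, the maximum over $K^*\le K\le K_{\max}$ requires no additional union bound. I expect the main obstacle to be precisely the fluctuation term $\langle P-\Omega,D\rangle$: the estimator's dependence on $\Omega$ precludes elementary concentration, and the entrywise square $\Delta\circ\Delta$ destroys the low-rank structure, so obtaining a bound sharp enough to be absorbed by the curvature term — uniformly over the feasible low-rank set and under genuinely non-uniform sampling — is what forces both the Hadamard singular-value estimate and the refined spectral-norm inequality.
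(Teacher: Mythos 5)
Your overall architecture coincides with the paper's proof: the basic inequality from optimality of $\hat M^{(K)}$, the curvature bound producing $\delta_{C^2}p_{\min}\|\Delta\|_F^2$, trace duality with the rank bound $K+K^*+2\le 2(K_{\max}+1)$ for the noise term, a truncation-plus-\cite{bandeira2016sharp} spectral bound for the sub-exponential entries, and the Hadamard-product inequality of \cite{horn1995norm} for the sampling-fluctuation term are exactly the ingredients of Lemmas~\ref{lemma:correct-specify-upper}, \ref{lemma:spectral-pattern}, \ref{lemma:spectral-subexpnential-with-missing}, and \ref{lemma:Hadamard}. However, your treatment of what you yourself identify as the crux contains a genuine gap. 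You split $\langle\Omega,D\rangle=\langle P,D\rangle-\langle P-\Omega,D\rangle$ \emph{before} using curvature, so the Bregman matrix $D$ survives inside the fluctuation term, and you then propose to control $\langle P-\Omega,D\rangle$ via the entrywise bound $0\le D_{ij}\le\tfrac12\kappa_{2C^2}\Delta_{ij}^2$ followed by trace duality. That step is invalid: the nuclear norm is not monotone under entrywise domination, so the bound on $D_{ij}$ gives no control of $\|D\|_*$ by $\|\Delta\circ\Delta\|_*$. Writing $D=W\circ\Delta\circ\Delta$ with the data-dependent weights $W_{ij}=\int_0^1(1-t)\,b''(m_{ij}^*+t\Delta_{ij})\,dt\in[\delta_{C^2},\tfrac12\kappa_{2C^2}]$ does not rescue the argument, since $W$ is full rank and random, so neither $\|W\circ\Delta\circ\Delta\|_*$ nor $\|(P-\Omega)\circ W\|_2$ is usefully bounded, and the Horn inequality no longer applies to the pair. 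The paper's proof of Lemma~\ref{lemma:correct-specify-upper} avoids this by reversing your order of operations: it applies the pointwise curvature bound $\tfrac12 b''(\tilde m_{ij})\ge\delta_{C^2}$ to the $\omega$-weighted Bregman sum first (legitimate since $\omega_{ij}\ge 0$), and only then splits $\omega_{ij}=p_{ij}+(\omega_{ij}-p_{ij})$, so the fluctuation is $\delta_{C^2}\langle\Delta\circ\Delta,\Omega-P\rangle$, involving only the clean Hadamard square of a low-rank matrix with bounded factors, to which trace duality and Lemma~\ref{lemma:Hadamard} apply.

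A second, related error is your plan to close the argument by ``absorbing'' a fluctuation of order $\|P-\Omega\|_2\times(\text{small coefficient})\times\|\Delta\|_F^2$ into the left-hand side $\delta_{C^2}p_{\min}\|\Delta\|_F^2$. This cannot work in the stated generality: $\|P-\Omega\|_2$ is of order $\{p_{\max}(N\vee J)\}^{1/2}$, which diverges, while $p_{\min}\le 1$, so no fixed coefficient makes the fluctuation dominated by the curvature term. It also misreads what the Horn inequality buys: with the factor bounds $\|\tilde B_i\|,\|\tilde G_j\|\le\sqrt 2\,C$ it yields $\|\Delta\circ\Delta\|_*\le 2C^2(K+K^*+2)^{1/2}\|\Delta\|_F$, a bound \emph{linear} in $\|\Delta\|_F$ --- this linearity, not a small quadratic coefficient, is the whole point of Lemma~\ref{lemma:Hadamard}. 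The fluctuation term then has the same form as the noise term, the master inequality reads quadratic $\le$ linear in $\|\Delta\|_F$, and one concludes simply by dividing by $\|\Delta\|_F$; no absorption is needed or possible. With these two corrections your outline becomes the paper's proof.
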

  \begin{remark}
  Theorem~\ref{thm:finite-bound} provides a finite sample error bound for the joint maximum likelihood estimator when the number of factors is known to be no greater than $K_{\max}$. It extends Theorem~\ref{thm:error-bound-simple} in several aspects. First, the constants $  \kappa_{2C^2}$, $\delta_{C^2}$, $\kappa_{1,b,C,\phi}$ and $\kappa_{2,bC,\phi}$ are made explicit in Theorem~\ref{thm:finite-bound}. In addition, it allows the missingness pattern to be far from uniform sampling. To see this, consider the case where $J=N^{\alpha}$, $p_{\min}=N^{-\beta}$, and $p_{\max}/p_{\min}\leq N^{\gamma}$ with $\alpha\in (0,1]$, $\beta\in [0,\alpha)$,  $\gamma\in [0,\beta]$ and $C$ is fixed. Roughly, a larger $\gamma$ suggests a more imbalanced sampling scheme.
  Then, Theorem~\ref{thm:finite-bound} implies $(NJ)^{-1/2}\|\hat{M}^{(K^*)}-M^*\|_F= O_p\{N^{(\beta+\gamma-\alpha)/2}(K^*)^{1/2}\}$. Thus, if $\gamma<\alpha-\beta$ and $K^*=o(N^{\alpha-\beta-\gamma})$, the estimator $\hat{M}^{(K^*)}$ is consistent in the sense that the scaled Frobenius norm $(NJ)^{-1/2}\|\hat{M}^{(K^*)}-M^*\|_F$ decays to zero.
  \end{remark}
Let $u(n,N, J,K)=v(n,N,J,K)-v(n,N,J,K-1)$, and let $\sigma_{1}(M^*)\geq \sigma_{2}(M^*)\geq \cdots\geq \sigma_{K^*+1}(M^*)$ be the non-zero singular values of $M^*$. Note that due to the inclusion of the intercept term $d_j$, a non-degenerate $M^*$ is of rank $K^*+1$.
The next theorem provides sufficient conditions on $u(n,N,J,K)$ for consistent model selection.
\begin{theorem}\label{thm:IC-simplified}
Consider the following asymptotic regime as $N,J\to\infty$,
\begin{equation}\label{eq:IC-simple-asym-reg}
   C=O(1), p_{\min}^{-1}=O(1), \text{ and } K^*=O(1).
 \end{equation}
If the function $u$ satisfies
\begin{equation}\label{eq:simplified-ic-condition-u}
  u(n,N,J,K)=o\big\{\sigma_{K^*+1}^2(M^*)\big\} \text{ and } N\vee J = o\big\{u(n,N,J,K)\big\} \text{ uniformly in } K
 \text{ as } N,J\to\infty,
\end{equation}
then, $\lim_{N,J\to\infty}\Pr(\hat{K}=K^*)=1.$
\end{theorem}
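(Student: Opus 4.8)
The plan is to show that $\mathrm{JIC}(K)-\mathrm{JIC}(K^*)>0$ with probability tending to one for every candidate $K\neq K^*$ with $0\le K\le K_{\max}$, and then conclude by a union bound over the at most $K_{\max}+1$ values of $K$. Writing $\hat l_K$ for the maximised log-likelihood under $K$ factors, the comparison reduces to contrasting the likelihood gap $\hat l_K-\hat l_{K^*}$ against the penalty gap $v(n,N,J,K)-v(n,N,J,K^*)$, which I split into an overfitting case ($K>K^*$) and an underfitting case ($K<K^*$). The engine of both directions is a second-order expansion of the concave map $M\mapsto l(M)=\sum_{\omega_{ij}=1}\{y_{ij}m_{ij}-b(m_{ij})\}/\phi$ about the truth $M^*$: its gradient is the mean-zero score matrix $E=(\omega_{ij}(y_{ij}-b'(m^*_{ij}))/\phi)$, and its curvature is controlled below and above by $\delta_{C^2}$ and $\kappa_{2C^2}$ on the admissible range. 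Two probabilistic ingredients drive the argument on a common event of probability $1-o(1)$: a sharp spectral-norm bound $\|E\|_{\mathrm{op}}=O_p((N\vee J)^{1/2})$ on the score (the extension of \cite{bandeira2016sharp} noted after Theorem~\ref{thm:error-bound-simple}), and the Frobenius error bound of Theorem~\ref{thm:error-bound-simple}.

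For the overfitting direction $K>K^*$, since $M^*$ is feasible for the $K^*$-factor model we have $\hat l_{K^*}\ge l(M^*)$, so concavity yields $\hat l_K-\hat l_{K^*}\le \hat l_K-l(M^*)\le\langle E,\hat M^{(K)}-M^*\rangle$. As $\hat M^{(K)}-M^*$ has rank at most $K+K^*+2$, I bound $\langle E,\hat M^{(K)}-M^*\rangle\le(K+K^*+2)^{1/2}\|E\|_{\mathrm{op}}\|\hat M^{(K)}-M^*\|_F$ and insert the spectral bound together with the $K$-specific rate $\|\hat M^{(K)}-M^*\|_F=O_p(\{K(N\vee J)\}^{1/2})$ from Theorem~\ref{thm:error-bound-simple} (using $n^*\asymp NJ$ under $p_{\min}^{-1}=O(1)$). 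This gives $2(\hat l_K-\hat l_{K^*})=O_p(K(N\vee J))$, whereas $v(n,N,J,K)-v(n,N,J,K^*)\ge(K-K^*)\min_k u(n,N,J,k)$. Because $K/(K-K^*)\le K^*+1=O(1)$ while the condition $N\vee J=o(u)$ in \eqref{eq:simplified-ic-condition-u} forces $u/(N\vee J)\to\infty$, the penalty gap dominates the likelihood gain uniformly in $K$.

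For the underfitting direction $K<K^*$ the rank constraint is binding: by the Eckart--Young theorem any matrix of rank $K+1\le K^*$ satisfies $\|\hat M^{(K)}-M^*\|_F\ge\sigma_{K^*+1}(M^*)$. Strong concavity of $l$, combined with a restricted strong convexity estimate that transfers the observed-entry quadratic form to the full Frobenius norm via $\sum_{\omega_{ij}=1}X_{ij}^2\gtrsim p_{\min}\|X\|_F^2$ for low-rank $X$, yields $l(M^*)-l(\hat M^{(K)})\ge c\,\delta_{C^2}p_{\min}\phi^{-1}\|\hat M^{(K)}-M^*\|_F^2-\|E\|_{\mathrm{op}}\|\hat M^{(K)}-M^*\|_*$ for a universal constant $c$. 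The two parts of \eqref{eq:simplified-ic-condition-u} combine to give $(N\vee J)^{1/2}=o(\sigma_{K^*+1}(M^*))$, so the linear cross term is absorbed into half of the quadratic term, leaving $\hat l_{K^*}-\hat l_K\ge l(M^*)-l(\hat M^{(K)})\gtrsim p_{\min}\sigma_{K^*+1}^2(M^*)$. Since $v(n,N,J,K^*)-v(n,N,J,K)=O(K^*\max_k u)=o(\sigma_{K^*+1}^2(M^*))$, the likelihood loss again dominates the penalty saving.

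I expect the sharp spectral-norm control of the score matrix $E$ to be the main obstacle: it must attain the rate $(N\vee J)^{1/2}$ without extraneous logarithmic factors and hold on an event compatible with the nuclear-norm bound used above, which is precisely the ingredient flagged in the remark following Theorem~\ref{thm:error-bound-simple}. The accompanying restricted strong convexity lemma, lower-bounding the observed-entry curvature for low-rank deviations, is the second technical point. Once both estimates hold on a common high-probability event, the remaining steps are the deterministic comparisons above, and the union bound over $0\le K\le K_{\max}$ with $K\neq K^*$ completes the proof.
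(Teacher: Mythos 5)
Your proposal is correct and takes essentially the same route as the paper: the paper's own proof of Theorem~\ref{thm:IC-simplified} verifies the hypotheses of the general Theorem~\ref{thm:IC-general}, whose supporting lemmas (Lemma~\ref{lemma:overselect}, Lemma~\ref{lemma:underselect}, and Lemma~\ref{lemma:R-bound}) carry out exactly your two-sided comparison --- likelihood gain $O_p\big(K(N\vee J)\big)$ via the rank/spectral-norm bound on the score for $K>K^*$, and likelihood deficit $\gtrsim p_{\min}\sigma_{K^*+1}^2(M^*)$ via Eckart--Young combined with the strong-convexity/Hadamard transfer (the paper's Lemma~\ref{lemma:correct-specify-upper} and Lemma~\ref{lemma:Hadamard}) for $K<K^*$, all on the common high-probability spectral-norm event. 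The only difference is organizational: you specialize the argument directly to the regime \eqref{eq:IC-simple-asym-reg} instead of routing it through the general theorem, which changes nothing in the mathematical content.
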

\begin{remark}
We elaborate on the asymptotic regime \eqref{eq:IC-simple-asym-reg} and the conditions on $u(n,N,J,K)$ in \eqref{eq:simplified-ic-condition-u}. First, $C=O(1)$ and $K^*=O(1)$ require that $C$ and the number of factors are bounded as $N$ and $J$ grow. Second, $p_{\min}^{-1}=O(1)$ suggests that the missingness pattern is similar to uniform sampling with $n^*$ growing at the order of $NJ$. Third, $u(n,N,J,K)=o\{\sigma_{K^*+1}^2(M^*)\}$ requires that  $u(n,N,J,K)$ is smaller than the gap between non-zero singular values and zero-singular values of $M^*$. Under this requirement, the probability of underselcting the number of factors is small.
Fourth, $N\vee J = o\{u(n,N, J,K)\}$ requires that $u(n,N,J,K)$ grows in a faster speed than $N\vee J$. This requirement guarantees that with high probability, we do not over-select the number of factors. Fifth, $n$ is random when there are missing data, and thus $u(n,N,J,K)$ may also be random.
In this theorem, we do not allow  $u(n,N,J,K)$ to be random as implicitly required by  condition \eqref{eq:simplified-ic-condition-u}.
 A general result allowing a random $u(n,N,J,K)$ is given in
Theorem~\ref{thm:IC-general} below.
\end{remark}

\begin{remark}\label{rmk:5}
We provide further explanations on the requirements of $u(n,N,J,K)=o\{\sigma_{K^*+1}^2(M^*)\}$ and $N\vee J = o\big\{u(n,N,J,K)\big\}$. First, $\sigma_{K^*+1}^2(M^*)$ is the smallest non-zero singular value of $M^*$ that measures the strength of the factors. Under the conditions of Theorem~\ref{thm:IC-simplified}, $\sigma_{K^*+1}^2(M^*)/2(\hat l_{K} - \hat l_{K-1}) = O_p(1)$, when $K \leq K^*$. By letting
$u(n,N,J,K)= o\{\sigma_{K^*+1}^2(M^*)\}$, it is guaranteed that, when $K \leq K^*$, $\text{JIC}(K) - \text{JIC}(K-1)= -2(\hat l_{K} - \hat l_{K-1}) + u(n,N,J,K) < 0$  with probability tending to 1. It thus avoids under-selection. Second, under the conditions of Theorem~\ref{thm:IC-simplified}, $2(\hat l_{K} - \hat l_{K-1})  = O_p (N \vee J)$ {for each fixed} $K \geq K^* +1$, i.e., when both models are correctly specified. When $N\vee J = o\big\{u(n,N,J,K)\big\}$ and  $K \geq K^* +1$,  $\text{JIC}(K) - \text{JIC}(K-1)= -2(\hat l_{K} - \hat l_{K-1}) + u(n,N,J,K) >  0$  with probability tending to 1. This avoids over-selection. Finally, the two requirements also imply that selection consistency can only be guaranteed when $N\vee J = o\{\sigma_{K^*+1}^2(M^*)\}$. That is, the factor strength has to be stronger than the noise level.

In practice,  the factor strength $\sigma_{K^*+1}^2(M^*)$ is unknown while $N\vee J$ is observable. Therefore, we recommend to choose $u(n,N,J,K) = (N\vee J)h(n,N,J)$ for some slowly diverging factor $h(n,N,J)$, so that  
over-selection is avoided. Note that we require $h(n,N,J)$ to diverge slowly, so that under-selection is also avoided
for a wide range of factor strength levels. More specifically, we suggest to use $h(n, N, J)= \log\{n/(N\vee J)\}$ which becomes $\log(N\wedge J)$ when there is no missing data. Its consistency is established in Corollaries~\ref{coro:suggested-known} and \ref{coro:suggested} below.




\end{remark}

%
%
%
%
%
%
%

\begin{corollary}\label{coro:suggested-known}
Assume that the asymptotic regime \eqref{eq:IC-simple-asym-reg} holds. Consider $v(n,N,J,K) = {K}({N \vee J})h(N,J)$ for some function $h$.
If $\lim_{N,J\to\infty}h(N,J)=\infty$ and $\lim_{N,J\to\infty}\{h(N,J)\}^{-1}(N\vee J)^{-1}\sigma_{K^*+1}^2(M^*)=\infty$,
then   $\lim_{N,J\to\infty}\Pr(\hat{K}=K^*)=1.$
Specifically, suppose that $p_{min} = 1$.
If $(N\vee J)\log(N\wedge J) = o\{\sigma_{K^*+1}^2(M^*)\}$ and we choose  $v(n,N,J,K) = {K}({N \vee J})\log(N\wedge J)$, then $\lim_{N,J\to\infty}\Pr(\hat{K}=K^*)=1$.
\end{corollary}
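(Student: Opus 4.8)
The plan is to reduce the corollary to a direct application of Theorem~\ref{thm:IC-simplified} by verifying that the two conditions displayed in \eqref{eq:simplified-ic-condition-u} hold under the stated hypotheses. First I would compute the increment $u(n,N,J,K)=v(n,N,J,K)-v(n,N,J,K-1)$. For the general penalty $v(n,N,J,K)=K(N\vee J)h(N,J)$ this telescopes to $u(n,N,J,K)=(N\vee J)h(N,J)$, which is notably independent of $K$. Consequently the phrase ``uniformly in $K$'' in \eqref{eq:simplified-ic-condition-u} becomes vacuous, and I only need to check the two limiting statements for this single quantity.

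Next I would translate the two hypotheses of the corollary into the two conditions of the theorem. The requirement $N\vee J=o(u(n,N,J,K))$ reads $N\vee J=o\big((N\vee J)h(N,J)\big)$, which after dividing by $N\vee J$ is exactly $h(N,J)\to\infty$, the first hypothesis. The requirement $u(n,N,J,K)=o(\sigma_{K^*+1}^2(M^*))$ reads $(N\vee J)h(N,J)/\sigma_{K^*+1}^2(M^*)\to 0$, which is precisely the reciprocal statement $(h(N,J))^{-1}(N\vee J)^{-1}\sigma_{K^*+1}^2(M^*)\to\infty$, the second hypothesis. Since the asymptotic regime \eqref{eq:IC-simple-asym-reg} is assumed throughout, all premises of Theorem~\ref{thm:IC-simplified} are met and the conclusion $\lim_{N,J\to\infty}\Pr(\hat{K}=K^*)=1$ follows. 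For the specialization $h(N,J)=\log(N\wedge J)$ under $p_{\min}=1$, I would note that $p_{\min}=1$ still satisfies $p_{\min}^{-1}=O(1)$ so the regime holds, that $N\wedge J\to\infty$ forces $h(N,J)\to\infty$ automatically, and that the second hypothesis is then literally the assumed signal condition $(N\vee J)\log(N\wedge J)=o(\sigma_{K^*+1}^2(M^*))$; the claim is thus immediate.

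I should be candid that there is no substantive obstacle here: the corollary is a bookkeeping consequence of Theorem~\ref{thm:IC-simplified}, and all the analytic work has already been absorbed into that theorem. The only point requiring a moment's care is recognizing that the increment $u$ does not depend on $K$, which is what makes the uniformity-in-$K$ requirement of \eqref{eq:simplified-ic-condition-u} trivially satisfied; the remainder is the algebraic rearrangement of the two limit conditions into the form in which the hypotheses are stated.
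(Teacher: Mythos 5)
Your proposal is correct and matches the paper's intended argument: the paper states Corollary~\ref{coro:suggested-known} as a direct consequence of Theorem~\ref{thm:IC-simplified} (it gives no separate proof), and the verification is exactly what you wrote---$u(n,N,J,K)=(N\vee J)h(N,J)$ is deterministic and independent of $K$, so the two limit hypotheses are precisely the two conditions in \eqref{eq:simplified-ic-condition-u}, with the specialization $h(N,J)=\log(N\wedge J)$, $p_{\min}=1$ handled as you describe. No gaps.
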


The next theorem extends Theorem~\ref{thm:IC-simplified} to a more general asymptotic setting.
\begin{theorem}\label{thm:IC-general}
Consider the following asymptotic regime as $N,J\to\infty$,
\begin{equation}\label{eq:IC-general-asym-reg}
   C=O(1) \text{ and } (N\wedge J)\log(N+J)=O\{n^*/(\log n^*)^2\} .
 \end{equation}
 Also, assume $p^{-2}_{\min}p_{\max}K^*(N\vee J)=o\{\sigma_{K^*+1}^2(M^*)\}$.
Suppose that there exists a possibly random sequence $\{\xi_{N,J}\}$ such that $\xi_{N,J}\to\infty$ in probability as $N,J\to\infty$,
  and
 with probability converging to one as $N,J\to\infty$, the following inequalities hold
 \begin{equation}\label{eq:cases}
 u(n,N,J,K)
   \begin{cases}
        \leq \xi_{N,J}^{-1}p_{\min}\sigma_{K+1}^2(M^*),  &\text{ if } 1\leq K\leq K^*,\\
        \geq \xi_{N,J}(K^*+1)(p_{\max}/p_{\min})(N\vee J),  &\text{ if } K=K^*+1,\\
        \geq \xi_{N,J}(p_{\max}/p_{\min})(N\vee J), &\text{ if } K^*+2\leq K\leq K_{\max},
   \end{cases}
 \end{equation}
where $K_{\max}\geq K^*$ denotes the largest number of factors considered in model selection and we allow $K_{\max}=\infty$.
Then, $\lim_{N,J\to\infty}\Pr(\hat{K}=K^*)=1.$
\end{theorem}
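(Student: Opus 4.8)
The plan is to show $\lim_{N,J\to\infty}\Pr(\hat K=K^*)=1$ by ruling out the two failure modes, underselection $\{\hat K<K^*\}$ and overselection $\{\hat K>K^*\}$, and then taking a union bound. Since $\hat K$ minimises $\mathrm{JIC}$, it suffices to prove that with probability tending to one, $\mathrm{JIC}(K)>\mathrm{JIC}(K^*)$ for every $K\neq K^*$ with $1\le K\le K_{\max}$. Writing the comparison through the incremental penalty gives $\mathrm{JIC}(K)-\mathrm{JIC}(K^*)=-2\{l_K(\hat M^{(K)})-l_{K^*}(\hat M^{(K^*)})\}+\{v(n,N,J,K)-v(n,N,J,K^*)\}$, where the penalty difference telescopes into $\pm\sum_{K'}u(n,N,J,K')$. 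The engine of the whole argument is the exponential-family identity that, after the $c(\cdot,\phi)$ terms cancel, the log-likelihood difference splits into a noise term minus a curvature term: with $E=(\omega_{ij}(y_{ij}-b'(m^*_{ij})))_{N\times J}$ the masked centred residual matrix, $\Delta^{(K)}=\hat M^{(K)}-M^*$, and $l(M^*)$ the common value $l_K(M^*)$ (well-defined for all $K\ge K^*$), one has $l_K(\hat M^{(K)})-l(M^*)=\phi^{-1}\langle E,\Delta^{(K)}\rangle-\phi^{-1}\sum_{\omega_{ij}=1}D_b(\hat m^{(K)}_{ij},m^*_{ij})$, where $D_b$ is the Bregman divergence of $b$ and, since all entries lie in $[-C^2,C^2]$, $\sum_{\omega_{ij}=1}D_b\ge \delta_{C^2}\sum_{\omega_{ij}=1}(\hat m^{(K)}_{ij}-m^*_{ij})^2$ under Assumption~\ref{assump:support-b}.

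For overselection, $M^*$ is feasible for the rank-$(K+1)$ problem, so $l_{K^*}(\hat M^{(K^*)})\ge l(M^*)$ and the likelihood gain is at most the overfitting term $l_K(\hat M^{(K)})-l(M^*)$. I would bound this by a self-bounding argument: the duality $\langle E,\Delta^{(K)}\rangle\le\sqrt{2(K+1)}\,\|E\|_{\mathrm{op}}\,\|\Delta^{(K)}\|_F$ (valid since $\mathrm{rank}(\Delta^{(K)})\le 2(K+1)$) together with the restricted-strong-convexity lower bound $\sum_{\omega_{ij}=1}(\hat m^{(K)}_{ij}-m^*_{ij})^2\gtrsim p_{\min}\|\Delta^{(K)}\|_F^2$, which holds uniformly over low-rank matrices under \eqref{eq:IC-general-asym-reg}; maximising the resulting concave quadratic in $\|\Delta^{(K)}\|_F$ yields $l_K(\hat M^{(K)})-l(M^*)\lesssim (K+1)\|E\|_{\mathrm{op}}^2/(\phi\,\delta_{C^2}\,p_{\min})$. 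The sharp operator-norm bound underlying Theorem~\ref{thm:error-bound-simple} gives $\|E\|_{\mathrm{op}}^2\lesssim p_{\max}(N\vee J)$ up to logarithmic factors, so the gain is $\lesssim(K+1)(p_{\max}/p_{\min})(N\vee J)$. On the other side, the penalty conditions \eqref{eq:cases} telescope to $v(n,N,J,K)-v(n,N,J,K^*)\ge \xi_{N,J}\,K\,(p_{\max}/p_{\min})(N\vee J)$ for every $K\ge K^*+1$ (the $K=K^*+1$ case supplying the factor $K^*+1$, each later term supplying one unit), which dominates twice the gain uniformly because $\xi_{N,J}\to\infty$ and $(K+1)/K$ is bounded. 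The delicate point is that $K_{\max}$ may be infinite: once $K+1\ge N\wedge J$ the model saturates and the low-rank duality step degenerates, so I would treat that regime separately, capping the effective rank at $N\wedge J$ in the spectral step and, in the fully saturated case, bounding the gain entrywise so that it remains $O(n^*)$ while the penalty is of strictly larger order.

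For underselection, the penalty savings must be outweighed by the likelihood deficit $l_{K^*}(\hat M^{(K^*)})-l_K(\hat M^{(K)})\ge l(M^*)-l_K(\hat M^{(K)})$. Since $\hat M^{(K)}$ has rank at most $K+1<K^*+1$, its Frobenius distance to $M^*$ is at least the best rank-$(K+1)$ approximation error, $\|\Delta^{(K)}\|_F^2\ge\sum_{l=K+2}^{K^*+1}\sigma_l^2(M^*)$. Substituting into $l(M^*)-l_K(\hat M^{(K)})=\phi^{-1}\sum D_b-\phi^{-1}\langle E,\Delta^{(K)}\rangle\ge\phi^{-1}\{\delta_{C^2}p_{\min}\|\Delta^{(K)}\|_F^2-\sqrt{2(K^*+1)}\,\|E\|_{\mathrm{op}}\,\|\Delta^{(K)}\|_F\}$ and minimising over $\|\Delta^{(K)}\|_F\ge(\sum_{l=K+2}^{K^*+1}\sigma_l^2(M^*))^{1/2}$, the quadratic term wins precisely because the signal condition $p_{\min}^{-2}p_{\max}K^*(N\vee J)=o(\sigma_{K^*+1}^2(M^*))$ forces $\sqrt{K^*}\,\|E\|_{\mathrm{op}}=o(p_{\min}\sigma_{K^*+1}(M^*))$; this gives a deficit $\gtrsim p_{\min}\sum_{l=K+2}^{K^*+1}\sigma_l^2(M^*)$. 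Meanwhile the penalty savings are $\sum_{K'=K+1}^{K^*}u(n,N,J,K')\le\xi_{N,J}^{-1}p_{\min}\sum_{l=K+2}^{K^*+1}\sigma_l^2(M^*)$ by the first case of \eqref{eq:cases}, which is negligible as $\xi_{N,J}\to\infty$. A union bound over the finitely many $K<K^*$ closes this case.

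The hard part will be the uniform control of the overfitting gain: it rests on the sharp operator-norm bound for $E$ and on a restricted-strong-convexity inequality that must hold simultaneously for all candidate ranks, and it must be pushed to the near-saturated regime when $K_{\max}=\infty$. These are exactly the tools built for Theorems~\ref{thm:error-bound-simple} and \ref{thm:finite-bound}, which I would invoke rather than re-derive. Two bookkeeping matters remain: $n$, and hence $u(n,N,J,K)$, is random and $\xi_{N,J}$ only diverges in probability, so every inequality is asserted on the high-probability event of \eqref{eq:cases} intersected with the events carrying the operator-norm and strong-convexity bounds; and $\delta_{C^2}=\tfrac12\inf_{|x|\le C^2}b''(x)$ is strictly positive under Assumption~\ref{assump:support-b}, which is what makes the curvature term genuinely coercive in both directions.
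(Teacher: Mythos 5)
Your proposal is correct and follows essentially the same route as the paper's own proof: the same Bregman/noise decomposition of the likelihood difference (the paper's Lemma~\ref{lemma:correct-specify-upper}), the same rank-based duality plus self-bounding quadratic argument giving an overfitting gain of order $K(p_{\max}/p_{\min})(N\vee J)$ that the telescoped penalty dominates (Lemmas~\ref{lemma:overselect} and \ref{lemma:R-bound}), and the same Eckart--Young lower bound plus signal condition against underselection (Lemmas~\ref{lemma:underselect} and \ref{lemma:likelihood-bound-underselect}). The only differences are cosmetic: the paper's curvature bound carries a linear $\|Q\|_2$ correction (controlled via the Hadamard-product lemma) that your ``restricted strong convexity'' statement quietly absorbs into the noise term, and your separate treatment of the saturated regime $K+1\geq N\wedge J$ is unnecessary, since the linear-in-$K$ gain bound and the telescoped penalty already cover $K_{\max}=\infty$ uniformly.
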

Theorem~\ref{thm:IC-general} relaxes the assumptions of Theorem~\ref{thm:IC-simplified} in several aspects. First, it is established under a more general asymptotic regime by allowing $K^*$ to diverge and $p_{\min}$ to decay to zero, as $N$ and $J$ grow. It also allows the missingness pattern to be very different from uniform sampling by allowing $p_{\max}/p_{\min}$ to grow. Second, $u(n,N,J,K)$ is allowed to be random  as long as \eqref{eq:cases} holds with high probability. In particular, the model selection consistency of the suggested penalty \eqref{eq:suggested} is established in Corollary~\ref{coro:suggested} below as an implication of Theorem~\ref{thm:IC-general}. Third, \eqref{eq:cases} provides a more specific requirement on $u(n,N,J,K)$. The second and third lines of \eqref{eq:cases} depend on the true number of factors $K^*$. In practice, we need to choose $u(n,N,J,K)$ in a way that does not depend on $K^*$. For example,  we may choose $u(n,N,J,K) = (K\wedge K_{\max})(p_{\max}/p_{\min})(N\vee J) h(n,N,J)$ for some sequence $h(n,N,J)$ that tends to infinity in probability as $N$ and $J$ diverge, so that the second and third lines of \eqref{eq:cases} are satisfied.

\begin{corollary}\label{coro:suggested}
Assume the asymptotic regime \eqref{eq:IC-simple-asym-reg} holds and $N\vee J=o\{\sigma_{K^*+1}^2(M^*)\}$. Consider $ v(n,N,J,K) = {K}({N \vee J})h(n,N,J)
$.
If $h(n,N,J)\to\infty$ in probability as $N,J\to\infty$ and $\{h(n,N,J)\}^{-1}(N\vee J)^{-1}\sigma_{K^*+1}^2(M^*)\to\infty$ in probability as $N,J\to\infty$,
then   $\lim_{N,J\to\infty}\Pr(\hat{K}=K^*)=1.$ In particular, if we choose  $v(n,N,J,K) = {K}({N \vee J})\log\{n/(N\vee J)\} $ as suggested in \eqref{eq:suggested} and assume $(N\vee J)\log(N\wedge J) = o\{\sigma_{K^*+1}^2(M^*)\}$, then $\lim_{N,J\to\infty}\Pr(\hat{K}=K^*)=1$.
\end{corollary}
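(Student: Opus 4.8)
The plan is to deduce this corollary directly from Theorem~\ref{thm:IC-general} by verifying each of its hypotheses under the stated assumptions. The first step is to compute the increment $u$. Since $v(n,N,J,K)=K(N\vee J)h(n,N,J)$ is linear in $K$, we obtain
$$u(n,N,J,K)=v(n,N,J,K)-v(n,N,J,K-1)=(N\vee J)h(n,N,J),$$
which is independent of $K$. This simple form is what makes the three-way case analysis in \eqref{eq:cases} tractable.

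Next I would check that the structural hypotheses of Theorem~\ref{thm:IC-general} follow from the regime \eqref{eq:IC-simple-asym-reg}. The condition $C=O(1)$ is immediate. Because $p_{\min}^{-1}=O(1)$, the expected sample size satisfies $n^*\geq p_{\min}NJ\gtrsim NJ$, so $n^*/(\log n^*)^2\gtrsim NJ/(\log NJ)^2$, which dominates $(N\wedge J)\log(N+J)$; hence the growth condition in \eqref{eq:IC-general-asym-reg} holds. For the signal condition, I would use $p_{\max}\leq 1$ (since each average $J^{-1}n^*_{i\cdot}$ and $N^{-1}n^*_{\cdot j}$ is a mean of probabilities bounded by one), together with $p_{\min}^{-1}=O(1)$ and $K^*=O(1)$, to obtain $p_{\min}^{-2}p_{\max}K^*(N\vee J)=O(N\vee J)$, which is $o(\sigma_{K^*+1}^2(M^*))$ by the assumed $N\vee J=o(\sigma_{K^*+1}^2(M^*))$.

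The heart of the argument is constructing the sequence $\xi_{N,J}$ in \eqref{eq:cases}. Substituting $u=(N\vee J)h$ and using that $\sigma_{K+1}^2(M^*)\geq\sigma_{K^*+1}^2(M^*)$ for $K\leq K^*$ (so the underselection bound is tightest at $K=K^*$) and that the overselection requirement at $K=K^*+1$ is more restrictive than the one for $K\geq K^*+2$, all three inequalities reduce to requiring $\xi_{N,J}$ to lie below both $p_{\min}\sigma_{K^*+1}^2(M^*)/\{(N\vee J)h\}$ and $h/\{(K^*+1)(p_{\max}/p_{\min})\}$. I would therefore take $\xi_{N,J}$ to be the minimum of these two quantities, so that \eqref{eq:cases} holds by construction. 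It then remains to show $\xi_{N,J}\to\infty$ in probability: the first term diverges because $p_{\min}^{-1}=O(1)$ and $(h)^{-1}(N\vee J)^{-1}\sigma_{K^*+1}^2(M^*)\to\infty$ in probability by hypothesis, while the second diverges because $h\to\infty$ in probability and $(K^*+1)(p_{\max}/p_{\min})=O(1)$. Theorem~\ref{thm:IC-general} then yields $\Pr(\hat{K}=K^*)\to 1$.

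For the specific penalty \eqref{eq:suggested}, $h(n,N,J)=\log(n/(N\vee J))$ is random through $n$, so the main technical point is controlling this randomness. I would invoke a Chernoff or Bernstein bound for the sum of independent Bernoulli indicators $\omega_{ij}$ to show $n/n^*\to 1$ in probability; combined with $n^*\asymp NJ$ this gives $n/(N\vee J)\asymp N\wedge J$, hence $h\asymp\log(N\wedge J)$ with probability tending to one. Under this equivalence, $h\to\infty$ in probability follows from $N\wedge J\to\infty$, and the second limit condition becomes $\sigma_{K^*+1}^2(M^*)/\{(N\vee J)\log(N\wedge J)\}\to\infty$, which is exactly the assumed $(N\vee J)\log(N\wedge J)=o(\sigma_{K^*+1}^2(M^*))$. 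I expect this concentration step for $n$, together with the care needed because $h$ is itself random, to be the only genuine obstacle; everything else is deterministic bookkeeping once $u$ has been computed.
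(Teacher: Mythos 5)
Your proposal is correct and follows essentially the same route as the paper's proof: reduce to Theorem~\ref{thm:IC-general} by computing $u(n,N,J,K)=(N\vee J)h(n,N,J)$, verify the regime and signal conditions, construct a (random) sequence $\xi_{N,J}$ sandwiched below $p_{\min}\sigma_{K^*+1}^2(M^*)/\{(N\vee J)h\}$ and $h/\{(K^*+1)p_{\max}/p_{\min}\}$ so that \eqref{eq:cases} holds, and handle the randomness of $n$ in the specific penalty via concentration. Your explicit choice of $\xi_{N,J}$ as the minimum of the two bounds, and your use of Chernoff/Bernstein where the paper uses a Chebyshev bound with $\mathrm{Var}(n)\leq n^*$, are only cosmetic differences.
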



\begin{remark}\label{rmk:dispersion}
In Theorems~\ref{thm:IC-simplified} and \ref{thm:IC-general}, the dispersion parameter $\phi$ is assumed known. When $\phi$ is unknown, we may first fit the largest model with $K_{\max}$ factors to  obtain an estimate $\hat{\phi}$, and then select the number of factors using JIC with $\phi$ replaced by $\hat{\phi}$. Similar model selection consistency results would still hold. We note that the use of the plug-in estimator for dispersion parameter is common in constructing information criteria for linear models and linear factor models \citep{bai2002determining}.

\begin{remark}\label{rmk:connection}
Several information criteria have been proposed for linear factor models under high-dimensional regimes. In particular,
\cite{bai2002determining} consider a setting that the observed data matrix can be decomposed as the sum of a low rank matrix and a mean-zero error matrix, and propose information criteria to select the rank of the low-rank matrix. Their setting is very similar to the case when the exponential family distribution in \eqref{eq:models} is chosen to be a Gaussian distribution and there is no missing data, except that \cite{bai2002determining} do not require the Gaussian assumption. In fact, under the Gaussian linear factor model and when the dispersion parameter $\phi = 1$, the suggested JIC with penalty term $v(n,N,J,K) = {K}({N \vee J})\log(N\wedge J)$ is
asymptotically equivalent to the  $PC_{p1}$ through  $PC_{p3}$ criteria proposed in \cite{bai2002determining} and in particular takes the same form as
the
$PC_{p3}$ criterion.
 \cite{bai2018consistency} consider the spike covariance structure model \citep{johnstone2001distribution} and develop information criteria for choosing the number of dominant eigenvalues which corresponds to the number of factors when regarding the spike covariance structure model as a linear factor model. By random matrix theory, they establish consistency results  when the sample size and the number of manifest variables grow to infinity at the same speed and there is no missing data.

As mentioned in Section~\ref{sec:intro}, nonlinear factor models are more suitable for multivariate data that involve categorical or count variables. Specifically, under model \eqref{eq:models}, the expected data matrix is $\{b'(m_{ij})\}_{N\times J}$. Although $M = (m_{ij})_{N\times J}$ is a low-rank matrix, $\{b'(m_{ij})\}_{N\times J}$ is no longer a low-rank matrix when $b'$ is a nonlinear transformation. Consequently, methods developed for the linear factor model do not work well when data follow a nonlinear factor model. The presence of massive missing data further complicates the problem.

\end{remark}

Finally, we point out that the theoretical results established above may also be useful for developing information criteria based on the marginal likelihood. The marginal likelihood, which is widely used for estimating latent variable models, treats the latent factors as random variables and integrates them out.
When both $N$ and $J$ are large, by applying the Laplace approximation \citep{huber2004estimation}, the marginal likelihood can be approximated by a joint likelihood plus some remainder terms. The development above can be used to analyze this joint likelihood term.
Further discussions are given in Appendix~\ref{app:marginal} of the online supplement.


\end{remark}
\section{Numerical Experiments}\label{sec:examp}
\subsection{Simulation}

We use a simulation study to evaluate the model estimation and the selection of factors with the proposed JIC with $v(n,N,J,K) = {K}({N \vee J})\log\{n/(N\vee J)\}$. Due to the space constraint, we only present some of the results under the logistic factor model for binary data. Additional results from this study and results from other simulation studies can be found in Appendix \ref{app:addsim} of the online supplement.

In particular, we consider eight combinations of $N$ and $J$, given by $J = 100, 200, 300, 400$, $N = J$ and $N = 5J$. We consider three settings for missing data, including (M1) no missing data, (M2) uniformly missing, with missingness probability $p_{ij} = 0.5$ for all $i$ and $j$, and (M3) non-uniformly missing, with missingness probability
$p_{ij} =  {\exp(f_{i1}^*)}/{\{1+\exp(f_{i1}^*)\}}$ that depends on the value of the first factor. The true number of factors is set to $K^* = 3$.
The model parameters are generated as follows. First, the true parameters $d_j^*$, $a_{j1}^*$, ..., $a_{j3}^*$ are generated by sampling independently from the uniform distribution over the interval $[-2,2]$. Second, the true factor values are generated under two settings. Under the first setting (S1), all three factors $f_{i1}^*$, ..., $f_{i3}^*$ are
generated by sampling independently from the uniform distribution over the interval $[-2,2]$, so that all the factors have essentially the same strength. Under the second setting (S2), the first two factors $f_{i1}^*$ and $f_{i2}^*$ are
generated the same as those under the first setting, while the last factor $f_{i3}^*$ is sampled from the uniform distribution over the interval $[-0.8, 0.8]$. Under the second setting, the last factor tends to be weaker than the rest and thus is more difficult to detect. We use the proposed JIC to select $K$ from the candidate set $\{1,2,3,4, 5\}$ and the constraint constant $C$ in \eqref{eq:jml} is set to be 5. The true model parameters satisfy this constraint. All the combinations of the above settings lead to 48 different simulation settings. For each setting, we run 100 independent replications. The computation is done using the R package \texttt{mirtjml}  \citep{zhang2020mirtjml}.

We first examine the results on parameter estimation. The loss
$\max_{3\leq K\leq 5} \{(NJ)^{-1/2} \|\hat{M}^{(K)}-M^* \|_F \}$ under different settings is shown in Figure~\ref{fig:1}. As we can see, under each setting for factor strength and missing data, the loss decays towards zero, as both $N$ and $J$ grow. Given the same $N$ and $J$, the estimation tends to be more accurate when there is no missing data. In addition, the estimation tends to be more accurate under setting M2 where the data entries are uniformly missing than that under M3 where the missingness depends on the latent factors. We further examine the selection of factors. Table~\ref{tab:tab1} presents the frequency that the number of factors is under- and over-selected among the 100 independent replications for all the 48 settings. As we can see, the proposed JIC becomes more accurate as $N$ and $J$ grow.
Under the settings when $N = J$, no under-selection is observed, but the proposed JIC is  likely to over-select when  $J$ is relatively small. Under the settings when $N = 5J$, no over-selection is observed, but  under-selection is observed when one factor is relatively weaker than the others and $J$ is relatively small. We point out that determining the number of factors is a challenging task under our settings when  $J$ is relatively small. To illustrate,  Figure~\ref{fig:signal} shows the box plots of $2(\hat l_3 - \hat l_2)$ and $2(\hat l_4 - \hat l_3)$ under settings when $J = 100$. For most of these settings, $2(\hat l_3 - \hat l_2)$ is not substantially larger than $2(\hat l_4 - \hat l_3)$, while our asymptotic theory requires the former to be of a higher order.

From the results in Table~\ref{tab:tab1}, we see that for relatively small values of $N$ and $J$,
the proposed information criterion tends to over-penalize when
$N = 5J$ and under-penalize when $N=J$. We explain this phenomenon.
Our choice of $v(n,N,J,K)$ is derived from the error bound \eqref{eq:error-bound-simple} in Theorem~\ref{thm:error-bound-simple}. Although this error bound is rate optimal as implied by Proposition~\ref{prop:lower-bound}, it does not take into account the relationship between $N$ and $J$. For example, consider two settings that both have no missing data and the same $J$, but one with $N = J$ and the other with $N=5J$. By Theorem~\ref{thm:error-bound-simple}, the two settings have exactly the same upper bound $\kappa(K_{max}/J)^{1/2}$. However, as we can see from Fig. \ref{fig:1}, the error tends to be larger
under the setting when $N=J$ than that when $N =5J$. Consequently, with the JIC derived from the same upper bound, it is more likely to over-select when $N=J$ and to under-select when $N=5J$. To improve the current information criterion, a refined error bound is needed, according to which we can choose a $v(n,N,J,K)$ that better adapts to the relationship between $N$ and $J$. This is a challenging problem and we leave it for future investigation.





\begin{figure}[t]
  \centering
  \begin{subfigure}[b]{0.49\textwidth}
         \centering
         \includegraphics[width=\textwidth]{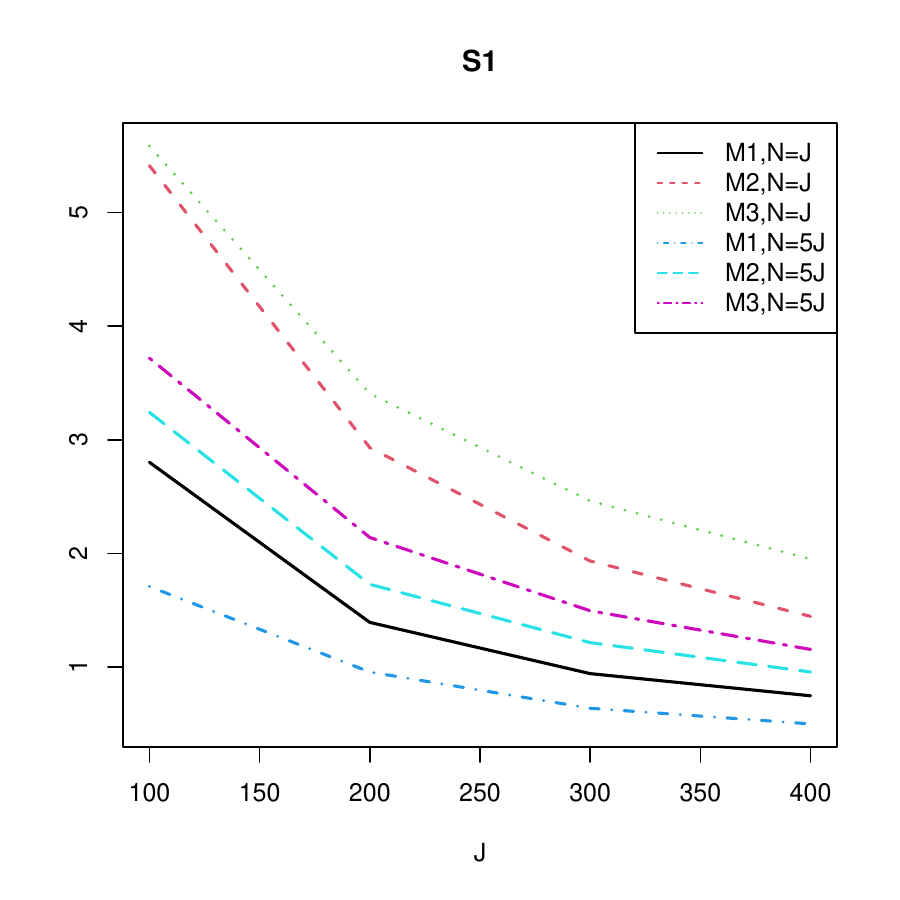}
         \label{fig:y equals x}
     \end{subfigure}
     \begin{subfigure}[b]{0.49\textwidth}
         \centering
         \includegraphics[width=\textwidth]{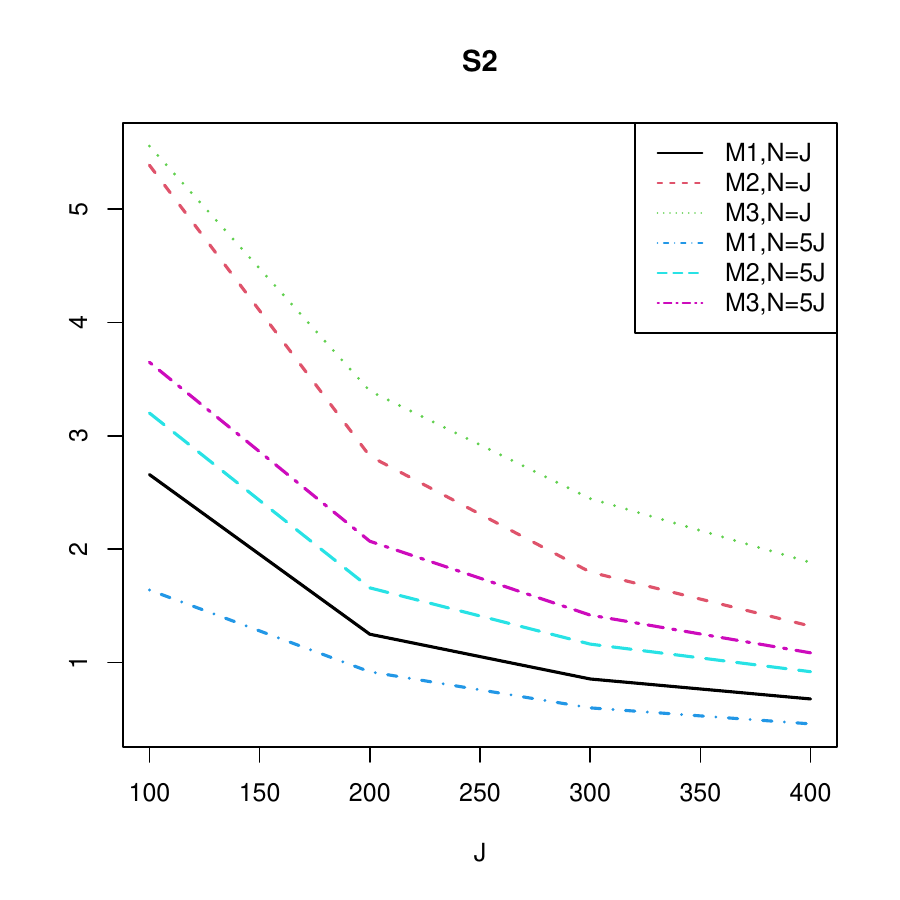}
         \label{fig:three sin x}
     \end{subfigure}
  \caption{The loss $\max_{3\leq K\leq 5} \{(NJ)^{-1/2} \|\hat{M}^{(K)}-M^* \|_F \}$ for the recovery of the low-rank matrix $M^*$, where each point is the mean loss calculated by averaging over 100 independent replications. Panels (a) and (b) show the results under the two different factor strength settings, S1 and S2, respectively.}\label{fig:1}
\end{figure}


\begin{table}
\caption{The number of times that the true number of factors is under- or over-selected selected among 100 independent replications under each of the 48 simulation settings.}
\centering
\begin{tabular}{r|ccc|ccc|ccc|cccccccccccc}
  \hline
  & \multicolumn{6}{c|}{$N = J$} & \multicolumn{6}{c}{$N = 5J$} \\
  \hline
 & \multicolumn{3}{c|}{S1} & \multicolumn{3}{c|}{S2} & \multicolumn{3}{c|}{S1} & \multicolumn{3}{c}{S2}\\
  \hline
 Under-selection &M1 & M2 & M3 &M1 & M2 & M3 &M1 & M2 & M3 &M1 & M2 & M3 \\
  \hline
  $J = 100$&0&0&0 &0&0&0& 0&0&0& 10& 98& 97 \\
  $J = 200$&0&0&0 &0 & 0&  0 & 0&0&0&0 & 3&  4 \\
  $J = 300$&0&0&0 &0&0&0& 0&0&0&0&0&0 \\
  $J = 400$&0&0&0 &0&0&0& 0&0&0&0&0&0 \\
  \hline
   Over-selection&  \\
   \hline
  $J = 100$&47 &100& 100 & 53& 100& 100&0  & 0&   0 &0&   0&   0 \\
  $J = 200$&0&94&90      &0 &100&  95&0  & 0&   0 &0 &  0& 0\\
  $J = 300$&0& 0&0      &0&0&0&0  & 0&   0 &0&0&0 \\
  $J = 400$&0& 0&0      &0&0&0&0  & 0&   0 &0&0&0 \\
  \hline
\end{tabular}
\label{tab:tab1}
\end{table}

\begin{figure}
  \centering
  \includegraphics[width=\textwidth]{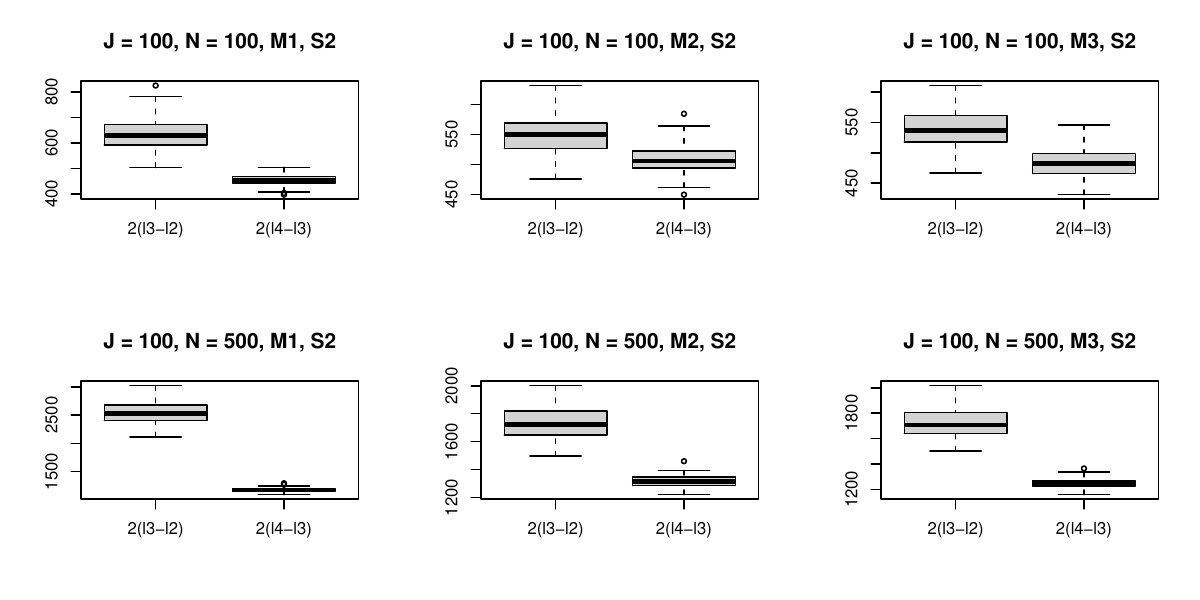}
  \caption{The box plots of $2(\hat l_3 - \hat l_2)$ and $2(\hat l_4 - \hat l_3)$ when $J = 100$ and the factor strength setting is S2.}\label{fig:signal}
\end{figure}

\subsection{Application to Eysenck's Personality Questionnaire}
We apply the proposed JIC to a dataset based on the revised Eysenck's personality questionnaire  \citep{eysenck1985revised}, a personality inventory that has been widely used in clinics and research. This questionnaire is designed to measure three personality traits, extraversion, neuroticism, and psychoticism.
We refer the readers to \cite{eysenck1985revised} for the  characteristics of these personality traits.
{The factor structure of this personality inventory remains of interest in psychology, due to its importance in the literature of human personality and wide use in several studies worldwide \citep{barrett1998eysenck,chapman2013hierarchical,heym2013p}. In particular, it has been found that the dependence between items measuring the psychoticism trait tends to be lower than that between items measuring the other two traits. Based on this observation, some researchers suggested that  psychoticism may consist of multiple dimensions \citep{caruso2001reliability}. We use the proposed JIC to investigate the factor structure of the inventory. }

Specifically, we analyze all the items from the questionnaire, except for the lie scale items that are used to
to guard against various concerns about response style. There are 79 items in total, each with ``Yes" and ``No" response options. An example item is ``Do you often need understanding friends to cheer you up?".
Among the 79 items, 32, 23, and 24 items are designed to measure psychoticism, extraversion, and neuroticism, respectively. {For each participant, a total score can be computed based on each of the three item sets. This total score is often used to measure the corresponding personality trait.}
Here, we analyze a female UK normative sample dataset \citep{eysenck1985revised}, for which the sample size is 824 and there are no missing values. The dataset has been analyzed in \cite{chen2019joint} using the same  model given in Example 1 above. Using a cross-validation approach, \cite{chen2019joint} find three factors. We now explore the dimensionality of the data using the proposed JIC. Specifically, we consider possible choices of $K = 1, 2, 3, 4, $ and 5.  Following the previous discussion, the penalty term in the JIC is set to ${K}({N \vee J})\log\{n /(N \vee J)\}$, where $n = NJ$, $N = 824$, and $J = 79$.

{The results are given in Tables~\ref{tab:JICepq} and \ref{tab:epqCOR}. Specifically, the three-factor model achieves the minimum JIC value among the five candidate choices of $K$, suggesting a three-factor structure for the inventory.
We investigate the three-factor model using the oblimin method, one of the most popular oblique rotation methods \citep{browne2001overview}, to obtain an relatively simple loading structure. Table~\ref{tab:epqCOR} shows Kendall's tau rank correlation between participants' estimated factor scores under the oblimin rotation and the total scores for the three personality traits given by the design. According to the correlations, the extracted factors tend to correspond to the extraversion, psychoticism,  and neuroticism traits, respectively.
Additional results can be found in Appendix~\ref{app:real} of the online supplement, including the estimated parameters for the fitted models and a comparison with marginal-likelihood-based inference.
}

\begin{table}
  \caption{The rows named ``deviance", ``penalty", and ``JIC" show the values of $-2 \hat l_K$, $v(n,N,J,K)$, and JIC, respectively, for models with different values of $K$.}
  \centering
  \begin{tabular}{lccccc}
    \hline
    $K$ & 1&2&3&4&5 \\
    \hline
    Deviance    & 63263 & 57683 &53883 & 51225 & 48812 \\
    Penalty &        3600  &  7201  &10801 &14402  &18002            \\
    JIC &   66864 &64884     & \textbf{64684} &65627  & 66814        \\
    \hline
  \end{tabular}
\label{tab:JICepq}
\end{table}


\begin{table}
  \caption{Kendall's tau rank correlation between participants' estimated factor scores under the oblimin rotation and the total scores for the three personality traits.}
  \centering
  \begin{tabular}{lrrrr}
    \hline
     & F1&F2&F3  \\
    \hline
P score&  0.08& \textbf{0.78}& -0.05\\
E score&  \textbf{0.86}& 0.00& -0.12\\
N score& -0.08& 0.08&  \textbf{0.88}\\
    \hline
  \end{tabular}
\label{tab:epqCOR}
\end{table}


\section{Further Discussion}
As shown in Section~\ref{section:theory}, there is a wide range of penalties for guaranteeing the selection consistency of JIC. Among these choices,
$v(n,N,J,K) = {K}({N \vee J})\log\{n/(N\vee J)\}$ is close to the lower bound.
This penalty is suggested when the signal strength of factors is unknown, to detect factors of a wide range of strengths. The performance and applicability of this information criterion are demonstrated by simulation studies and real data analysis.
If one is only interested in detecting strong factors, then a larger penalty may be chosen based on prior information about the signal strength of the factors.

When our model  \eqref{eq:models} takes the form of a Gaussian density and there is no missing data, then the proposed JIC and its theory
are consistent with the results of \cite{bai2002determining} for high-dimensional linear factor models. In this sense, the current work substantially extends the work of
\cite{bai2002determining} by considering non-linear  factor models and allowing a general setting for missing values.
Although we focus on generalized latent factor models with an exponential-family link function, the proposed JIC is applicable to other models, for example,
a probit factor model for binary data that replaces
the logistic link by a probit link in Example 1. The consistency results are likely to hold under similar conditions, for a wider range of models.
This extension is left for future investigation.

\section*{Acknowledgement}

We are grateful to the editor, the associate editor and three anonymous referees for their careful review and valuable comments. Xiaoou Li's research is partially supported by the NSF grant DMS-1712657.

%
%
%
%



\clearpage
\appendix

\noindent
{\Large Appendix}

\section{Proof of Theoretical Results}
\subsection{Proof of Theorems~\ref{thm:error-bound-simple} and \ref{thm:finite-bound}}
We will  present the proof of Theorem~\ref{thm:finite-bound} first and then that of Theorem~\ref{thm:error-bound-simple}, because the former is more general than the latter.
The proof of Theorem~\ref{thm:finite-bound} is based on the following two lemmas, whose proof will be provided later in the supplementary material.


Let $G_i=(1,F_i^T)^T$ and  $B_j= (d_j,A_j^T)^T$, then $m_{ij}=G_i^TB_j$. Define $\mM_r=\{\MM=(m_{ij})_{1\leq i\leq N,1\leq j\leq J}: m_{ij} =  G_i^TB_j : G_i,B_j\in  \mathbb{R}^r, \|G_i\|\leq C, \|B_j\|\leq C \text{ for all }1\leq i\leq N,1\leq j\leq N\}$, then  $M^*\in \mM_{K^*+1}$. Let $r^*=K^*+1$ under Assumption~\ref{assump:true}. Also, let $l(M,Y,\Omega)$ denote the log-likelihood function where $\Omega=(\omega_{ij})_{1\leq i\leq N,1\leq j\leq J}$.
\begin{lemma}\label{lemma:correct-specify-upper}
For all $M\in\mM_r$,
 \begin{equation}
 \begin{split}
     &\phi\{l(M,Y,\Omega)-l(M^*,Y,\Omega)\}\\\leq &
(r+r^*)^{1/2}\big\{\|\ZZ\circ\Omega\|_2+ 2\delta_{C^2}C^2\|Q\|_2\big\}\|\MM-\MM^*\|_F- \delta_{C^2}p_{\min}\|\MM-\MM^*\|_F^2
 \end{split}
 \end{equation}
 where $Z=(z_{ij})_{1\leq i\leq N,1\leq j\leq J}$, $z_{ij}=y_{ij}-b'(m^*_{ij})$, and
 `$\circ$' denotes the matrix Hadamard product.
\end{lemma}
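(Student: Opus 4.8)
The plan is to start from the explicit form of the log-likelihood difference and reduce everything to two inner products against the low-rank matrix $\Delta:=M-M^*$, each controlled by spectral--nuclear norm duality. Writing the observed log-likelihood as $\phi\, l(M,Y,\Omega)=\sum_{i,j}\omega_{ij}\{y_{ij}m_{ij}-b(m_{ij})\}+\phi\sum_{i,j}\omega_{ij}c(y_{ij},\phi)$, the $c$-terms do not depend on $M$ and cancel in the difference. A second-order Taylor expansion of $b$ about each $m^*_{ij}$ then gives
\[
\phi\{l(M,Y,\Omega)-l(M^*,Y,\Omega)\}=\sum_{i,j}\omega_{ij}z_{ij}\Delta_{ij}-\tfrac12\sum_{i,j}\omega_{ij}b''(\xi_{ij})\Delta_{ij}^2,
\]
where $\Delta_{ij}=m_{ij}-m^*_{ij}$, $z_{ij}=y_{ij}-b'(m^*_{ij})$, and $\xi_{ij}$ lies between $m_{ij}$ and $m^*_{ij}$. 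Since $M\in\mM_r$ and $M^*\in\mM_{r^*}$ force $|m_{ij}|,|m^*_{ij}|\le C^2$, we have $|\xi_{ij}|\le C^2$ and hence $b''(\xi_{ij})\ge 2\delta_{C^2}$; as $\omega_{ij}\ge 0$, the quadratic term is at most $-\delta_{C^2}\sum_{i,j}\omega_{ij}\Delta_{ij}^2$.

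Next I would introduce the centred sampling matrix $Q=(\omega_{ij}-p_{ij})$ and split $\omega_{ij}=p_{ij}+q_{ij}$ in this quadratic term. The deterministic part yields $-\delta_{C^2}\sum_{i,j}p_{ij}\Delta_{ij}^2\le-\delta_{C^2}p_{\min}\|\Delta\|_F^2$, which is exactly the negative-definite term in the statement. What remains are the linear stochastic term $\langle Z\circ\Omega,\Delta\rangle$ and the quadratic stochastic term $-\delta_{C^2}\langle Q,\Delta\circ\Delta\rangle$, so the whole lemma reduces to bounding these two inner products by $(r+r^*)^{1/2}\|Z\circ\Omega\|_2\|\Delta\|_F$ and $2\delta_{C^2}C^2(r+r^*)^{1/2}\|Q\|_2\|\Delta\|_F$ respectively. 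For the linear term this is routine: spectral--nuclear duality gives $\langle Z\circ\Omega,\Delta\rangle\le\|Z\circ\Omega\|_2\|\Delta\|_*$, and since $\operatorname{rank}(M)\le r$ and $\operatorname{rank}(M^*)\le r^*$, the difference satisfies $\operatorname{rank}(\Delta)\le r+r^*$, whence $\|\Delta\|_*\le (r+r^*)^{1/2}\|\Delta\|_F$.

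The quadratic stochastic term is the crux. Using the same duality, $-\delta_{C^2}\langle Q,\Delta\circ\Delta\rangle\le\delta_{C^2}\|Q\|_2\|\Delta\circ\Delta\|_*$, so it suffices to establish the sharp estimate $\|\Delta\circ\Delta\|_*\le (r+r^*)^{1/2}\,\|\Delta\|_{\max}\,\|\Delta\|_F$ combined with $\|\Delta\|_{\max}\le 2C^2$ (which follows again from $|m_{ij}|,|m^*_{ij}|\le C^2$). I expect this Hadamard-square bound to be the main obstacle: the naive route only controls $\operatorname{rank}(\Delta\circ\Delta)$ by $(r+r^*)^2$ and hence loses an extra factor $(r+r^*)^{1/2}$, so a genuinely sharper singular-value estimate for Hadamard products of low-rank matrices is needed. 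This is precisely where the result of \cite{horn1995norm} enters, and I would isolate it as a separate lemma proved in the supplement rather than grind it out inline. Once that estimate is available, collecting the deterministic $-\delta_{C^2}p_{\min}\|\Delta\|_F^2$ together with the two stochastic bounds reproduces the stated inequality verbatim.
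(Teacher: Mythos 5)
Your decomposition and the way you control each piece --- the second-order Taylor expansion, spectral--nuclear duality with $\mathrm{rank}(\Delta)\le r+r^*$ for the linear term, the split $\omega_{ij}=p_{ij}+q_{ij}$ giving the $-\delta_{C^2}p_{\min}\|\Delta\|_F^2$ term plus a $\langle Q,\Delta\circ\Delta\rangle$ remainder, and an appeal to \cite{horn1995norm} for the Hadamard square --- is exactly the structure of the paper's proof, which likewise isolates the Hadamard estimate as a separate lemma (Lemma~\ref{lemma:Hadamard}).

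However, the step you yourself identify as the crux is the one place where your formulation has a real gap: you propose to establish $\|\Delta\circ\Delta\|_*\le (r+r^*)^{1/2}\|\Delta\|_{\max}\|\Delta\|_F$ with $\|\Delta\|_{\max}$ the \emph{entrywise} sup norm (your parenthetical derives $\|\Delta\|_{\max}\le 2C^2$ from $|m_{ij}|,|m^*_{ij}|\le C^2$). That is not what the Horn--Mathias inequality delivers. Their Theorem 2 bounds $\sum_{i\le k}\sigma_i(A\circ B)$ by $\sum_{i\le k}\|f_{[i]}\|\|g_{[i]}\|\sigma_i(B)$ for an explicit factorization $a_{ij}=g_j^Tf_i$, so the relevant constant is a product of \emph{factor row norms} (a $\gamma_2$-/max-norm-type functional), which always dominates the entrywise sup norm and can strictly exceed it; hence your stated inequality does not follow from \cite{horn1995norm}, and I would not take its truth for granted. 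The repair is immediate and is what the paper does: since $M\in\mM_r$ and $M^*\in\mM_{r^*}$, stacking the factors of $M$ and $-M^*$ gives $\Delta_{ij}=\tilde{G}_i^T\tilde{B}_j$ with $\|\tilde{G}_i\|,\|\tilde{B}_j\|\le\sqrt{2}\,C$, and Horn--Mathias applied with $A=B=\Delta$ then yields $\|\Delta\circ\Delta\|_*\le 2C^2\|\Delta\|_*\le 2C^2(r+r^*)^{1/2}\|\Delta\|_F$. In other words, the constant $2C^2$ must come from the row-norm bounds on the factors supplied by the constraint set in \eqref{eq:jml}, not from the entrywise bound $|\Delta_{ij}|\le 2C^2$. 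With the Hadamard lemma restated in terms of the factorization, the rest of your argument reproduces the paper's proof verbatim.
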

\begin{lemma}\label{lemma:spectral-pattern}
  There is a universal constant $c>0$ such that
  \begin{equation}
    \Pr\Big(\|\Omega-P\|_2 \geq 4 (\max_{i} n^*_{i\cdot})^{1/2} \vee (\max_{j} n^*_{\cdot j})^{1/2}+c \log^{1/2}(N+J) \Big)\leq (N+J)^{-1}.
  \end{equation}
\end{lemma}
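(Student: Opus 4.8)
The plan is to control the operator norm of the mean-zero random matrix $\Omega - P$, where $P=(p_{ij})=\mathbb{E}\,\Omega$, by combining a \emph{sharp} bound on the expected spectral norm with a dimension-free concentration inequality. Since $\Omega-P$ is rectangular, I would first pass to its Hermitian dilation
\[
\tilde{X} = \begin{pmatrix} 0 & \Omega - P \\ (\Omega-P)^\top & 0 \end{pmatrix},
\]
a symmetric $(N+J)\times(N+J)$ matrix with independent (up to the symmetry constraint) mean-zero entries and $\|\tilde{X}\|_2 = \|\Omega-P\|_2$. The two parameters governing the sharp bound are the maximal row-wise variance and the uniform entrywise bound. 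Because $\mathrm{Var}(\omega_{ij})=p_{ij}(1-p_{ij})\le p_{ij}$, the $i$th block-row of $\tilde X$ has variance sum $\sum_j p_{ij}(1-p_{ij})\le n^*_{i\cdot}$ and the $j$th block-row has variance sum $\sum_i p_{ij}(1-p_{ij})\le n^*_{\cdot j}$, so
\[
\sigma^2 \;=\; \max\Big\{\max_i \textstyle\sum_j \mathrm{Var}(\omega_{ij}),\ \max_j \textstyle\sum_i \mathrm{Var}(\omega_{ij})\Big\}\;\le\;(\max_i n^*_{i\cdot})\vee(\max_j n^*_{\cdot j}),
\]
while $\sigma_* = \max_{ij}\|\omega_{ij}-p_{ij}\|_\infty \le 1$.

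The next step is to apply the sharp nonasymptotic spectral-norm bound for random matrices with independent entries (the result of \cite{bandeira2016sharp}, in the extended form stated in the paper). For any fixed $\epsilon\in(0,1]$ this yields an expectation bound of the shape $\mathbb{E}\|\tilde X\|_2 \le 2(1+\epsilon)\sigma + c_\epsilon\,\sigma_*\sqrt{\log(N+J)}$. Choosing $\epsilon$ small enough that $2(1+\epsilon)\le 4$ and inserting the parameter estimates above gives
\[
\mathbb{E}\,\|\Omega-P\|_2 \;\le\; 4\big\{(\max_i n^*_{i\cdot})^{1/2}\vee(\max_j n^*_{\cdot j})^{1/2}\big\} + c_1\,\log^{1/2}(N+J).
\]
The point of using this sharp bound, rather than a matrix Bernstein inequality, is that it keeps the leading term \emph{linear} in $\sigma$ with no spurious $\sqrt{\log(N+J)}$ multiplying it; this is what allows the factor $4$ in front of the variance term.

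It remains to upgrade this bound in expectation to a high-probability statement. Here I would use the fact that $A\mapsto\|A\|_2$ is a convex function that is $1$-Lipschitz with respect to the Frobenius norm, together with the boundedness of the entries $\omega_{ij}-p_{ij}\in[-1,1]$. Talagrand's concentration inequality for convex Lipschitz functions of independent bounded variables then gives $\Pr\big(\|\Omega-P\|_2 \ge \mathbb{E}\|\Omega-P\|_2 + t\big)\le 2\exp(-t^2/c_2)$ for a universal $c_2$. Taking $t = c_3\log^{1/2}(N+J)$ with $c_3$ large enough makes the right-hand side at most $(N+J)^{-1}$, and combining with the expectation bound and setting $c=c_1+c_3$ yields exactly the claimed inequality. (Alternatively, one may invoke the tail form of the bound in \cite{bandeira2016sharp} directly, bypassing the separate concentration step.)

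The main obstacle is the second step: obtaining the expected-norm bound that is linear in $\sigma=\big((\max_i n^*_{i\cdot})^{1/2}\vee(\max_j n^*_{\cdot j})^{1/2}\big)$ with an additive, rather than multiplicative, logarithmic term. A direct application of the noncommutative Bernstein inequality loses an extra $\sqrt{\log(N+J)}$ factor on the dominant term, which would degrade the downstream error bounds in Theorems~\ref{thm:error-bound-simple} and \ref{thm:finite-bound}; avoiding this requires the refined combinatorial moment computation underlying \cite{bandeira2016sharp} and its extension to the present (non-identically-distributed, non-uniform-$p_{ij}$) setting, along with tracking explicit universal constants. The dilation and concentration steps are, by comparison, routine.
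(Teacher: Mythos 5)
Your proof is correct, and its core coincides with the paper's: both arguments reduce to the Hermitian dilation of $\Omega-P$, compute the same variance proxies ($\sigma_1\vee\sigma_2\leq\{(\max_i n^*_{i\cdot})\vee(\max_j n^*_{\cdot j})\}^{1/2}$, $\sigma_*\leq 1$), and lean on the sharp bound of \cite{bandeira2016sharp} to keep the logarithm additive rather than multiplying the leading term. Where you diverge is the mechanism for the tail. The paper first proves a general-purpose tail lemma (Lemma~\ref{lemma:general-spectral-bound}): it symmetrizes with an independent copy $\Omega'$ (so that the entries become symmetrically distributed), truncates at level $\lambda$, applies the \emph{tail form} of Bandeira--van Handel (their Corollary 3.12, which requires symmetric entries and bounded entries) to the truncated dilation, and then de-symmetrizes via Jensen's inequality; for the present lemma the truncation is vacuous since $|\omega_{ij}-\omega'_{ij}|\leq 1$, and the constant $4=2\sqrt{2}\cdot\sqrt{2}$ absorbs the variance doubling from symmetrization. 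You instead take the \emph{expectation form} of the Bandeira--van Handel bound and upgrade it to a tail bound via Talagrand's concentration inequality for convex $1$-Lipschitz functions of bounded independent variables. Both routes are sound here, and yours is arguably more self-contained for this particular lemma; note, though, that even if the expectation bound you invoke requires a symmetrization step, the leading constant only degrades from $2(1+\epsilon)$ to $2\sqrt{2}(1+\epsilon)$, which still sits below $4$. Two caveats: (i) your parenthetical suggestion to ``invoke the tail form directly'' does not quite work, precisely because $\omega_{ij}-p_{ij}$ is not symmetrically distributed --- this is exactly the gap the paper's independent-copy symmetrization is designed to close; (ii) your Talagrand step relies essentially on boundedness of the entries, so it cannot be recycled for the companion result Lemma~\ref{lemma:spectral-subexpnential-with-missing}, where entries are only sub-exponential --- this is why the paper routes both lemmas through the single truncation-based Lemma~\ref{lemma:general-spectral-bound} rather than using your (locally cleaner) argument.
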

\begin{lemma}\label{lemma:spectral-subexpnential-with-missing}
    Let $V=(v_{ij})_{1\leq i\leq N, 1\leq j\leq J}$ be a random matrix with independent and centered entries. In addition, assume $v_{ij}$s are sub-exponential random variables with parameters $\nu,\alpha>0$. That is, $E(e^{\lambda v_{ij}})\leq e^{\lambda^2\nu^2/2}$ for all $|\lambda|<1/\alpha$.
  Then, there exists a universal constant $c>0$ such that with probability at least $1-(N+J)^{-1}-(n^*)^{-1}$,
  \begin{equation}\label{eq:spectral-subexp}
   \|V\circ \Omega\|_2\leq 4\max_{ij}\{E(z_{ij}^2)\}^{1/2}(\max_{i} n_{i\cdot}^*)^{1/2}\vee (\max_{j} n_{\cdot j}^*)^{1/2}  + c (\alpha\vee \nu)\log n^* \log^{1/2}(N+J)
   \end{equation}
   for all $N\geq 1, J\geq 1$, and $n^*\geq 6$.
   In particular, under Assumptions 1 and 2, $z_{ij}=y_{ij}-b'(m_{ij}^*)$ is sub-exponential with parameters $\nu^2=\phi\kappa_{2C^2}=\phi\sup_{|x|\leq 2C^2}b''(x)$  and $\alpha=\phi/C^2$, and there is a universal constant $c>0$ such that with probability at least $1-(N+J)^{-1}-(n^*)^{-1}$,
   \begin{equation}
     \|Z\circ \Omega\|_2\leq  4(\phi\kappa_{2C^2})^{1/2}(\max_{i} n_{i\cdot}^*)^{1/2}\vee (\max_{j} n_{\cdot j}^*)^{1/2}  + c \{(\phi/C^2)\vee (\phi\kappa_{2C^2})^{1/2}\}\log n^* \log^{1/2}(N+J)
   \end{equation}
      for all $N\geq 1, J\geq 1$, and $n^*\geq 6$.
\end{lemma}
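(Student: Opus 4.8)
The plan is to prove the general sub-exponential bound \eqref{eq:spectral-subexp} and then specialise to $V=Z$. Write $W=V\circ\Omega$ with entries $w_{ij}=v_{ij}\omega_{ij}$. Since $V$ and $\Omega$ are independent with independent entries and each $v_{ij}$ is centered, the $w_{ij}$ are independent and centered, with $E(w_{ij}^2)=p_{ij}E(v_{ij}^2)$ and sub-exponential parameters $(\nu,\alpha)$ inherited from $v_{ij}$ (multiplying by $\omega_{ij}\in\{0,1\}$ does not enlarge the moment generating function). The core device is an extension of the sharp spectral-norm bound of \cite{bandeira2016sharp} from bounded to sub-exponential entries. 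I would symmetrise by passing to the $(N+J)\times(N+J)$ dilation $\tilde W=\bigl(\begin{smallmatrix}0 & W\\ W^\top & 0\end{smallmatrix}\bigr)$, whose spectral norm equals $\|W\|_2$ and whose entries are again centered and independent up to the forced symmetry.

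The heavy tails are handled by truncation at a threshold $\tau$ of order $(\alpha\vee\nu)\log n^*$. Split $w_{ij}=w_{ij}^{(1)}+w_{ij}^{(2)}$ with $w_{ij}^{(1)}=w_{ij}\mathbf 1(|w_{ij}|\le\tau)$. The key observation, and the source of the $\log n^*$ factor rather than $\log(NJ)$, is that the union bound over the truncation event is weighted by the sampling probabilities:
$$\sum_{i,j}\Pr(|w_{ij}|>\tau)=\sum_{i,j}p_{ij}\Pr(|v_{ij}|>\tau)\le n^*\max_{i,j}\Pr(|v_{ij}|>\tau)\le 2n^*\exp\{-\tau/(2(\alpha\vee\nu))\},$$
because $\sum_{i,j}p_{ij}=n^*$. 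Choosing the constant in $\tau$ large enough makes this at most $(n^*)^{-1}$, so with probability $1-(n^*)^{-1}$ no entry is truncated and $W=W^{(1)}$. After recentring, set $\bar W=W^{(1)}-E(W^{(1)})$; the correction $E(W^{(1)})$ has entries that are exponentially small in $\tau$ and, being weighted by $p_{ij}$, satisfies $\|E(W^{(1)})\|_2\le\|E(W^{(1)})\|_F$, which is of smaller order than the target bound for a suitable truncation constant and hence negligible.

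It then remains to bound $\|\bar W\|_2$, whose dilation has centered entries bounded by $2\tau$. Applying the bounded-entry bound of \cite{bandeira2016sharp} to the dilation gives, for fixed $\epsilon\in(0,1)$,
$$E\|\bar W\|_2\le (1+\epsilon)\,2\sigma+c(\epsilon)\,\tau\,\log^{1/2}(N+J),\qquad \sigma^2=\max_{i,j}E(v_{ij}^2)\cdot\big\{(\max_i n^*_{i\cdot})\vee(\max_j n^*_{\cdot j})\big\},$$
where $\sigma$ is the maximal row/column standard deviation of the dilation. This reproduces the first term of \eqref{eq:spectral-subexp} (the factor $4$ absorbing $(1+\epsilon)2$), while $\tau\sim(\alpha\vee\nu)\log n^*$ against $\log^{1/2}(N+J)$ reproduces the second. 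To pass from expectation to high probability I would apply Talagrand's concentration inequality for the convex, $1$-Lipschitz map $(\text{entries})\mapsto\|\cdot\|_2$ of the independent, $\tau$-bounded variables; this yields dimension-free fluctuations of order $\tau\log^{1/2}(N+J)$ at failure probability $(N+J)^{-1}$, of the same order as the mean term. Combining the truncation event with the concentration event gives the stated probability $1-(N+J)^{-1}-(n^*)^{-1}$. The hardest point is precisely this sub-exponential extension: keeping the second-order term sharp forces the truncation scale to be exactly $(\alpha\vee\nu)\log n^*$, and the crucial gain of $\log n^*$ over $\log(NJ)$ rests on exploiting $\sum_{i,j}p_{ij}=n^*$ rather than treating all $NJ$ entries uniformly.

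For the specialisation to $V=Z$ with $z_{ij}=y_{ij}-b'(m^*_{ij})$, I would verify the exponential-family moment bounds directly. Using $E(e^{t z_{ij}})=\exp\{\phi^{-1}(b(m^*_{ij}+t\phi)-b(m^*_{ij}))-t\,b'(m^*_{ij})\}$ and a second-order Taylor expansion of $b$, the exponent equals $\tfrac12 t^2\phi\,b''(\xi)$ for some $\xi$ between $m^*_{ij}$ and $m^*_{ij}+t\phi$. Since $|m^*_{ij}|\le C^2$ under Assumption~\ref{assump:true}, for $|t|\le C^2/\phi$ we have $|\xi|\le 2C^2$ and $b''(\xi)\le\kappa_{2C^2}$, so $z_{ij}$ is sub-exponential with $\nu^2=\phi\kappa_{2C^2}$ and $\alpha=\phi/C^2$; likewise $E(z_{ij}^2)=\phi\,b''(m^*_{ij})\le\phi\kappa_{2C^2}$. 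Substituting $\max_{i,j}\{E(z_{ij}^2)\}^{1/2}\le(\phi\kappa_{2C^2})^{1/2}$ and $\alpha\vee\nu=(\phi/C^2)\vee(\phi\kappa_{2C^2})^{1/2}$ into the general bound yields the claimed inequality with explicit constants.
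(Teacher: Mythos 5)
Your proposal is sound and reaches the stated bound, but it routes one key step differently from the paper, so a comparison is worthwhile. Both arguments share the same skeleton: pass to the $(N+J)\times(N+J)$ dilation, truncate at scale $\tau\asymp(\alpha\vee\nu)\log n^*$, invoke the sharp bounded-entry spectral bound of \cite{bandeira2016sharp}, and---the decisive point, which you identify correctly---control the truncation event by the $p_{ij}$-weighted union bound $\sum_{i,j}\Pr(|v_{ij}\omega_{ij}|>\tau)=\sum_{i,j}p_{ij}\Pr(|v_{ij}|>\tau)\le n^*\max_{i,j}\Pr(|v_{ij}|>\tau)$, which is what yields $\log n^*$ in place of $\log (NJ)$. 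The difference lies in how the bounded-entry result is converted into a high-probability statement for unbounded, non-symmetric entries: the paper (its Lemma~\ref{lemma:general-spectral-bound}) symmetrizes first by subtracting an independent copy, truncates the symmetrized matrix, applies the tail-bound form of Corollary 3.12 of \cite{bandeira2016sharp}, and then de-symmetrizes at the probability level; you instead truncate and recentre directly, bound the expectation $E\|\bar W\|_2$, and upgrade to high probability via Talagrand's concentration inequality for convex $1$-Lipschitz functions of bounded independent variables. Your route buys a cleaner de-symmetrization, since Jensen's inequality is used only at the level of expectations, $E\|\bar W\|_2\le E\|\bar W-\bar W'\|_2$ with $\bar W'$ an independent copy, where it is unimpeachable. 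One caveat does need fixing: Corollary 3.12 of \cite{bandeira2016sharp}, as used here, requires the entries to be symmetrically \emph{distributed}, not merely centered and bounded, and your recentred truncated matrix $\bar W$ is not symmetric in distribution; so before invoking it you must insert exactly the expectation-level symmetrization just mentioned. This doubles the entry variances and turns your claimed leading constant $(1+\epsilon)2$ into $2\sqrt{2}\cdot\sqrt{2}=4$, which is precisely the constant in \eqref{eq:spectral-subexp}, so the statement survives intact; the Talagrand fluctuations and the mean-shift term $\|E(W^{(1)})\|_2\le\|E(W^{(1)})\|_F$ (polynomially small in $n^*$ once the truncation constant is chosen large enough) are absorbed into the $c(\alpha\vee\nu)\log n^*\log^{1/2}(N+J)$ term. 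The probability accounting $1-(n^*)^{-1}-(N+J)^{-1}$ and your specialisation to $Z$ (the exponential-family moment generating function expansion giving $\nu^2=\phi\kappa_{2C^2}$ and $\alpha=\phi/C^2$) coincide with the paper's.
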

\begin{remark}
The constant $4$ in the first term of the right-hand side of \eqref{eq:spectral-subexp} can be improved to $2\sqrt{2}+\epsilon$ for any $\epsilon>0$ with the constant $c$ replaced by an $\epsilon$-dependent constant $c_{\epsilon}$. The logarithm term  can be improved if $Z$ is further assumed sub-Gaussian or bounded.  We keep the current form which is sharp enough for   our   problem.
\end{remark}
\begin{proof}[of Theorem~\ref{thm:finite-bound}]
By the definition of $\hat{M}^{(K)}$ and $K\geq K^*$, we have $\phi\{l(\hat{M}^{(K)},Y,\Omega)-l(M^*,Y,\Omega)\}\geq 0$. Apply Lemma~\ref{lemma:correct-specify-upper} with $M=\hat{M}^{(K)}$, $r=K+1$ and combine it with $\phi\{l(\hat{M}^{(K)},Y,\Omega)-l(M^*,Y,\Omega)\}\geq 0$. We obtain that for every $K\geq K^*$,
\begin{equation}\label{eq:error-bound-mle0}
  \|\hat{M}^{(K)}-M^*\|_F\leq p_{\min}^{-1}
(K+K^*+2)^{1/2}\big\{\delta_{C^2}^{-1}\|\ZZ\circ\Omega\|_2+ 2C^2\|Q\|_2\big\}.
\end{equation}
Thus,
\begin{equation}\label{eq:error-bound-mle}
\begin{split}
    \max_{K^*\leq K\leq K_{\max}}\big(\|\hat{M}^{(K)}-M^*\|_F\big)\leq 2p_{\min}^{-1}
K_{\max}^{1/2}\big\{\delta_{C^2}^{-1}\|\ZZ\circ\Omega\|_2+ 2C^2\|Q\|_2\big\},
\end{split}
\end{equation}
where we used the fact that $K+K^*+2\leq 2(K_{\max}+1)\leq 4 K_{\max}$ for $K_{\max}\geq 1$.
Apply Lemma~\ref{lemma:spectral-pattern} and Lemma~\ref{lemma:spectral-subexpnential-with-missing}  to obtain an upper bound of the right-hand side of the above inequality and simplify it. We arrive at
\begin{equation}
\begin{split}
  & \max_{K^*\leq K\leq K_{\max}}\big(\|\hat{M}^{(K)}-M^*\|_F\big)\\
   \leq & 2p_{\min}^{-1}
(K_{\max})^{1/2}\big[\{4\delta_{C^2}^{-1}(\phi\kappa_{2C^2})^{1/2}+8C^2\}(\max_{i} n_{i\cdot}^*)^{1/2}\vee (\max_{j} n_{\cdot j}^*)^{1/2} \\
&~~~~~~~~~~~~~~~~~~~~~~~~~ + c \{(\phi/C^2)\vee (\phi\kappa_{2C^2})^{1/2}\log n^*+2C^2\} \log^{1/2}(N+J)\big]\\
=&   p_{\min}^{-1}
(K_{\max})^{1/2}\{ \kappa_{1,b,C,\phi}(\max_{i} n_{i\cdot}^*)^{1/2}\vee (\max_{j} n_{\cdot j}^*)^{1/2} + 2c(\kappa_{2,b,C,\phi}\log n^*+2C^2)\log^{1/2}(N+J) \}
\end{split}
\end{equation}
where we recall that $\kappa_{1,b,C,\phi}=8\delta_{C^2}^{-1}(\phi\kappa_{2C^2})^{1/2}+16C^2$ and $\kappa_{2,bC,\phi}=(\phi/C^2)\vee (\phi\kappa_{2C^2})^{1/2}$. This completes our proof.
\end{proof}

\begin{proof}[of Theorem~\ref{thm:error-bound-simple}]
  Note that $\max_{i}n_{i\cdot}^* \leq p_{\max} J$ and $\max_{j}n_{\cdot j}^* \leq p_{\max} N$. Thus, \eqref{eq:finite-bound-thm} is simplified to
  \begin{equation}
    \max_{K^*\leq K\leq K_{\max}}\big(\|\hat{M}^{(K)}-M^*\|_F\big)
    \leq \kappa (K_{\max})^{1/2}\{ p_{\min}^{-1/2}(N\vee J)^{1/2} + p_{\min}^{-1}\log n^*\log^{1/2}(N+J) \}
  \end{equation}
  for some $\kappa$ depending on $C,b,\phi$ and $p_{\max}/p_{\min}$. Because $p_{\min}= (p_{\min}/p_{\max}) p_{\max}\geq (p_{\min}/p_{\max})n^*/(NJ)$, the above inequality implies
  \begin{equation}
  \begin{split}
       & \max_{K^*\leq K\leq K_{\max}}\big(\|\hat{M}^{(K)}-M^*\|_F\big)\\
    \leq & \kappa (K_{\max})^{1/2}\{ (n^*/(NJ))^{-1/2}(N\vee J)^{1/2} + (n^*/(NJ))^{-1}\log(n^*)\log^{1/2}(N+J) \}
  \end{split}
  \end{equation}
  with a possibly different $\kappa$ that also depends on $C$, $b$, and $\phi$. Multiplying both sides by $(NJ)^{-1/2}$ and simplifying it, we arrive at
  \begin{equation}
  \begin{split}
      &\max_{K^*\leq K\leq K_{\max}}\big\{(NJ)^{-1/2}\|\hat{M}^{(K)}-M^*\|_F\big\}\\
      \leq & \kappa K^{1/2}_{\max}\Big[
          \big\{(N\vee J)/n^*\big\}^{1/2} + \{(NJ)^{1/2}\log^{1/2}(N+J) \} (n^*)^{-1}\log n^*
          \Big].
  \end{split}
  \end{equation}
  Note that for $n^*/(\log n^*)^2\geq (N\wedge J)\log(N+J)$,  $\big\{(N\vee J)/n^*\big\}^{1/2}\geq \{(NJ)^{1/2}\log^{1/2}(N+J) \} (n^*)^{-1}\log n^*$, and the above inequality is simplified as
\begin{equation}
  \begin{split}
      \max_{K^*\leq K\leq K_{\max}}\big\{(NJ)^{-1/2}\|\hat{M}^{(K)}-M^*\|_F\big\}
      \leq  2\kappa           \big\{K_{\max}
(N\vee J)/n^*\big\}^{1/2}.
  \end{split}
  \end{equation}
  This completes the proof.
\end{proof}
\subsection{Proof of Theorems~\ref{thm:IC-simplified} and \ref{thm:IC-general} and Corollary~\ref{coro:suggested}}
The proofs of Theorems~\ref{thm:IC-simplified} and  \ref{thm:IC-general} are based on the following three supporting lemmas, whose proofs are given in the supplementary material. We start by recalling $u(n,N,J,K)=v(n,N,J,K)-v(n,N,J,K-1)$ and defining $R=4(p_{\min}\delta_{C^2})^{-1}\big\{\|\ZZ\circ\Omega\|_2+ 2\delta_{C^2}C^2\|Q\|_2\big\}^2$.
\begin{lemma}\label{lemma:overselect}
 If $u(\cdot)$ satisfies
  \begin{equation}
   \lim_{N,J\to\infty} \Pr\Big(u(n,N,J,K^*+1)>2\phi^{-1}(K^*+1)R\Big) =1
  \end{equation}
  and
  \begin{equation}
    \lim_{N,J\to\infty} \Pr\Big(\inf_{K^*+2\leq K\leq K_{\max}}u(n,N,J,K)>2\phi^{-1}R\Big) = 1,
  \end{equation}
  then
  \begin{equation}
    \lim_{N,J\to\infty}\Pr(\hat{K}>K^*) =0,
  \end{equation}
  for $K_{\max}\geq K^*\geq 1$.
\end{lemma}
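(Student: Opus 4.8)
The plan is to rule out overselection by showing that, on an event whose probability tends to one, $\mathrm{JIC}(K) > \mathrm{JIC}(K^*)$ for every $K$ with $K^* < K \le K_{\max}$; since a strictly smaller value is then attained at $K^*$, no such $K$ can minimise $\mathrm{JIC}$, forcing $\hat{K}\le K^*$. Writing $v_K = v(n,N,J,K)$, the starting point is the decomposition
\[
\mathrm{JIC}(K)-\mathrm{JIC}(K^*) = -2\big\{l(\hat{M}^{(K)},Y,\Omega)-l(\hat{M}^{(K^*)},Y,\Omega)\big\} + (v_K - v_{K^*}),
\]
and the two hypotheses on $u$ are tailored exactly so that the penalty increment $v_K - v_{K^*}$ outweighs twice the gain in log-likelihood, uniformly over $K$.

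First I would upper bound the likelihood gain. Applying Lemma~\ref{lemma:correct-specify-upper} with $M = \hat{M}^{(K)}$ and $r = K+1$, so that $r+r^* = K+K^*+2$, gives, deterministically on every realisation of the data,
\[
\phi\big\{l(\hat{M}^{(K)},Y,\Omega)-l(M^*,Y,\Omega)\big\} \le (K+K^*+2)^{1/2} S\,x - \delta_{C^2}p_{\min}\,x^2,
\]
where $x = \|\hat{M}^{(K)}-M^*\|_F$ and $S = \|Z\circ\Omega\|_2 + 2\delta_{C^2}C^2\|Q\|_2$. Maximising the right-hand quadratic over $x\ge 0$ yields $l(\hat{M}^{(K)},Y,\Omega)-l(M^*,Y,\Omega) \le (K+K^*+2)S^2/(4\phi\,\delta_{C^2}p_{\min}) = (K+K^*+2)R/(16\phi)$, upon substituting $R = 4(p_{\min}\delta_{C^2})^{-1}S^2$. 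Because $M^*\in\mM_{K^*+1}$ obeys the constraints in \eqref{eq:jml} (Assumption~\ref{assump:true}), it is feasible for the $K^*$-factor problem, so $l(\hat{M}^{(K^*)},Y,\Omega)\ge l(M^*,Y,\Omega)$; subtracting this nonnegative slack lets me replace the reference point and conclude that, for every $K>K^*$,
\[
2\big\{l(\hat{M}^{(K)},Y,\Omega)-l(\hat{M}^{(K^*)},Y,\Omega)\big\} \le (K+K^*+2)R/(8\phi).
\]
This step is purely deterministic given the data — no probabilistic event is invoked, since $R$ is defined through the same realised quantities.

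Next I would lower bound the penalty increment. Telescoping $v_K - v_{K^*} = \sum_{j=K^*+1}^{K} u(n,N,J,j)$ and intersecting the two high-probability events in the statement, on which $u(n,N,J,K^*+1) > 2\phi^{-1}(K^*+1)R$ and $u(n,N,J,j) > 2\phi^{-1}R$ for all $K^*+2\le j\le K_{\max}$, I obtain simultaneously for every $K\in\{K^*+1,\dots,K_{\max}\}$
\[
v_K - v_{K^*} > 2\phi^{-1}(K^*+1)R + (K-K^*-1)\,2\phi^{-1}R = 2\phi^{-1}KR.
\]
Combining the two bounds gives $\mathrm{JIC}(K)-\mathrm{JIC}(K^*) > 2\phi^{-1}KR - (K+K^*+2)R/(8\phi) = R(15K-K^*-2)/(8\phi)$, which is strictly positive because $K\ge K^*+1$ forces $15K-K^*-2 \ge 14K^*+13 > 0$. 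Hence on the intersection of the two events, which has probability tending to one, $\hat{K}\le K^*$, and therefore $\Pr(\hat{K}>K^*)\to 0$.

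The main obstacle is the bookkeeping of constants in the second and third paragraphs: extracting a clean deterministic bound on the likelihood gain from the quadratic in Lemma~\ref{lemma:correct-specify-upper}, and verifying that, after the maximisation and the substitution of $R$, the prescribed increments dominate twice that gain \emph{uniformly} in $K$ (including the case $K_{\max}=\infty$, handled since each finite $K>K^*$ is excluded). The factor $(K^*+1)$ attached to the first increment is precisely what absorbs the jump from $K^*$ to $K^*+1$, while the constant increments $2\phi^{-1}R$ thereafter suffice because each extra factor raises the rank by only one; checking that $2\phi^{-1}KR$ beats $(K+K^*+2)R/(8\phi)$ for all $K>K^*$ is the crux of the argument.
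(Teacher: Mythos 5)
Your proof is correct and follows essentially the same route as the paper's: both apply Lemma~\ref{lemma:correct-specify-upper} to bound the likelihood gain of an overfitted model over $M^*$, use feasibility of $M^*$ to pass to $\hat{M}^{(K^*)}$, and then telescope the penalty increments under the two hypothesised events to force $\mathrm{JIC}(K)>\mathrm{JIC}(K^*)$ simultaneously for all $K>K^*$. The only difference is cosmetic: you maximise the quadratic in $\|\hat{M}^{(K)}-M^*\|_F$ directly, yielding the slightly sharper bound $(K+K^*+2)R/(16\phi)$, whereas the paper drops the quadratic term and substitutes its separately derived error bound \eqref{eq:error-bound-mle0}, yielding $\phi^{-1}KR$; both are dominated by the same telescoped penalty lower bound $2\phi^{-1}KR$.
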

\begin{lemma}\label{lemma:underselect}
 If
\begin{equation}
  \lim_{N,J\to\infty}\Pr\Big(4(\delta_{C^2}p_{\min})^{-1}K^*R \leq \sigma^2_{K^*+1}(M^*)\Big)=1,
\end{equation}
 and $u(\cdot)$ satisfies

\begin{equation}
 \lim_{N,J\to\infty}  \Pr\Big(
   u(n,N,J,K)< \phi^{-1}\delta_{C^2}p_{\min}\sigma^2_{K+1}(M^*) \text{ for all } 1\leq K\leq K^*
  \Big)=1
 \end{equation}
 then $\lim_{N,J\to\infty}\Pr(\hat{K}<K^*)=0$ for $K^*\geq 1$.
\end{lemma}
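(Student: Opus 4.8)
The plan is to show that, on a single event of probability tending to one, the inequality $\text{JIC}(K)>\text{JIC}(K^*)$ holds simultaneously for every $K$ with $1\le K<K^*$; this forces $\hat K\ge K^*$ on that event and hence $\Pr(\hat K<K^*)\to 0$. Recalling $v(n,N,J,K^*)-v(n,N,J,K)=\sum_{K'=K+1}^{K^*}u(n,N,J,K')$ and expanding the criterion, the target reduces to establishing
\[
2\big\{l(\hat M^{(K^*)},Y,\Omega)-l(\hat M^{(K)},Y,\Omega)\big\} > \sum_{K'=K+1}^{K^*}u(n,N,J,K').
\]
So I must produce a lower bound on the likelihood gain from allowing $K^*$ rather than $K$ factors that strictly dominates the accumulated penalty savings.

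For the likelihood side I would first use feasibility: since $M^*\in\mM_{K^*+1}$ is admissible for the $K^*$-factor problem, $l(\hat M^{(K^*)},Y,\Omega)\ge l(M^*,Y,\Omega)$, and it then suffices to upper bound $l(\hat M^{(K)},Y,\Omega)-l(M^*,Y,\Omega)$. This is exactly what Lemma~\ref{lemma:correct-specify-upper} delivers: writing $D_K:=\|\hat M^{(K)}-M^*\|_F$ and $S:=\|Z\circ\Omega\|_2+2\delta_{C^2}C^2\|Q\|_2$, and applying it with $M=\hat M^{(K)}\in\mM_{K+1}$ (so $r=K+1$, $r^*=K^*+1$),
\[
\phi\big\{l(\hat M^{(K)},Y,\Omega)-l(M^*,Y,\Omega)\big\}\le (r+r^*)^{1/2}S\,D_K-\delta_{C^2}p_{\min}D_K^2 .
\]
Because $\hat M^{(K)}$ has rank at most $K+1<K^*+1=\text{rank}(M^*)$, the Eckart--Young theorem forces $D_K^2\ge\sum_{l=K+2}^{K^*+1}\sigma_l^2(M^*)\ge\sigma_{K^*+1}^2(M^*)$. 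The crux is to show the linear (noise-driven) term $(r+r^*)^{1/2}S\,D_K$ is at most half the quadratic (signal-driven) term. Here the first hypothesis enters: on $\{4(\delta_{C^2}p_{\min})^{-1}K^*R\le\sigma_{K^*+1}^2(M^*)\}$, together with the identity $R=4(p_{\min}\delta_{C^2})^{-1}S^2$ and $D_K\ge\sigma_{K^*+1}(M^*)$, one obtains $\delta_{C^2}p_{\min}D_K\ge 4\sqrt{K^*}\,S\ge 2(r+r^*)^{1/2}S$ (using $r+r^*\le 3K^*$), so the linear term is absorbed and $l(M^*,Y,\Omega)-l(\hat M^{(K)},Y,\Omega)\ge\tfrac{1}{2}\phi^{-1}\delta_{C^2}p_{\min}D_K^2$.

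Combining these bounds yields $2\{l(\hat M^{(K^*)},Y,\Omega)-l(\hat M^{(K)},Y,\Omega)\}\ge\phi^{-1}\delta_{C^2}p_{\min}\sum_{l=K+2}^{K^*+1}\sigma_l^2(M^*)$. On the second hypothesis event each increment satisfies $u(n,N,J,K')<\phi^{-1}\delta_{C^2}p_{\min}\sigma_{K'+1}^2(M^*)$, so summing over $K'=K+1,\dots,K^*$ and reindexing $l=K'+1$ gives $\sum_{K'}u(n,N,J,K')<\phi^{-1}\delta_{C^2}p_{\min}\sum_{l=K+2}^{K^*+1}\sigma_l^2(M^*)$, which is matched term-by-term against the likelihood gain and is strictly smaller. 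Hence $\text{JIC}(K)>\text{JIC}(K^*)$ for every $K<K^*$ on the intersection of the two hypothesis events, and since each event has probability tending to one so does the intersection, giving $\Pr(\hat K<K^*)\to 0$. I expect the main obstacle to be the balancing step: controlling the noise-driven linear term $(r+r^*)^{1/2}S\,D_K$ against the signal-driven quadratic term, where the precise constant in the first hypothesis (the factor $4$ and the factor $K^*$) is exactly what is needed so that, after the factor-$\tfrac12$ split, the resulting signal bound lines up with the per-step penalty condition. A minor point to verify is that a single good event handles all $K<K^*$ at once, so no union bound over $K$ is required and the argument remains valid even when $K^*$ is allowed to grow.
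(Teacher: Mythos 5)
Your proposal is correct and takes essentially the same route as the paper's proof: the paper likewise applies Lemma~\ref{lemma:correct-specify-upper} with $M=\hat M^{(K)}$, uses the Eckart--Young fact $\inf_{\mathrm{rank}\,(M)\le K+1}\|M-M^*\|_F^2=\sum_{l=K+2}^{K^*+1}\sigma_l^2(M^*)\ge \sigma_{K^*+1}^2(M^*)$, exploits the identity $R=4(p_{\min}\delta_{C^2})^{-1}\{\|\ZZ\circ\Omega\|_2+2\delta_{C^2}C^2\|Q\|_2\}^2$ to absorb the noise term and obtain $\phi\{l(\hat M^{(K)},Y,\Omega)-l(\hat M^{(K^*)},Y,\Omega)\}\le -\tfrac12\delta_{C^2}p_{\min}\sum_{l=K+2}^{K^*+1}\sigma_l^2(M^*)$ (its Lemma~\ref{lemma:likelihood-bound-underselect}), and then telescopes the penalty against this gap on the intersection of the two hypothesis events. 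The only cosmetic difference is that the paper phrases the absorption step as a quadratic-function argument in $d(M^*,\mM_{K+1})=\inf_{M\in\mM_{K+1}}\|M-M^*\|_F$, whereas you plug in $\|\hat M^{(K)}-M^*\|_F$ directly; the constants and conclusion are identical.
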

\begin{lemma}\label{lemma:R-bound}
  Under the asymptotic regime \eqref{eq:IC-general-asym-reg}, $R=O_p(p_{\max}/p_{\min} (N\vee J))$.
\end{lemma}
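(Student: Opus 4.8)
The plan is to control the two spectral-norm terms inside $R$ separately using the already-stated Lemmas~\ref{lemma:spectral-pattern} and~\ref{lemma:spectral-subexpnential-with-missing}, to convert the resulting row/column sums into $p_{\max}$ via the definition of $p_{\max}$, and then to show that the logarithmic remainder terms are swallowed by the leading term thanks to the regime assumption \eqref{eq:IC-general-asym-reg}. Recall that $R=4(p_{\min}\delta_{C^2})^{-1}\{\|Z\circ\Omega\|_2+2\delta_{C^2}C^2\|Q\|_2\}^2$, where $Q=\Omega-P$ is the centred missingness matrix with $P=(p_{ij})$.

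First I would invoke Lemma~\ref{lemma:spectral-pattern} for $\|Q\|_2=\|\Omega-P\|_2$ and the specialised conclusion of Lemma~\ref{lemma:spectral-subexpnential-with-missing} for $\|Z\circ\Omega\|_2$, both of which hold on an event of probability at least $1-2(N+J)^{-1}-(n^*)^{-1}$, which tends to one. On this event both bounds feature the same leading factor $(\max_i n^*_{i\cdot})^{1/2}\vee(\max_j n^*_{\cdot j})^{1/2}$. Using the definition $p_{\max}=(J^{-1}\max_i n^*_{i\cdot})\vee(N^{-1}\max_j n^*_{\cdot j})$ I get $\max_i n^*_{i\cdot}\leq p_{\max}J$ and $\max_j n^*_{\cdot j}\leq p_{\max}N$, so this leading factor is at most $\{p_{\max}(N\vee J)\}^{1/2}$. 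Since $C=O(1)$ forces $\kappa_{2C^2}$, $\delta_{C^2}$, and $\delta_{C^2}^{-1}$ to be bounded constants (with $\delta_{C^2}$ bounded away from zero by the positivity of $b''$ on the compact range $[-C^2,C^2]$), the two lemmas together give $\|Z\circ\Omega\|_2+2\delta_{C^2}C^2\|Q\|_2=O_p\big(\{p_{\max}(N\vee J)\}^{1/2}+\log n^*\,\log^{1/2}(N+J)\big)$.

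The decisive step is to argue that the log remainder $\log n^*\,\log^{1/2}(N+J)$ is dominated by $\{p_{\max}(N\vee J)\}^{1/2}$. Here I would combine two facts. On one hand, $n^*=\sum_{ij}p_{ij}\leq NJ\,p_{\max}$ yields $p_{\max}(N\vee J)\geq n^*(N\vee J)/(NJ)=n^*/(N\wedge J)$. On the other hand, the regime assumption $(N\wedge J)\log(N+J)=O(n^*/(\log n^*)^2)$ rearranges to $(\log n^*)^2\log(N+J)=O\big(n^*/(N\wedge J)\big)$. Chaining the two inequalities gives $(\log n^*)^2\log(N+J)=O\big(p_{\max}(N\vee J)\big)$, i.e. the squared remainder is of the same order as, or smaller than, the squared leading term; taking square roots shows the remainder is $O\big(\{p_{\max}(N\vee J)\}^{1/2}\big)$.

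Consequently the bracket is $O_p\big(\{p_{\max}(N\vee J)\}^{1/2}\big)$, its square is $O_p\big(p_{\max}(N\vee J)\big)$, and multiplying by the prefactor $4(p_{\min}\delta_{C^2})^{-1}=O(p_{\min}^{-1})$ gives $R=O_p\big((p_{\max}/p_{\min})(N\vee J)\big)$, as claimed. I expect the only genuinely delicate point to be the domination of the logarithmic term, which is exactly where the regime condition \eqref{eq:IC-general-asym-reg} enters; everything else is a bookkeeping combination of the two spectral lemmas with the elementary bound $(\max_i n^*_{i\cdot})\vee(\max_j n^*_{\cdot j})\leq p_{\max}(N\vee J)$ and the boundedness of the $b$- and $C$-dependent constants under $C=O(1)$.
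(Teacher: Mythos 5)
Your proposal is correct and follows essentially the same route as the paper's own proof: both bound $\|Z\circ\Omega\|_2$ and $\|Q\|_2$ via Lemmas~\ref{lemma:spectral-subexpnential-with-missing} and~\ref{lemma:spectral-pattern}, convert $(\max_i n^*_{i\cdot})\vee(\max_j n^*_{\cdot j})$ into $p_{\max}(N\vee J)$, and absorb the $\log n^*\,\log^{1/2}(N+J)$ remainder using the regime condition \eqref{eq:IC-general-asym-reg} together with $n^*\leq p_{\max}NJ$. Your explicit chaining of the two inequalities for the logarithmic term is just a spelled-out version of the paper's terse step $\log n^*\log^{1/2}(N+J)=O\big((N\wedge J)^{-1/2}(n^*)^{1/2}\big)=O\big(\{p_{\max}(N\vee J)\}^{1/2}\big)$, so no substantive difference remains.
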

In the rest of the section, we provide the proof of Theorem~\ref{thm:IC-general} first and then the proof of Theorem~\ref{thm:IC-simplified} because the former is more general than the latter.

\begin{proof}[of Theorem~\ref{thm:IC-general}]
We will verify that conditions of Theorem~\ref{thm:IC-general} ensure conditions in Lemma~\ref{lemma:overselect} and Lemma~\ref{lemma:underselect}.
We start with verifying conditions in Lemma~\ref{lemma:overselect}.
According to the second line of \eqref{eq:cases},
 \begin{equation}
 \begin{split}
      &\lim_{N,J\to\infty} \Pr\Big(u(n,N,J,K^*+1)>2 \phi^{-1}(K^*+1)R\Big)\\
       \geq & \liminf_{N,J\to\infty} \Pr\Big(\xi_{N,J}(K^*+1)(p_{\max}/p_{\min})(N\vee J)> 2 \phi^{-1}(K^*+1)R\Big)\\
       \geq & \liminf_{N,J\to\infty} \Pr\Big(\xi_{N,J}(p_{\max}/p_{\min})(N\vee J)> 2\phi^{-1} R\Big)\\
        = & 1,
 \end{split}
  \end{equation}
  where the last line is obtained according to Lemma~\ref{lemma:R-bound} and that $\xi_{N,J}\to\infty$ in probability. Similarly,
  \begin{equation}
  \begin{split}
      \lim_{N,J\to\infty} \Pr\Big(\inf_{K^*+2\leq K\leq K_{\max}}u(n,N,J,K)>2\phi^{-1} R\Big)
        \geq & \liminf_{N,J\to\infty} \Pr\Big(\xi_{N,J}\Big(\frac{p_{\max}}{p_{\min}}\Big)(N\vee J)>2 \phi^{-1}R\Big)\\
        =&1.
  \end{split}
  \end{equation}
  Thus, conditions of Lemma~\ref{lemma:overselect} are verified and we obtain
  \begin{equation}\label{eq:ic-general-final-1}
    \lim_{N,J\to\infty}\Pr(\hat{K}> K^*)=0.
  \end{equation}
  Next, we verify conditions of Lemma~\ref{lemma:underselect}.
  According to Lemma~\ref{lemma:R-bound} and the assumption $p_{\min}^{-2}p_{\max} K^*(N\vee J)=o(\sigma_{K^*+1}^2(M^*))$, we have
\begin{equation}
  4(\delta_{C^2}p_{\min})^{-1}K^*R  = O_p\big(p_{\min}^{-2}p_{\max} K^*(N\vee J)\big)=o_p\big(\sigma_{K^*}^2(M^*)\big).
\end{equation}
Thus,
\begin{equation}\label{eq:verifying-under}
  \lim_{N,J\to\infty}\Pr\Big(4(\delta_{C^2}p_{\min})^{-1}K^*R \leq \sigma^2_{K^*}(M^*)\Big)=1.
\end{equation}
In addition, according to the first line of \eqref{eq:cases},
\begin{equation}\label{eq:verifying-under2}
  \begin{split}
    &\lim_{N,J\to\infty}  \Pr\Big(
   u(n,N,J,K)< \phi^{-1}\delta_{C^2}p_{\min}\sigma^2_{K+1}(M^*) \text{ for } 1\leq K\leq K^*
  \Big)\\
  \geq & \liminf_{N,J\to\infty} \Pr\Big(\xi_{N,J}^{-1}p_{\min}\sigma_{K^*+1}^2(M^*)  <  \phi^{-1}\delta_{C^2}p_{\min}\sigma^2_{K+1}(M^*) \text{ for all }K  \\
    \geq & \liminf_{N,J\to\infty} \Pr\Big(\xi_{N,J}^{-1}  < \phi^{-1}\delta_{C^2} \Big)\\
    = & 1.
  \end{split}
\end{equation}
From \eqref{eq:verifying-under} and \eqref{eq:verifying-under2}, conditions of Lemma~\ref{lemma:underselect} are verified and thus
\begin{equation}\label{eq:ic-general-final-2}
    \lim_{N,J\to\infty}\Pr(\hat{K}< K^*)=0.
  \end{equation}
We complete the proof by combining \eqref{eq:ic-general-final-1} and \eqref{eq:ic-general-final-2}.
\end{proof}

\begin{proof}[of Theorem~\ref{thm:IC-simplified}]
First note that the existence of $u$ satisfying \eqref{eq:simplified-ic-condition-u}  implies $N\vee J=o(\sigma_{K^*+1}^2(M^*))$, which further implies $p_{\min}^{-2}p_{\max}K^*(N\vee J)= o(\sigma_{K^*+1}^2(M^*))$ under the asymptotic regime $p_{\min}^{-1}=O(1)$, $K^*=O(1)$. Thus, the assumption about the singular value of $M^*$ in Theorem~\ref{thm:IC-general} is verified. Also, $p_{\min}^{-1}=O(1)$ implies that $(N\wedge J)\log(N+J)= o(n^*/(\log n^*)^2)$. Thus, \eqref{eq:IC-general-asym-reg} is verified.

We proceed to verify that $u$ satisfies \eqref{eq:cases} in Theorem~\ref{thm:IC-general}. We note that
$p_{\min}^{-1}=O(1)$, $K^*=O(1)$ and $u$ satisfies \eqref{eq:simplified-ic-condition-u} implies that there exists $\xi_{N,J}\to\infty$ satisfying
\begin{equation}\label{eq:simp-final1}
  u(n,N,J,K)\leq \xi_{N,J}^{-1}p_{\min}\sigma_{K^*+1}^2(M^*) \text{ for all } K,
\end{equation}
\begin{equation}\label{eq:simp-final2}
   u(n,N,J,K)\geq \xi_{N,J} (p_{\max}/p_{\min}) (N\vee J) \text{ for all }K,
\end{equation}
 and
\begin{equation}\label{eq:simp-final3}
   u(n,N,J,K^*+1)\geq \xi_{N,J} (K^*+1)(p_{\max}/p_{\min}) (N\vee J).
\end{equation}
Note that $\sigma_{K+1}^2(M^*)\geq \sigma_{K^*+1}^2(M^*)$ for $K\leq K^*$. Thus, \eqref{eq:simp-final1} implies the first line of \eqref{eq:cases}; \eqref{eq:simp-final3} implies the second line of \eqref{eq:cases}; \eqref{eq:simp-final2} implies the last line of \eqref{eq:cases}. It verifies \eqref{eq:cases} and completes the proof.
\end{proof}

\begin{proof}[of Corollary~\ref{coro:suggested}]
  Under the asymptotic regime \eqref{eq:IC-simple-asym-reg} and $N\vee J=o(\sigma^2_{K^*+1}(M^*))$, \eqref{eq:IC-general-asym-reg} and $p_{\min}^{-2}p_{\max}K^*(N\vee J)= o(\sigma_{K^*+1}^2(M^*))$ are verified in the proof of Theorem~\ref{thm:IC-simplified}. We now verify \eqref{eq:cases}.

  From the conditions on $h(n,N,J)$, there exists a sequence $\xi_{N,J}$ (possibly depending on $h(n,N,J)$) such that $\xi_{N,J}\to\infty$ in probability and
    \begin{equation}\label{eq:xi-con}
    \xi_{N,J}< h(n,N,J)(p_{\min}/p_{\max})(K^*+1)^{-1} \text{ and } \xi_{N,J}\leq (h(n,N,J))^{-1}(N\vee J)^{-1} p_{\min}\sigma^2_{K^*+1}(M^*).
    \end{equation}
 Also, note that $u(n,N,J,K)=v(n,N,J,K)-v(n,N,J,K-1)=(N\vee J)h(n,N,J)$. It is not hard to verify \eqref{eq:xi-con} implies \eqref{eq:cases}, and, thus Theorem~\ref{thm:IC-general} applies.

 We proceed to the proof of the `in particular' part.  Note that by definition $E(n)=n^*$ and $Var(n)=\sum_{i}\sum_j p_{ij}(1-p_{ij})\leq \sum_i\sum_j p_{ij}=n^*$, which implies $\lim_{N,J\to\infty}\Pr(n>2n^* \text{ or } n<n^*/2)=0$ and further implies $$\lim_{N,J\to\infty}\Pr\big(n/(N\vee J)\geq 2n^*/(N\vee J) \text{ or } n/(N\vee J)\leq n^*/\{2(N\vee J)\}\big)=0.$$
Note that in this part, $h(n,N,J)=\log(n/(N\vee J))$. Also, $\log(n^*/\{2(N\vee J)\})\to\infty$. Thus, $h(n,N,J)\to\infty $ in probability.
In addition, on the event $n/(N\vee J)\leq 2n^*/(N\vee J)$, $(h(n,N,J))^{-1}(N\vee J)^{-1}\sigma_{K^*+1}^2(M^*)\geq \log\big(2n^*/(N\vee J)\big)(N\vee J)^{-1}\sigma_{K^*+1}^2(M^*)$. The right-hand-side of this inequality tend to infinity under the assumptions of the Corollary. This implies
$(h(n,N,J))^{-1}(N\vee J)^{-1}\sigma_{K^*+1}^2(M^*)\to\infty$ in probability.
\end{proof}

\section{Proof of Supporting Lemmas}\label{sec:proof-lemmas}
\begin{proof}[of Lemma~\ref{lemma:correct-specify-upper}]
By definition,
  \begin{equation}\label{eq:l-start}
\begin{split}
  &\phi\{l(\MM,\YY,\Omega)-l(\MM^*,\YY,\Omega)\}\\
  = & \sum_{ij} \omega_{ij}\big\{y_{ij}m_{ij}-b(m_{ij})- y_{ij}m^*_{ij}+b(m^*_{ij})\big\}\\
  = & \sum_{i,j} (y_{ij}-b'(m^*_{ij}))(m_{ij}-m^*_{ij})\omega_{ij}-\sum_{ij}\big\{b(m_{ij})-b(m_{ij}^*)-b'(m_{ij}^*)(m_{ij}-m^*_{ij})\big\} \omega_{ij}.\\
\end{split}
\end{equation}
In the rest of the proof, we derive upper  bounds for each term on the right-hand-side of the above display.
For the first term $\sum_{i,j} (y_{ij}-b'(m^*_{ij}))(m_{ij}-m^*_{ij})\omega_{ij}$, we write it as
\begin{equation}
  \sum_{i,j} (y_{ij}-b'(m^*_{ij}))(m_{ij}-m^*_{ij})\omega_{ij}
  = \langle {\ZZ\circ \Omega,\MM-\MM^*} \rangle,
\end{equation}
where $\langle{A,B}\rangle = tr(A^TB)$ denotes the matrix inner product. Recall the following inequality in linear algebra: $|\langle {A,B} \rangle |\leq \|A\|_2\|B\|_*\leq \sqrt{\textrm{rank}(B)}\|A\|_{2}\|B\|_F$ for any two matrices $A$ and $B$. Applying this fact to the above display, we obtain
\begin{equation}
\begin{split}
    \big|\sum_{i,j} (y_{ij}-b'(m^*_{ij}))(m_{ij}-m^*_{ij})\omega_{ij}\big|\leq & \{\textrm{rank}(\MM-\MM^*)\}^{1/2}\|\ZZ\circ\Omega\|_2 \|\MM-\MM^*\|_F.\\
\end{split}
\end{equation}
Notice that $\textrm{rank}(\MM-\MM^*)\leq r+r^*$ for $\MM\in\mM_r$. Thus, the above inequality implies
\begin{equation}\label{eq:l-bound-1}
\begin{split}
    \big|\sum_{i,j} (y_{ij}-b'(m^*_{ij}))(m_{ij}-m^*_{ij})\omega_{ij}\big|\leq & (r+r^*)^{1/2}\|\ZZ\circ\Omega\|_2 \|\MM-\MM^*\|_F.
\end{split}
\end{equation}
We proceed to the analysis of the second term
$\sum_{ij}\big\{b(m_{ij})-b(m_{ij}^*)-b'(m_{ij}^*)(m_{ij}-m^*_{ij})\big\} \omega_{ij}$.
 Note that for $\MM\in \mM_r$, $|m_{ij}|\leq \|B_i\|\|G_j\|\leq C^2$. Similarly, $|m^*_{ij}|\leq C^2$. Thus, for any $\tilde{m}_{ij}=t m_{ij}^*+(1-t)m_{ij}$ and $t\in (0,1)$, $|\tilde{m}_{ij}|\leq C^2$. Recall the definition of $\delta_{C^2}=\inf_{|x|\leq C^2}b''(x)$. Then, $\frac{1}{2}b''(\tilde{m}_{ij})\geq \delta_{C^2}$. This implies
 \begin{equation}\label{eq:second-term-med1}
 \begin{split}
      &\sum_{ij}\big\{b(m_{ij})-b(m_{ij}^*)-b'(m_{ij}^*)(m_{ij}-m^*_{ij})\big\} \omega_{ij}\\
      = & \sum_{ij} \frac{1}{2}b''(\tilde{m}_{ij})(m_{ij}-m^*_{ij})^2\omega_{ij}\\
      \geq & \delta_{C^2} \sum_{ij} (m_{ij}-m^*_{ij})^2\omega_{ij}.
 \end{split}
 \end{equation}
 Note that
 \begin{equation}\label{eq:second-term-med2}
\begin{split}
  &\sum_{ij} (m_{ij}-m^*_{ij})^2\omega_{ij}\\
  = & \sum_{ij} (m_{ij}-m^*_{ij})^2(\omega_{ij}-p_{ij})+\sum_{ij}p_{ij}(m_{ij}-m^*_{ij})^2\\
  \geq &\langle {(\MM-\MM^*)\circ (\MM-\MM^*), Q}\rangle +p_{\min}\|\MM-\MM^*\|_F^2\\
  \geq &- \|(\MM-\MM^*)\circ (\MM-\MM^*)\|_*\|Q\|_2+p_{\min}\|\MM-\MM^*\|_F^2.
\end{split}
\end{equation}
where we define $Q=\Omega-P$ and $P=(p_{ij})_{1\leq i\leq N,1\leq j\leq J}$.
The next lemma is helpful for bounding matrix norms involving Hadamard products, whose proof is given later this section.
\begin{lemma}\label{lemma:Hadamard}
  For $M\in \mM_r$, $\|(\MM-\MM^*)\circ (\MM-\MM^*)\|_*\leq 2C^2(r+r^*)^{1/2}\|\MM-\MM^*\|_F$.
\end{lemma}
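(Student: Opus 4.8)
The plan is to combine the low-rank factorization of $D:=M-M^*$ with nuclear–spectral duality, so that the nuclear norm of the Hadamard square $D\circ D$ splits into the product of $\|D\|_*$ (controlled through the rank of $D$) and the Schur–Hadamard multiplier norm of $D$ (controlled through the row norms of the factors of $D$).

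First I would record the factorization. Since $M\in\mathcal{M}_r$ and $M^*\in\mathcal{M}_{r^*}$, writing $G_i,B_j\in\mathbb{R}^r$ and $G_i^*,B_j^*\in\mathbb{R}^{r^*}$ for the respective factors, we have $d_{ij}=g_i^\top h_j$ with $g_i=(G_i^\top,(G_i^*)^\top)^\top$ and $h_j=(B_j^\top,-(B_j^*)^\top)^\top$ in $\mathbb{R}^{r+r^*}$. This gives at once $\mathrm{rank}(D)\le r+r^*$, hence $\|D\|_*\le\sqrt{r+r^*}\,\|D\|_F$, and the uniform row-norm bounds $\max_i\|g_i\|\le\sqrt{2}\,C$ and $\max_j\|h_j\|\le\sqrt{2}\,C$ from the constraints defining $\mathcal{M}_r$ and $\mathcal{M}_{r^*}$. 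Using the duality between nuclear and spectral norms together with the entrywise identity $\langle D\circ D,W\rangle=\langle D,D\circ W\rangle$,
\begin{align*}
\|D\circ D\|_* &= \sup_{\|W\|_2\le1}\langle D\circ D,W\rangle=\sup_{\|W\|_2\le1}\langle D,D\circ W\rangle\\
&\le \|D\|_*\sup_{\|W\|_2\le1}\|D\circ W\|_2,
\end{align*}
where the last inequality is $|\langle A,B\rangle|\le\|A\|_2\|B\|_*$ applied with $A=D\circ W$ and $B=D$.

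The only nontrivial ingredient—and the step where the Hadamard-product bound of \cite{horn1995norm} enters—is the Schur multiplier estimate $\sup_{\|W\|_2\le1}\|D\circ W\|_2\le(\max_i\|g_i\|)(\max_j\|h_j\|)\le 2C^2$. I would prove it directly from $d_{ij}=g_i^\top h_j$: for unit vectors $a\in\mathbb{R}^N$ and $b\in\mathbb{R}^J$, expand $a^\top(D\circ W)b=\sum_k(a\circ g^{(k)})^\top W(b\circ h^{(k)})$ over the $r+r^*$ coordinate slices $g^{(k)}\in\mathbb{R}^N$, $h^{(k)}\in\mathbb{R}^J$, bound each summand by $\|W\|_2\|a\circ g^{(k)}\|\,\|b\circ h^{(k)}\|$, and apply the Cauchy–Schwarz inequality in $k$ using $\sum_k\|a\circ g^{(k)}\|^2=\sum_i a_i^2\|g_i\|^2\le\max_i\|g_i\|^2$ and the analogous identity for $h$. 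Feeding this estimate and $\|D\|_*\le\sqrt{r+r^*}\,\|D\|_F$ into the displayed chain yields $\|D\circ D\|_*\le 2C^2(r+r^*)^{1/2}\|M-M^*\|_F$, as claimed. I expect the Schur multiplier bound to be the main obstacle, since the naive route—bounding $\|D\circ D\|_*\le\sqrt{\mathrm{rank}(D\circ D)}\,\|D\circ D\|_F$ with $\mathrm{rank}(D\circ D)\le(r+r^*)^2$ and $\|D\circ D\|_F\le 2C^2\|D\|_F$—only delivers the factor $(r+r^*)$ rather than the sharper $(r+r^*)^{1/2}$, so the low-rank structure of $D$ must be exploited through the multiplier norm rather than through the rank of $D\circ D$.
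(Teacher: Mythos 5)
Your proof is correct, and it reaches the lemma by a genuinely different route than the paper, although both arguments pivot on the same two ingredients: the factorization $d_{ij}=g_i^\top h_j$ with $g_i,h_j\in\mathbb{R}^{r+r^*}$, $\|g_i\|,\|h_j\|\le\sqrt{2}C$, and the intermediate inequality $\|D\circ D\|_*\le 2C^2\|D\|_*$, after which $\|D\|_*\le\sqrt{r+r^*}\,\|D\|_F$ finishes the job. The paper obtains the intermediate inequality by citing Theorem 2 of \cite{horn1995norm}, a singular-value-sum majorization for Hadamard products, $\sum_{i=1}^k\sigma_i(A\circ B)\le\sum_{i=1}^k\|f_{[i]}\|\|g_{[i]}\|\sigma_i(B)$, applied with $A=B=D$ and $k=N\wedge J$. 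You instead derive it from nuclear--spectral duality via $\langle D\circ D,W\rangle=\langle D,D\circ W\rangle$ and then prove the Schur--Hadamard multiplier bound $\sup_{\|W\|_2\le1}\|D\circ W\|_2\le(\max_i\|g_i\|)(\max_j\|h_j\|)\le 2C^2$ from scratch by slicing over the $r+r^*$ coordinates and applying Cauchy--Schwarz in the slice index; this is in effect an elementary, self-contained proof of exactly the special case of the Schur multiplier theorem that is needed, whereas the cited Horn--Mathias result is a stronger majorization statement of which only the top-$(N\wedge J)$ sum with a uniform weight bound is used. What your approach buys is independence from the external reference (everything reduces to duality and Cauchy--Schwarz); what the paper's approach buys is brevity, since the majorization inequality is quoted rather than proved. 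Your closing observation is also on point: the naive bound through $\mathrm{rank}(D\circ D)\le(r+r^*)^2$ and $|d_{ij}|\le 2C^2$ only yields a factor $(r+r^*)$, which is precisely the weaker estimate of \cite{chen2019structured} that this lemma is designed to improve.
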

\begin{remark}
The proof of Lemma~\ref{lemma:Hadamard} utilizes the property that $m_{ij}=B_i^TG_j$ with $\|B_i\|,\|G_j\|\leq C$ and combine it with a result in \cite{horn1995norm}. This improves the estimate in \cite{chen2019structured} where $|m_{ij}|\leq C^2$ is directly used to derive an upper bound $2C^2 (r+r^*) \|\MM-\MM^*\|_F$. Comparing with this bound, the above lemma provide a sharper bound in the order of $r+r^*$.
\end{remark}
Applying Lemma~\ref{lemma:Hadamard} to \eqref{eq:second-term-med2} and combine it with \eqref{eq:second-term-med1}, we obtain
\begin{equation}\label{eq:l-bound-2.1}
\begin{split}
     & \sum_{ij}\big\{b(m_{ij})-b(m_{ij}^*)-b'(m_{ij}^*)(m_{ij}-m^*_{ij})\big\} \omega_{ij} \\
    \geq &  \delta_{C^2}\big\{p_{\min}\|\MM-\MM^*\|_F^2 - 2C^2(r+r^*)^{1/2}\|\MM-\MM^*\|_F\|Q\|_2\big\}.
\end{split}
\end{equation}
We complete the proof by combining the above display with \eqref{eq:l-start} and \eqref{eq:l-bound-1}.
\end{proof}
\begin{proof}[of Lemma~\ref{lemma:Hadamard}]
  Let $\tilde{B}_i=(B_i^T,-(B_i^*)^T)^T$ and $\tilde{G}_j=(G_j^T,-(G_j^*)^T)^T$. Then, $\tilde{B}_i,\tilde{G}_j\in \mathbb{R}^{r+r^*}$, $\|\tilde{B}_i\|,\|\tilde{G}_j\|\leq \sqrt{2}C$, and $m_{ij}-m^*_{ij}=\tilde{B}_i^T\tilde{G}_j$  for all $i,j$.

  On the other hand, Theorem 2 in \cite{horn1995norm} states that, for any $m\times n$ matrices $A=(a_{ij}),B=(b_{ij})$, if $a_{ij}=g_j^Tf_i$ for vectors $g_j$ and $f_i$s. Then,
  \begin{equation}
    \sum_{i=1}^k \sigma_{i}(A\circ B)\leq \sum_{i=1}^k \|f_{[i]}\|\|g_{[i]}\|\sigma_i(B)\text{ for } k=1,\cdots, m\wedge n,
  \end{equation}
  where $\sigma_i(\cdot)$ denotes the $i$th largest singular value of a matrix,
  $\|f_{[1]}\|\geq \|f_{[2]}\|\geq\cdots\geq \|f_{[m]}\|$ and $\|g_{[1]}\|\geq\cdots\geq\|g_{[n]}\|$ denote the order statistics of $\{\|f_i\|\}_{i=1}^m$ and $\{\|g\|_j\}_{j=1}^n$.
Now, we let $k=N\wedge J$, $A=M-M^*$, $B=A$, $f_{i}=\tilde{B}_i$, $g_{j}=\tilde{G}_j$ in the above result and note that $\|f_{[i]}\|,\|g_{[j]}\|\leq \sqrt{2}C$ in this case, we obtain
\begin{equation}
  \sum_{i=1}^{N\wedge J} \sigma_{i}((M-M^*)\circ (M-M^*))\leq \sum_{i=1}^{N\wedge J} 2C^2 \sigma_i (M-M^*)=2C^2\|M-M^*\|_* .
\end{equation}
Noting the left-hand side of the above display equals $\|(M-M^*)\circ (M-M^*)\|_*$. Thus,
\begin{equation}
  \|(M-M^*)\circ (M-M^*)\|_*\leq 2C^2\|M-M^*\|_*\leq 2C^2(r+r^*)^{1/2} \|M-M^*\|_F.
\end{equation}

\end{proof}

The proofs of Lemmas~\ref{lemma:spectral-pattern} and \ref{lemma:spectral-subexpnential-with-missing} are based on the next lemma that  provides an upper tail bound for the spectral norm of a large class of random matrices. Its proof mainly combines standard symmetrization and truncation arguments with a recent result by \cite{bandeira2016sharp} on the spectral norm of symmetric random matrices with independent, centered and symmetric entries.
\begin{lemma}\label{lemma:general-spectral-bound}
  Let $X=(x_{ij})_{1\leq i\leq N,1\leq j\leq J}$ be an $N\times J$ matrix with $E(x_{ij})=0$ and $E(x^2_{ij})<\infty$.  Then, there is a universal constant $c>0$ such that for all $t,\lambda\geq 0$
  \begin{equation}
  \begin{split}
        \Pr\Big(\|X\|_2\geq 4(\sigma_1\vee\sigma_2)+t\Big)
        \leq  (N+J)e^{-t^2/(c\lambda^2)}+\sum_{i=1}^N\sum_{j=1}^J\Pr\big(|x_{ij}-x_{ij}'|> \lambda\big),
  \end{split}
  \end{equation}
  where we define $\sigma_1=\max_{1\leq i\leq N}\{\sum_{j=1}^J E(x^2_{ij})\}^{1/2}$, $\sigma_2=\max_{1\leq j\leq J}\{\sum_{i=1}^N E(x^2_{ij})\}^{1/2}$, and $x_{ij}'$ is an independent copy of $x_{ij}$.
\end{lemma}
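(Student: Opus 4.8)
The plan is to reduce the rectangular problem to a symmetric one via the Hermitian dilation $H=\begin{pmatrix} 0 & X\\ X^\top & 0\end{pmatrix}$, for which $\|X\|_2=\|H\|_2$ and the maximal row standard-deviation parameter $\tilde\sigma(H)=\max_i\{\sum_j E x_{ij}^2\}^{1/2}\vee \max_j\{\sum_i E x_{ij}^2\}^{1/2}$ equals $\sigma_1\vee\sigma_2$. Because the entries of $X$ need not be symmetrically distributed, while the sharp operator-norm bound of \cite{bandeira2016sharp} is cleanest for symmetric entries, I would symmetrize: introduce an independent copy $X'$ and pass to $X-X'$, whose entries $x_{ij}-x_{ij}'$ are symmetric about $0$. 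A symmetrization argument (Jensen at the level of the mean, together with a standard independent-copy/median argument for the tail) then relates the tail of $\|X\|_2$ to that of $\|X-X'\|_2$, with the deterministic offset $4(\sigma_1\vee\sigma_2)$ absorbing the centering shift.

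Next I would truncate, and the order matters: symmetrizing \emph{before} truncating is essential, since a symmetric truncation $\{|x_{ij}-x_{ij}'|\le\lambda\}$ of a symmetric variable stays symmetric, hence centered, so the truncated matrix still has mean-zero entries and \cite{bandeira2016sharp} continues to apply (truncating $X$ directly would shift the means). Writing $H(X-X')=T+R$ with $T$ keeping the entries bounded by $\lambda$, the remainder $R$ vanishes off the event $\bigcup_{ij}\{|x_{ij}-x_{ij}'|>\lambda\}$, so on that complement $\|H(X-X')\|_2=\|T\|_2$; a union bound over the overflow event produces exactly the term $\sum_{ij}\Pr(|x_{ij}-x_{ij}'|>\lambda)$.

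For the bounded symmetric matrix $T$, I would invoke the sharp bound of \cite{bandeira2016sharp} to control $E\|T\|_2\le (1+\epsilon)\,2\,\tilde\sigma(T)+C_\epsilon\,\lambda\sqrt{\log(N+J)}$. Since symmetrization doubles the entrywise variances, $\tilde\sigma(T)\le\sqrt2\,(\sigma_1\vee\sigma_2)$, so the leading term is at most $4(\sigma_1\vee\sigma_2)$ for small $\epsilon$ (this is the source of the constant $4$, improvable to $2\sqrt2+\epsilon$). The $\lambda\sqrt{\log(N+J)}$ mean contribution is deliberately \emph{not} put into the offset but charged to the deviation $t$: the tail $(N+J)e^{-t^2/(c\lambda^2)}$ is vacuous unless $t\gtrsim\lambda\sqrt{\log(N+J)}$, which is precisely the scale of the omitted term, so the statement is internally consistent. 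The fluctuation of $\|T\|_2$ about its mean is then handled by a concentration inequality for the spectral norm of a matrix with independent entries bounded by $\lambda$; using that the operator norm is convex and $1$-Lipschitz in Frobenius norm yields a sub-Gaussian tail of scale $\lambda$, giving $e^{-t^2/(c\lambda^2)}$ with the dimension factor $N+J$.

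I expect the main obstacle to be achieving the sharp leading constant and the correct tail scale simultaneously. A crude bounded-differences (McDiarmid) argument would yield a sub-Gaussian scale of order $\lambda\sqrt{NJ}$ instead of $\lambda$, which is far too large; the concentration step must therefore exploit the convex-Lipschitz structure (a Talagrand-type bound) to keep the scale at the single-entry level $\lambda$. Coupled with this, the bookkeeping — routing the $\sqrt{\log(N+J)}$ mean term into $t$ rather than the offset, and tracking the symmetrization constants so that the bound lands on exactly $4(\sigma_1\vee\sigma_2)+t$ with the clean additive probability form — is the delicate part. A non-sharp version with a generic constant $C$ in place of $4$ would be comparatively routine.
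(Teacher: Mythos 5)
Your architecture is essentially the paper's: symmetrise with an independent copy \emph{before} truncating (for exactly the reason you give --- truncation must preserve centering), pass to the Hermitian dilation, absorb the truncation overflow by a union bound (which is what produces the term $\sum_{i,j}\Pr(|x_{ij}-x_{ij}'|>\lambda)$), and let the sharp bound of \cite{bandeira2016sharp} supply the constant: $2\sqrt{2}$ (symmetric bounded entries) times $\sqrt{2}$ (variance doubling from symmetrisation) $=4$. The one structural difference is how that bound is invoked: the paper cites Corollary 3.12 of \cite{bandeira2016sharp}, which is \emph{already} a tail bound with deviation scale equal to the entrywise bound $\lambda$ and prefactor $N+J$, so no separate concentration step and none of your ``route the $\lambda\sqrt{\log(N+J)}$ term into $t$'' bookkeeping is needed. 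Your alternative --- expectation bound plus Talagrand-type convex-Lipschitz concentration, with the logarithmic term charged to $t$ by enlarging the universal constant $c$ (the bound being vacuous when $t\lesssim\lambda\sqrt{\log(N+J)}$) --- is a legitimate re-derivation of that corollary, and your observation that bounded differences would give the wrong scale $\lambda\sqrt{NJ}$ is correct.

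The step where your plan, as written, does not deliver the stated inequality is the final desymmetrisation. A median/L\'evy-type independent-copy argument gives $\Pr\big(\|X\|_2>u+m\big)\leq 2\Pr\big(\|X-X'\|_2>u\big)$, where $m$ is a median of $\|X\|_2$. That $m$ is generically of the same order as $E\|X\|_2$, hence of order $\sigma_1\vee\sigma_2$ --- not of order $\lambda\sqrt{\log(N+J)}$ --- so it cannot be charged to $t$; and since $4=2\sqrt{2}\cdot\sqrt{2}$ is used up exactly by the Bandeira--van Handel constant and the variance doubling, the offset has no slack to ``absorb the centering shift'' as you claim. The factor $2$ would moreover double the overflow sum, whereas the lemma has coefficient one on $\sum_{i,j}\Pr(|x_{ij}-x_{ij}'|>\lambda)$. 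What the paper does instead is assert the lossless inequality $\Pr(\|X\|_2\geq u)\leq\Pr(\|X-X'\|_2\geq u)$ via a Jensen/convexity claim in $X'$ (using $E(X')=0$); that justification is itself terse --- an indicator of a convex function is not literally convex --- but it is precisely the lossless ingredient that preserves both the constant $4$ and the unit coefficient, and it is exactly what your median route trades away. So either sharpen that step (or accept the paper's Jensen-type assertion), or settle for the weaker conclusion you already anticipate: a generic constant in place of $4$ and a factor $2$ on the overflow term.
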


\begin{proof}[of Lemma~\ref{lemma:general-spectral-bound}]
  Let $X'= (x'_{ij})$ which is an independent copy of $X$ and let $\tilde{X}=(\tilde{x}_{ij})=X-X'$. Then, $\tilde{x}_{ij}$s have symmetric distribution and are independent. Let $Z=(z_{ij})=\begin{pmatrix}
    0 & \tilde{X}\\
    \tilde{X}^T & 0
  \end{pmatrix}.$ $Z$ is a symmetric $(N+J)\times (N+J)$ random matrix whose entries are independent and symmetric random variables. Define a random matrix $Z(\lambda)$ as the truncated $Z$,
  \begin{equation}
    Z(\lambda) = (z_{ij}(\lambda))_{1\leq i\leq N,1\leq j\leq J}= (z_{ij}I(|z_{ij}|\leq \lambda))_{1\leq i\leq N, 1\leq j\leq J}.
  \end{equation}
  Then, entries of $Z({\lambda})$ are independent, symmetric random variables and are bounded by $\lambda$. Apply Corollary 3.12 in \cite{bandeira2016sharp}  to $Z(\lambda)$, then there exists a universal constant $c>0$ such that
  \begin{equation}
    \Pr\Big(\|Z(\lambda)\|_2\geq 2^{3/2} \max_{1\leq i\leq (N+J)}\big[\sum_{j=1}^{N+J} E\{z^2_{ij}(\lambda)\}\big]^{1/2}+t\Big)\leq (N+J) e^{-t^2/(c\lambda^2)}
  \end{equation}
  Note that
  \begin{equation}
  \begin{split}
      \max_{1\leq i\leq (N+J)}\big[\sum_{j=1}^{N+J} E\{z^2_{ij}(\lambda)\}\big]^{1/2}\leq & \max_{1\leq i\leq (N+J)}\big\{\sum_{j=1}^{N+J} E(z^2_{ij})\big\}^{1/2}\\
      = & \max\big[\max_{1\leq i\leq N}\{\sum_{j=1}^J E(\tilde{x}^2_{ij})\}^{1/2},\max_{1\leq j\leq J}\{\sum_{i=1}^N E(\tilde{x}^2_{ij})\}^{1/2}\big]\\
      = & \sqrt{2}(\sigma_1\vee\sigma_2).
  \end{split}
  \end{equation}
  Thus,
  \begin{equation}
    \Pr\Big(\|Z(\lambda)\|_2\geq 4(\sigma_1\vee \sigma_2)+t\Big)\leq (N+J) e^{-t^2/(c\lambda^2)}.
  \end{equation}
  On the other hand,
  \begin{equation}
  \begin{split}
    \Pr\big(\|Z\|_2\geq 4(\sigma_1\vee \sigma_2)+t\big)
    \leq  \Pr\big(\|Z(\lambda)\|_2\geq 4(\sigma_1\vee \sigma_2)+t\big)+\Pr\big(\max_{1\leq i,j\leq N+J}|z_{ij}|>\lambda\big).
  \end{split}
  \end{equation}
  The above two inequalities together imply
  \begin{equation}
    \Pr\big(\|Z\|_2\geq 4(\sigma_1\vee \sigma_2)+t\big)
    \leq  (N+J) e^{-t^2/(c\lambda^2)} + \Pr(\max_{1\leq i,j\leq N+J}|z_{ij}|>\lambda).
  \end{equation}
  Note that $\|Z\|_2=\|\tilde{X}\|_2$ and $\max_{1\leq i,j\leq N+J}|z_{ij}| = \max_{1\leq i\leq N,1\leq j\leq J}|\tilde{x}_{ij}|$. From the above inequality, we obtain
  \begin{equation}
    \Pr\big(\|\tilde{X}\|_2\geq 4(\sigma_1\vee \sigma_2)+t\big)
    \leq  (N+J) e^{-t^2/(c\lambda^2)} + \Pr(\max_{1\leq i\leq N, 1\leq j\leq J}|\tilde{x}_{ij}|>\lambda).
  \end{equation}
  With a union bound, we further get
  \begin{equation}\label{eq:bound-tilde-x}
    \Pr(\|\tilde{X}\|_2\geq 4(\sigma_1\vee \sigma_2)+t)
    \leq  (N+J) e^{-t^2/(c\lambda^2)} + \sum_{1\leq i\leq N,1\leq j\leq J}\Pr(|\tilde{x}_{ij}|>\lambda).
  \end{equation}
  Recall $\tilde{X}= X-X'$ and the function $I(\|X-X'\|_2\geq 4(\sigma_1\vee\sigma_2)+t)$ is convex in $X'$. Thus, by Jensen's inequality,
  \begin{equation}
    \Pr(\|X\|_2\geq 4(\sigma_1\vee\sigma_2)+t) \leq \Pr(\|X-X'\|_2\geq 4(\sigma_1\vee\sigma_2)+t) = \Pr(\|\tilde{X}\|_2\geq 4(\sigma_1\vee\sigma_2)+t).
  \end{equation}
  This, together with \eqref{eq:bound-tilde-x} completes the proof.
\end{proof}

\begin{proof}[of Lemma~\ref{lemma:spectral-pattern}]
Let $\omega'_{ij}$ be an independent copy of $\omega_{ij}$, then $|\omega'_{ij}-p_{ij}-(\omega_{ij}-p_{ij})|\leq 1$. In addition, $E(\omega_{ij}-p_{ij})^2 = p_{ij}(1-p_{ij})\leq p_{ij}$. Thus, $\max_{i}\{\sum_{j}E(\omega_{ij}-p_{ij})^2\}^{1/2}\leq \max_{i} (\sum_j p_{ij})^{1/2}=(\max_in^*_{i\cdot})^{1/2} $ and $\max_{j}\{\sum_{i}E(\omega_{ij}-p_{ij})^2\}^{1/2}\leq (\max_j n^*_{\cdot j})^{1/2} $.

Choose $\lambda=1$ and apply Lemma~\ref{lemma:general-spectral-bound} to $\Omega-P$, we obtain that for all $t\geq 0$,
\begin{equation}
  \Pr\big(\|\Omega-P\|_2\geq 4 (\max_in^*_{i\cdot}\vee \max_jn^*_{\cdot j})^{1/2} +t\big)\leq (N+J)e^{-t^2/c}.
\end{equation}
Let $t=(2c \log(N+J))^{1/2}$ in the above inequality, we obtain
\begin{equation}
  \Pr\big(\|\Omega-P\|_2\geq 4 (\max_in^*_{i\cdot}\vee \max_jn^*_{\cdot j})^{1/2}+(2c \log(N+J))^{1/2}\big)\leq (N+J)^{-1}.
\end{equation}
We complete the proof by noting that $(2c)^{1/2}$ is still a universal constant.
\end{proof}

\begin{proof}[of Lemma~\ref{lemma:spectral-subexpnential-with-missing}]
  Apply Lemma~\ref{lemma:general-spectral-bound} to $V\circ\Omega$, we obtain that for all $t,\lambda\geq 0$,
  \begin{equation}\label{eq:z-circ-start}
    \Pr\big(\|V\circ \Omega\|_2\geq 4(\sigma_1\vee \sigma_2)+t\big)
    \leq (N+J)e^{-t^2/(c\lambda^2)}+\sum_{ij} \Pr(|v_{ij}\omega_{ij}-v'_{ij}\omega'_{ij}|\geq \lambda),
  \end{equation}
  where $(v'_{ij},\omega'_{ij})$ is an independent copy of $(v_{ij},\omega_{ij})$, $\sigma_1=\max_{i}\{\sum_jE(v^2_{ij}\omega^2_{ij})\}^{1/2}$ and $\sigma_2=\max_{j}\{\sum_iE(v^2_{ij}\omega^2_{ij})\}^{1/2}$. We proceed to a detailed analysis of $\sigma_1,\sigma_2$ and the probability $\Pr(|v_{ij}\omega_{ij}-v'_{ij}\omega'_{ij}|\geq \lambda)$. First, a direct calculation gives
  \begin{equation}
    \sigma_1=\max_i\big\{\sum_{j}p_{ij}E(v_{ij}^2)\big\}^{1/2}\leq (\max_{i} n^*_{i\cdot})^{1/2} \max_{ij}\{E(v_{ij}^2)\}^{1/2}.
  \end{equation}
  Similarly, $\sigma_2\leq (\max_{j} n^*_{\cdot j})^{1/2} \max_{ij}\{E(v_{ij}^2)\}^{1/2}$.
  Now we find an upper bound of $\Pr(|v_{ij}\omega_{ij}-v'_{ij}\omega'_{ij}|\geq \lambda)$.
  Note that
  \begin{equation}\label{eq:entry-bound}
  \begin{split}
     &\Pr(|v_{ij}\omega_{ij}-v'_{ij}\omega'_{ij}|\geq \lambda)\\
    = & p_{ij}^2 \Pr(|v_{ij}-v'_{ij}|\geq \lambda) + 2p_{ij}(1-p_{ij}) \Pr(|v_{ij}|\geq \lambda)\\
     \leq & 3p_{ij} \Pr(|v_{ij}-v'_{ij}|\geq \lambda) \vee \Pr(|v_{ij}|\geq \lambda).
  \end{split}
  \end{equation}
  For $\Pr(|v_{ij}|\geq \lambda)$, we use a tail bound for sub-exponential variables
  \begin{equation}
    \Pr(|v_{ij}|\geq \lambda)\leq 2 e^{-\lambda^2/(2\nu^2)}\vee e^{-\lambda/(2\alpha)}.
  \end{equation}
  Similarly, noting that $v_{ij}-v'_{ij}$ is also sub-exponential with parameters $2\nu^2,\alpha$, we have
  \begin{equation}
    \Pr(|v_{ij}-v'_{ij}|\geq \lambda)
    \leq 2 e^{-\lambda^2/(4\nu^2)}\vee e^{-\lambda/(2\alpha)}.
  \end{equation}
  Combining the above two inequalities with \eqref{eq:entry-bound}, we have
  \begin{equation}
    \Pr(|v_{ij}\omega_{ij}-v'_{ij}\omega'_{ij}|\geq \lambda)
    \leq 6p_{ij} e^{-\lambda^2/(4\nu^2)}\vee e^{-\lambda/(2\alpha)}.
  \end{equation}
  Combining the above inequality with \eqref{eq:z-circ-start}, we arrive at
  \begin{equation}\label{eq:choosing-l-and-t}
  \begin{split}
        &\Pr\big(\|V\circ \Omega\|_2\geq 4\max_{ij}\{E(v_{ij}^2)\}^{1/2}(\max_{i} n^*_{i\cdot})^{1/2}\vee (\max_{j} n^*_{\cdot j})^{1/2}  +t\big)\\
        \leq &(N+J)e^{-t^2/(c\lambda^2)}+ 6e^{-\lambda^2/(4\nu^2)}\vee e^{-\lambda/(2\alpha)}n^*.
  \end{split}
  \end{equation}
Let $\lambda= 4(\alpha\vee\nu)\log n^*$. It is not hard to verify that
$  6e^{-\lambda^2/(4\nu^2)}\vee e^{-\lambda/(2\alpha)}n^*\leq (n^*)^{-1}$
 for $n^*\geq 6$. Let $t=\lambda \{2c \log(N+J)\}^{1/2}$, we obtain
$
  (N+J)e^{-t^2/(c\lambda^2)}\leq (N+J)^{-1}
$.
Combining the above inequalities with \eqref{eq:choosing-l-and-t}, we obtain that with probability at least $1-(N+J)^{-1}-(n^*)^{-1}$,
\begin{equation}
  \begin{split}
        &\|V\circ \Omega\|_2\\
        \leq & 4\max_{ij}\{E(v_{ij}^2)\}^{1/2}(\max_{i} n^*_{i\cdot})^{1/2}\vee (\max_{j} n^*_{\cdot j})^{1/2}   + 4\sqrt{2} c^{1/2} (\alpha\vee \nu)\log n^* \log^{1/2}(N+J).
  \end{split}
  \end{equation}
  This completes the proof of inequality \eqref{eq:spectral-subexp} (note that $4\sqrt{2} c^{1/2}$ is also a universal constant). We proceed to prove the `in particular' part of the lemma.
  For each $z_{ij}=y_{ij}-b'(m^*_{ij})$, its second moment is $E(z_{ij}^2)=\phi b''(m^*_{ij})\leq \phi\kappa_{2C^2}$. In addition, its moment generating function is $E(e^{\lambda z_{ij}}) = \exp[\phi^{-1}\{b(m_{ij}^*+\lambda \phi)-b(m^*_{ij})\}-\lambda  b'(m^*_{ij})] = \exp\{\phi b''(m^*_{ij}+\tilde{\lambda}\phi)\lambda^2 /2 \}$ for some $|\tilde{\lambda}|\leq |\lambda|$. Since $|m_{ij}^*|\leq C^2$ by assumption, we can see that for $|\lambda|\leq C^2/\phi$, $|m^*_{ij}+\tilde{\lambda}\phi|\leq 2C^2$ and thus
$E(e^{\lambda z_{ij}}) \leq \exp\{\kappa_{2C^2}\phi\lambda^2 /2 \}$ for all $|\lambda| \leq C^2/\phi$. This implies that $z_{ij}$ is sub-exponential with the parameters $\nu^2=\phi \kappa_{2C^2}$ and $\alpha=\phi/C^2$. We complete the proof by applying \eqref{eq:spectral-subexp} with the above parameters for $Z$.
\end{proof}
\bigskip

\begin{proof}[of Lemma~\ref{lemma:overselect}]
For each $K^*+1\leq K\leq K_{\max}$, we first derive an upper bound for $\phi\{l(\hat{M}^{(K)},Y,\Omega)-l(\hat{M}^{(K^*)},Y,\Omega)\}-(v(n,N,J,K)-v(n,N,J,K^*))$. According to Lemma~\ref{lemma:correct-specify-upper},
\begin{equation}
\begin{split}
     &\phi\{l(\hat{M}^{(K)},Y,\Omega)-l(M^*,Y,\Omega)\}\\
     \leq &
(K+K^*+2)^{1/2}\big(\|\ZZ\circ\Omega\|_2+ 2\delta_{C^2}C^2\|Q\|_2\big)\|\hat{\MM}^{(K)}-\MM^*\|_F\\
\leq &  2K^{1/2}\big(\|\ZZ\circ\Omega\|_2+ 2\delta_{C^2}C^2\|Q\|_2\big)\|\hat{\MM}^{(K)}-\MM^*\|_F.
\end{split}
 \end{equation}
Combining this with \eqref{eq:error-bound-mle0} gives
\begin{equation}
    \phi\{l(\hat{M}^{(K)},Y,\Omega)-l(M^*,Y,\Omega)\}\leq 4p_{\min}^{-1}K \big\{\|\ZZ\circ\Omega\|_2+ 2\delta_{C^2}C^2\|Q\|_2\big\}^2=KR.
\end{equation}
Thus, the penalized log-likelihood satisfies
\begin{equation}
\begin{split}
   &\max_{K^*+1\leq K\leq K_{\max}}\Big[-2l(\hat{M}^{(K)},Y,\Omega)+v(n,N,J,K)-\{-2l(M^*,Y,\Omega)+v(n,N,J,K^*)\}\Big]\\
   \geq & \max_{K^*+1\leq K\leq K_{\max}}\Big[- 2\phi^{-1}KR + \sum_{l=K^*+1}^K u(n,N,J,K)\Big]
\end{split}
\end{equation}
It is easy to see that, if the events
$u(n,N,J,K^*+1)>2\phi^{-1}(K^*+1)R$ and $u(n,N,J,l)>2\phi^{-1}R$ happen at the same time for all $K^*+2\leq l\leq K_{\max}$, then the right-hand side of the above inequality is strictly greater than zero. Thus,
\begin{equation}
\begin{split}
&\Pr\Big(\hat{K}\leq K^*\Big)\\
 \geq &\Pr\Big(\max_{K^*+1\leq K\leq K_{\max}}\Big[-2l(\hat{M}^{(K)},Y,\Omega)+v(n,N,J,K)-\{-2l(M^*,Y,\Omega)+v(n,N,J,K^*)\}\Big]>0\Big)\\
   \geq & \Pr\Big(u(n,N,J,K^*+1)>2\phi^{-1}(K^*+1)R, \text{ and } \inf_{K^*+2\leq l\leq K_{\max}}u(n,N,J,K)>2\phi^{-1}R \Big).
\end{split}
\end{equation}
We complete the proof by noting the right-hand side of the above inequality tend to one under the assumptions of the lemma.
\end{proof}

The proof of Lemma~\ref{lemma:underselect} requires the next lemma.
\begin{lemma}\label{lemma:likelihood-bound-underselect}
If $4(\delta_{C^2}p_{\min})^{-1}K^* R\leq \sigma^2_{K^*+1}(M^*)$, then
  \begin{equation}
    \phi\{l(\hat{\MM}^{(K)},\YY,\Omega)-l(\hat{\MM}^{(K^*)},\YY,\Omega)\}\leq -\frac{1}{2}\delta_{C^2}p_{\min}\Big\{\sum_{l=K+2}^{K^*+1}\sigma_l^2(M^*)\Big\}
  \end{equation}
  for $0\leq K\leq K^*-1$.
\end{lemma}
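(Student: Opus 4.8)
The plan is to prove the bound \emph{deterministically} on the event that the hypothesis $4(\delta_{C^2}p_{\min})^{-1}K^*R\le\sigma_{K^*+1}^2(M^*)$ holds, reducing the whole statement to the analysis of a one-variable quadratic in the Frobenius estimation error. First I would dispose of $\hat M^{(K^*)}$: since $M^*\in\mM_{K^*+1}$ and $\hat M^{(K^*)}$ maximises the log-likelihood over $\mM_{K^*+1}$, we have $l(\hat M^{(K^*)},\YY,\Omega)\ge l(M^*,\YY,\Omega)$, so it suffices to show $\phi\{l(\hat M^{(K)},\YY,\Omega)-l(M^*,\YY,\Omega)\}\le-\tfrac12\delta_{C^2}p_{\min}\sum_{l=K+2}^{K^*+1}\sigma_l^2(M^*)$ for every $0\le K\le K^*-1$.

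Next, since $\hat M^{(K)}\in\mM_{K+1}$, I would apply Lemma~\ref{lemma:correct-specify-upper} with $r=K+1$ (so $r+r^*=K+K^*+2$) to obtain the upper bound $\phi\{l(\hat M^{(K)},\YY,\Omega)-l(M^*,\YY,\Omega)\}\le(K+K^*+2)^{1/2}W\,x-\delta_{C^2}p_{\min}x^2$, where $x=\|\hat M^{(K)}-M^*\|_F$ and $W=\|\ZZ\circ\Omega\|_2+2\delta_{C^2}C^2\|Q\|_2$ is exactly the quantity entering the definition $R=4(p_{\min}\delta_{C^2})^{-1}W^2$. The right-hand side is a downward parabola $f(x)=ax-bx^2$ with $a=(K+K^*+2)^{1/2}W$, $b=\delta_{C^2}p_{\min}$, and vertex at $x^\star=a/(2b)$.

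The crucial structural input is a \emph{lower} bound on $x$: because $\hat M^{(K)}$ has rank at most $K+1$ while $M^*$ has rank $K^*+1$, the Eckart--Young theorem forces $x\ge D_K:=\{\sum_{l=K+2}^{K^*+1}\sigma_l^2(M^*)\}^{1/2}$. I would then convert the hypothesis into a noise bound: using $R=4(p_{\min}\delta_{C^2})^{-1}W^2$ together with $\sigma_{K^*+1}^2(M^*)\le D_K^2$, the inequality $4(\delta_{C^2}p_{\min})^{-1}K^*R\le\sigma_{K^*+1}^2(M^*)$ rearranges to $W\le\delta_{C^2}p_{\min}D_K/(4\sqrt{K^*})$. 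Combined with the elementary fact $(K+K^*+2)^{1/2}\le 2\sqrt{K^*}$ over $0\le K\le K^*-1$, this gives $x^\star\le D_K/4$, so $f$ is decreasing on $[D_K,\infty)$ and therefore $f(x)\le f(D_K)$. Feeding the same bound on $W$ into $f(D_K)=aD_K-bD_K^2$ yields $aD_K\le\tfrac12\delta_{C^2}p_{\min}D_K^2$, whence $f(D_K)\le-\tfrac12\delta_{C^2}p_{\min}D_K^2$, which is the claimed inequality.

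I expect the only real obstacle to be the bookkeeping that calibrates the three ingredients so that the signal dominates the noise, namely verifying that the Eckart--Young floor $D_K$ and the hypothesis-driven bound on $W$ push the parabola's vertex $x^\star$ strictly to the left of $D_K$ (here to at most $D_K/4$); once that is in place the conclusion follows by monotonicity and a single substitution. Everything is deterministic given the assumed event, and the remaining constant-chasing ($(K+K^*+2)^{1/2}\le2\sqrt{K^*}$ and $\sigma_{K^*+1}^2(M^*)\le D_K^2$) is immediate.
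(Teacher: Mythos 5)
Your proof is correct and follows essentially the same route as the paper's: both apply Lemma~\ref{lemma:correct-specify-upper} with $r=K+1$, exploit the downward quadratic in $\|\hat{M}^{(K)}-M^*\|_F$, use the Eckart--Young bound $\sum_{l=K+2}^{K^*+1}\sigma_l^2(M^*)$ as the rank-deficiency floor, convert the hypothesis on $R$ into the noise bound that places the parabola's vertex left of that floor, and finish via $l(\hat{M}^{(K^*)},\YY,\Omega)\geq l(M^*,\YY,\Omega)$. The only difference is cosmetic bookkeeping (you track the vertex explicitly at $D_K/4$, while the paper phrases the same condition as $d(M^*,\mM_{K+1})$ exceeding twice the ratio of the linear to quadratic coefficient).
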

\begin{proof}[of Lemma~\ref{lemma:likelihood-bound-underselect}]
First, according to Lemma~\ref{lemma:correct-specify-upper}, $\hat{M}^{(K)}\in\mM_{K+1}$ and $K+1\leq K^*$, we have
  \begin{equation}
 \begin{split}
     &\phi\{l(\hat{M}^{(K)},Y,\Omega)-l(M^*,Y,\Omega)\}\\\leq &
(K+K^*+2)^{1/2}\big\{\|\ZZ\circ\Omega\|_2+ 2\delta_{C^2}C^2\|Q\|_2\big\}\|\hat{M}^{(K)}-\MM^*\|_F- \delta_{C^2}p_{\min}\|\hat{M}^{(K)}-\MM^*\|_F^2\\
\leq & \sup_{M\in\mM_{K+1}}\Big[2(K^*)^{1/2}\big\{\|\ZZ\circ\Omega\|_2+ 2\delta_{C^2}C^2\|Q\|_2\big\}\|M-\MM^*\|_F- \delta_{C^2}p_{\min}\|M-\MM^*\|_F^2\Big].
 \end{split}
 \end{equation}
Note that the expression inside `$\sup$' is a quadratic function in $\|M-M^*\|_F$.
Let $d(M^*,\mM_{K+1}) = \inf_{M\in \mM_{K+1}}\|M-M^*\|_F$. From properties of a quadratic function, if {$d(M^*,\mM_{K+1})\geq 2(\delta_{C^2}p_{\min})^{-1}\cdot2(K^*)^{1/2}\big\{\|\ZZ\circ\Omega\|_2+ 2\delta_{C^2}C^2\|Q\|_2\big\}$}
\begin{equation}
  \begin{split}
         &\phi\{l(\hat{M}^{(K)},Y,\Omega)-l(M^*,Y,\Omega)\}\\\leq &
(2K^*)^{1/2}\big\{\|\ZZ\circ\Omega\|_2+ 2\delta_{C^2}C^2\|Q\|_2\big\}d(M^*,_{K+1})- \delta_{C^2}p_{\min}d^2(M^*,_{K+1})\\
\leq & -\frac{1}{2}\delta_{C^2}p_{\min}d^2(M^*,\mM_{K+1}).
  \end{split}
\end{equation}
Note that $\phi\{l(\hat{M}^{(K^*)},Y,\Omega)-l(M^*,Y,\Omega)\}\geq 0$. Thus, the above inequality implies that on the event {$d(M^*,\mM_K)\geq 4(\delta_{C^2}p_{\min})^{-1}(K^*)^{1/2}\big\{\|\ZZ\circ\Omega\|_2+ 2\delta_{C^2}C^2\|Q\|_2\big\}$},
\begin{equation}\label{eq:underselct-mid-bound}
  \phi\{l(\hat{M}^{(K)},Y,\Omega)-l(\hat{M}^{(K^*)},Y,\Omega)\}\leq -\frac{1}{2}\delta_{C^2}p_{\min}d^2(M^*,\mM_K).
\end{equation}
Now we proceed to a lower bound for $d(M^*,\mM_{K+1})$. Recall the well-known fact that
$\inf\limits_{M \text{ has a rank } K+1}\|M^*-M\|^2_F=\sum_{l=K+2}^{K^*+1}\sigma_l^2(M^*)$ where $\sigma_1(M^*)\geq \cdots\geq \sigma_{K^*+1}(M^*)$ denotes the non-zero singular values of $M^*$. Thus, $d(M^*,\mM_{K+1})\geq \{\sum_{l=K+2}^{K^*+1}\sigma_l^2(M^*)\}^{1/2} \geq \sigma_{K^*+1}(M^*)$. Combine this with \eqref{eq:underselct-mid-bound}, we have
\begin{equation}
   \phi\{l(\hat{M}^{(K)},Y,\Omega)-l(\hat{M}^{(K^*)},Y,\Omega)\}\leq -\frac{1}{2}\delta_{C^2}p_{\min}\Big\{\sum_{l=K+2}^{K^*+1}\sigma_l^2(M^*)\Big\},
\end{equation}
if $4(\delta_{C^2}p_{\min})^{-1}(K^*)^{1/2}\big\{\|\ZZ\circ\Omega\|_2+ 2\delta_{C^2}C^2\|Q\|_2\big\}\leq  \sigma_{K^*+1}(M^*)$ and $K\leq K^*-1$.
We complete the proof by noting that $4(\delta_{C^2}p_{\min})^{-1}(K^*)^{1/2}\big\{\|\ZZ\circ\Omega\|_2+ 2\delta_{C^2}C^2\|Q\|_2\big\}\leq  \sigma_{K^*+1}(M^*)$ is equivalent to $4(\delta_{C^2}p_{\min})^{-1}K^* R\leq \sigma^2_{K^*+1}(M^*)$.
\end{proof}

\begin{proof}[of Lemma~\ref{lemma:underselect}]
According to Lemma~\ref{lemma:likelihood-bound-underselect}, for each $0\leq K\leq K^*-1$,
 \begin{equation}
  \begin{split}
        &-2l(\hat{\MM}^{(K)},\YY,\Omega)+v(n,N,J,K)-\big\{-2l(\hat{\MM}^{(K^*)},\YY,\Omega)+v(n,N,J,K^*)\big\}\\
         \geq & \phi^{-1}\delta_{C^2}p_{\min}\Big\{\sum_{l=K+2}^{K^*+1}\sigma_l^2(M^*)\Big\} - \sum_{l=K+1}^{K^*}u(n,N,J,l),
  \end{split}
  \end{equation}
  if $4(\delta_{C^2}p_{\min})^{-1}K^*R \leq \sigma^2_{K^*}(M^*)$. Clearly, right-hand-side of the above inequality is strictly greater than zero if $u(n,N,J,l)< \phi^{-1}\delta_{C^2}p_{\min}\sigma_{l+1}^2(M^*)$ for all $1\leq l\leq K^*$. Thus,
  \begin{equation}
  \begin{split}
  & \Pr\big(\hat{K}\geq K^*\big)\\
      \geq & \Pr\Big(\max_{1\leq K\leq K^*}\big[-2l(\hat{\MM}^{(K)},\YY,\Omega)+v(n,N,J,K)-\big\{-2l(\hat{\MM}^{(K^*)},\YY,\Omega)+v(n,N,J,K^*)\big\}\big] >0 \Big)\\
        \geq   & \Pr\big(4(\delta_{C^2}p_{\min})^{-1}K^*R \leq \sigma^2_{K^*+1}(M^*) \text{ and }
   u(n,N,J,K)< \phi^{-1}\delta_{C^2}p_{\min}\sigma^2_{K+1}(M^*) \text{ for all } 1\leq K\leq K^* \big)
  \end{split}
  \end{equation}
  The right-hand-side of the above inequality tend to one under the assumptions of the Lemma. This completes the proof.
\end{proof}

\begin{proof}[of Lemma~\ref{lemma:R-bound}]
According to Lemma~\ref{lemma:spectral-subexpnential-with-missing}, there is a universal constant $c$ such that with probability least $1-(N+J)^{-1}-(n^*)^{-1}$,
   \begin{equation}\label{eq:spectral-simp-bound1}
     \|Z\circ \Omega\|_2\leq  4(\phi\kappa_{2C^2})^{1/2}(\max_{i} n_{i\cdot}^*)^{1/2}\vee (\max_{j} n_{\cdot j}^*)^{1/2}  + c \{(\phi/C^2)\vee (\phi\kappa_{2C^2})^{1/2}\}\log n^* \log^{1/2}(N+J).
   \end{equation}
Under the asymptotic regime \eqref{eq:IC-general-asym-reg}, we have $4(\phi\kappa_{2C^2})^{1/2}=O(1)$, $\max_{i} n_{i\cdot}^* = O(p_{\max} J)$, $\max_{j} n_{\cdot j}^*=O(p_{\max}N)$, $c \{(\phi/C^2)\vee (\phi\kappa_{2C^2})^{1/2}\}=O(1)$, and $\log n^* \log^{1/2}(N+J)=O\big( (N\wedge J)^{-1/2}(n^*)^{1/2}\big)=O\big( \{p_{\max} (N\vee J)\}^{1/2} \big)$. Thus, the right-hand-side of \eqref{eq:spectral-simp-bound1} is of the order $O(\{p_{\max} (N\vee J)\}^{1/2})$ and
\begin{equation}\label{eq:spectral-simp-bound2}
  \|Z\circ \Omega\|_2=O_p(\{p_{\max} (N\vee J)\}^{1/2})
\end{equation}
as $N,J\to\infty$.
Similarly, according to Lemma~\ref{lemma:spectral-pattern},
\begin{equation}
   \|Q\|_2 \leq 4 (\max_{i} n^*_{i\cdot})^{1/2} \vee (\max_{j} n^*_{\cdot j})^{1/2}+c \log^{1/2}(N+J)
  \end{equation}
with probability at least $1-(N+J)^{-1}$. Under the the asymptotic regime \eqref{eq:IC-general-asym-reg}, the right-hand-side of the above inequality is of the order $O(\{p_{\max} (N\vee J)\}^{1/2})$, and thus
\begin{equation}\label{eq:spectral-simp-bound3}
  \|Q\|_2 =O_p(\{p_{\max} (N\vee J)\}^{1/2}).
\end{equation}
We complete the proof by combining \eqref{eq:spectral-simp-bound2}, \eqref{eq:spectral-simp-bound3}, and the definition of $R$.
\end{proof}

{
\section{Proof of Proposition~\ref{prop:lower-bound}}

Without loss of generality, assume $J\geq N$,  $N/K^*$ is an integer, and $\phi=1$. The proof can be easily extended to the other cases.

To prove the lower bound for the minimax risk, we use a local Fano's method, which is a standard tool for proving lower error bounds \citep{tsybakov2008introduction}.
Throughout the proof we use the notation $F=(F_1,\cdots, F_N)^T$, $d=(d_1,\cdots,d_J)^T$ and $A=(A_1,\cdots, A_J)^T$.

We start with constructing a local packing of as follows. First, let $F^{(0)}=C(I_{K^*},\cdots, I_{K^*})^T$. Note that here we used the assumption that $N/{K^*}$ is an integer. Also, let $d^{(0)}=(0,\cdots, 0)^T$.
Next, according to the (Gilbert-Varshanmov bound) \citep{gilbert1952comparison}, there exists a set $\mathcal{B}=\{B^{(l)}: l=1,\cdots L\}\subset \{1,-1\}^{J\times {K^*}}$ satisfying $L\geq \exp(JK^*/8)$ and
\begin{equation}\label{eq:construct}
  \sum_{j=1}^J\sum_{k=1}^{K^*} I(b_{jk}\neq b'_{jk})\geq JK^*/4
\end{equation}
for any $B,B'\in\mathcal{B}$ and $B\neq B'$.
Then, we construct a set $\mathcal{A}=\{A=\gamma B: B\in \mathcal{B}\}$ for some $\gamma$ specified in the sequel.
Now define
\begin{equation}
\begin{split}
  \mathcal{M}^* &= \{M=(m_{ij}): m_{ij}=d_j+F_i^T A_j \text{ for all } i \text{ and } j \text{ where } d=d^{(0)}, F=F^{(0)} \text{ and } A\in \mathcal{A}\}\\
  & = \{M=F^{(0)}A^T: A\in \mathcal{A}  \}
\end{split}
\end{equation}
The set $\mathcal{M}^*$ defined above has the following properties.
\begin{itemize}
  \item [(a)] $|\mathcal{M}^*|=L\geq \exp(JK^*/8)$.
  \item [(b)]  The Kullback-Leibler divergence $\max_{M,M'\in\mathcal{M}^*}KL(P_{M}\|P_{M'})\leq \kappa n^*\gamma^2$ for some constant $\kappa$, where $P_{M}$ denotes the probability measure for $(Y_{ij},\omega_{ij})_{1\leq i\leq N, 1\leq j\leq J}$ when the true parameter is $M$.
  \item [(c)]
$\|M- M'\|^2_F\geq C^2 NJ\gamma^2$ for $M,M'\in\mathcal{M}^*$ and $M\neq M'$.
\end{itemize}
Property (a) holds obviously. Property (b) holds because of the following inequalities
\begin{equation}
  \begin{split}
    KL(P_{M}\|P_{M'})
    &= \sum_i\sum_j p_{ij} \{b'(m_{ij})(m_{ij}-m'_{ij})-(b(m_{ij})-b(m'_{ij})\}\\
    &\leq p_{\max} \sum_{i}\sum_j\{b'(m_{ij})(m_{ij}-m'_{ij})-(b(m_{ij})-b(m'_{ij})\}\\
    &\leq \kappa n^*/(NJ) \|M-M'\|_F^2\\
    & = \kappa C^2 n^*/(NJ)\cdot (N/{K^*})\sum_{j}\sum_{k}(a_{jk}-a'_{jk})^2,
  \end{split}
\end{equation}
where we used the construction $M=F^{(0)}A^T$ in the last equation. Note that $|a_{jk}|=|\gamma b_{jk}|=\gamma$ for $A\in\mathcal{A}$. Thus, $(a_{jk}-a'_{jk})^2\leq 4\gamma^2$, which leads to property (b) of the set $\mathcal{M}^*$ for a possibly different $\kappa$.
Property (c) holds for the following reasons. By construction, for $M,M'\in\mathcal{M}^*$ and $M\neq M'$
\begin{equation}
  \begin{split}
    \|M-M'\|_F^2
    & = C^2 N/{K^*}\cdot \sum_j\sum_k (a_{jk}-a'_{jk})^2\\
    & = C^2(N/{K^*})\gamma^2 \sum_j\sum_k (b_{jk}-b'_{jk})^2\\
    & = C^2(N/{K^*})\gamma^2 \cdot 4 \sum_j\sum_k I(b_{jk}\neq b'_{jk})\\
    & \geq C^2(N/{K^*})\gamma^2 JK^*,
  \end{split}
\end{equation}
where the last inequality is due to \eqref{eq:construct}. Thus, property (c) holds.

Now, for an arbitrary estimator $\bar{M}$, define a new estimator $\tilde{\bar{M}}=\arg\min_{W\in \mathcal{M}^*}\|W-\bar{M} \|_F$. It is easy to see that for $M\in \mathcal{M}^* $,$
  \|\tilde{\bar{M}} - M \|_F\leq 2 \|\bar{M}-M\|_F.$ By a version of Fano's inequality, we have
\begin{equation}
  \max_{M^*\in \mathcal{M}^*}P_{M^*}\left(\tilde{\bar{M}}\neq M^*\right)
  \geq 1- \frac{\kappa n^*\gamma^2+1}{\log |\mathcal{M}^*|}\geq
  1-\frac{\kappa n^*\gamma^2+1}{JK/8}.
\end{equation}
Choose $\gamma= \kappa^{-1}(JK/n^*)^{1/2}$ for a possibly different $\kappa$, then for $JK\geq 64$, we have
\begin{equation}
  \max_{M^*\in \mathcal{W}_M}P_{M^*}\left(\tilde{\bar{M}}\neq M^*\right)
  \geq \frac{1}{2}.
\end{equation}
Furthermore, we have
\begin{equation}
\begin{split}
    &\max_{M^*\in \mathcal{M}^*}P_{M^*}\left(\|\bar{M}-M^*\|_F
    \geq (1/2)\cdot C(NJ)^{1/2}\gamma
  \right)\\
  \geq& \max_{M^*\in \mathcal{M}^*}P_{M^*}\left(\|\tilde{\bar{M}}-M^*\|_F
  \geq C
(NJ)^{1/2}\gamma
  \right)\\
  \geq & \max_{M^*\in \mathcal{M}^*}P_{M^*}\left(\tilde{\bar{M}}\neq M^*\right)\\
  \geq &\frac{1}{2}.
\end{split}
\end{equation}
Simplifying the term $(1/2)\cdot C(NJ)^{1/2}\gamma$, we arrive at
\begin{equation}
  \max_{M^*\in\mathcal{M}^*}P_{M^*} \left((NJ)^{-1/2}\|\bar{M}-M^*\|_F\geq 2^{-1}\kappa^{-1}\cdot (JK/n^*)^{1/2}
  \right)\geq \frac{1}{2}.
\end{equation}
Note that for $A\in\mathcal{A}$,
$\|A_j\|\leq \gamma\sqrt{{K^*}}= \kappa^{-1}(J(K^*)^2/n^*)^{1/2}$. Thus, for a possibly larger constant $\kappa$, we have $\|A_j\|\leq C$ under the assumption $(K^*)^2(J+N)\leq n^*$. Thus, $\mathcal{M}^*$ is a subset of the parameter space of interest. That is,
\begin{equation}
  \mathcal{M}^*\subset \mathcal{G}:= \{M=(m_{ij}): m_{ij}=d_j+F_i^TA_j, \text{ and }(\Vert F_i \Vert^2 + 1)^{\frac{1}{2}} \leq C, ((d_j)^2 + \Vert A_j \Vert^2)^{\frac{1}{2}}  \leq C, \text{ for all }i\}
\end{equation}
This further implies
\begin{equation}
    \max_{M^*\in \mathcal{G}}P_{M^*} \left(\frac{1}{\sqrt{NJ}}\|\bar{M}-M^*\|_F\geq 2^{-1}\kappa^{-1}\cdot (JK^*/n^*)^{1/2}
  \right)\geq \frac{1}{2}.
\end{equation}
This completes our proof.
}
\section{On Optimization for Joint Likelihood}\label{app:optimization}

We provide some discussions on the optimization problem~\eqref{eq:jml} for the constrained joint maximum likelihood estimator. The two reasons below explain why the solution given by an  alternating maximization algorithm typically performs well, even though \eqref{eq:jml} is a non-convex optimization problem.
First, according to the proofs of Theorems~\ref{thm:error-bound-simple} through \ref{thm:IC-general}, Theorems~\ref{thm:error-bound-simple} and \ref{thm:finite-bound} hold as long as
the estimates satisfy
$$l_K(\hat F_1,..., \hat F_N, \hat A_1, \hat d_1, ..., \hat A_J, \hat d_J) \geq l_{K^*}(F_1^*,..., F_N^*,  A_1^*, d_1^*, ..., A_J^*, d_J^*)$$
when $K\geq K^*$. In addition, for Theorems~\ref{thm:IC-simplified} and \ref{thm:IC-general} to hold, we only need
\begin{equation}\label{eq:non-convex}
l_{K^*}(\hat F_1,..., \hat F_N, \hat A_1, \hat d_1, ..., \hat A_J, \hat d_J) \geq l_{K^*}(F_1^*,..., F_N^*,  A_1^*, d_1^*, ..., A_J^*, d_J^*).
\end{equation}
It means that the number of factors can be consistently selected even if our estimate is not a global solution to \eqref{eq:jml} as long as \eqref{eq:non-convex} holds.

Second, we use good starting points when solving the optimization \eqref{eq:jml}. Specifically, under the logistic factor model for binary data, a singular-value-decomposition-based algorithm is proposed by \cite{zhang2020note} that is guaranteed
to give a consistent estimator of the model parameters.
Although this estimator is statistically less efficient than the joint-likelihood-based estimator (thus cannot be directly plugged into the likelihood to construct an information criterion), it can serve as a good starting point when solving the optimization \eqref{eq:jml}. For other models, similar singular-value-decomposition-based algorithms can also be developed.

We also discuss the choice of constraint constant $C$ which needs to be specified when computing the constrained joint maximum likelihood estimator. First of all, we point out that it is standard to impose such a constraint for low-rank matrix estimation under nonlinear models. For example, in the work of \cite{cai2013max} on 1-bit matrix completion, it is required that the max norm (i.e., the maximum value of the absolute values of entries) of underlying   low-rank matrix is smaller than a constant, which plays essentially the same role as the constant $C$ in the current work.  Second, according to our simulation study, the estimation of the model parameters and the performance of the proposed information criteria are not sensitive to the choice of $C$, as long as it is set to be sufficiently large. Given a specific dataset, we suggest to run the estimator under different values of $C$ to check its sensitivity. In practice, we suggest to start with a sufficiently large $C$, followed by a sensitivity analysis to check whether the estimator is sensitive to the current choice of $C$.

\section{Information Criteria based on Marginal Likelihood}\label{app:marginal}

We provide some discussion on the behavior of the maximum marginal likelihood when both $N$ and $J$ grow to infinity. To simplify the discussion, we assume there is no missing value and the dispersion parameter is 1, but this discussion can be generalized to the case when there are missing data and the dispersion parameter needs to be estimated. Consider a model with $K$ factors. The marginal likelihood approach assumes that the factors $F_1$, ..., $F_N$ are i.i.d. samples from a known distribution $h$. Then the marginal likelihood function takes the form
$$m_K(A,D) = \sum_{i=1}^N \log( \int \exp\left(l_i(x, A, D) \right) h(x)dx),$$
where $l_i(x, A, D) = \sum_{j=1}^J \log  g(y_{ij}|A_j, d_j, x, 1)$, $A = (A_j: j = 1, ..., J)$, $D = (d_j: j = 1, ..., J)$, and $x \in \mathbb R^K$. Let $(\hat A, \hat D)$ be the estimator based on the marginal likelihood, i.e., $(\hat A, \hat D) \in \arg\max m_K(A, D)$. Furthermore, let
$$\hat F_i  = \arg\max_{x}~~ l_i(x, \hat A, \hat D) \log(h(x)).$$

Then by the Laplace approximation \citep{huber2004estimation} and under suitable regularity conditions, we should be able to establish 
\begin{equation}\label{eq:laplace}
\begin{aligned}
m_K(\hat A,\hat D) &= \sum_{i=1}^N \sum_{j=1}^J \log  g(y_{ij}|\hat A_j, \hat d_j, \hat F_i, 1) +  \sum_{i=1}^N \log(h(\hat F_i))\\
&+ \frac{NK}{2}\log(2\pi/J) -\frac{1}{2}\sum_{i=1}^N \log\left(det(H(\hat F_i, \hat A,\hat D))\right)+ R_{N,J},
\end{aligned}
\end{equation}
where $H(\hat F_i, \hat A,\hat D)$ is the Hessian matrix of $L_i(x) = l_i(x, \hat A, \hat D)$ evaluated at $\hat F_i$ and the $R_{N,J}$ term comes from the remainder term of Laplace approximation. Note that the first term in \eqref{eq:laplace} is the dominant term that takes the same form as the joint likelihood, though $\hat A_j, \hat d_j$, and $\hat F_i$ are obtained from the marginal likelihood. {The remainder $R_{N,J}$ is a term with a smaller asymptotic order.}
 Moreover, we believe that the error bound  established in Theorem~\ref{thm:error-bound-simple} can {be extended} to $\hat M_K = (\hat d_j +  \hat F_i^T \hat A_j)_{N\times J}$ when $K^* \leq K\leq K_{max}$. Therefore, the development in this article will also be useful when developing marginal-likelihood-based information criteria for generalized latent factor models under a high-dimensional setting.

\section{Comparison with Some Related Works}
As discussed in Remark~\ref{rmk:2}, the error bound \eqref{eq:error-bound-simple} improves several recent results on low-rank matrix estimation and completion. We now summarize the comparison in Remark~\ref{rmk:2} using Table~\ref{tab:comparison} below. This comparison focuses on the error bound  \eqref{eq:error-bound-simple} when $K_{max} = K^*$ and data entries are binary and are uniformly missing.

\begin{table}[h]
\centering
\footnotesize
\begin{tabular}{lcccccc}
  \hline
    & Key setting on $M$ &$K^*$  & Error bound \\
    \hline
  Current & $\text{Rank}(M) \leq K^*$ &Can diverge  &$O_p\big[\Big\{\frac{K^* (N\vee J)}{n^*}\Big\}^{1/2}\big]$ \\
  \cite{chen2019structured} & $\text{Rank}(M) \leq K^*$ &Fixed& $O_p\big[\big\{\frac{(N\vee J)}{n^*} + \frac{NJ}{(n^*)^{3/2} }\big\}^{1/2}\big]$ \\
  \cite{bhaskar20151} &$\text{Rank}(M) \leq K^*$ &Can diverge & $O_p\big[ \frac{K^*(N\vee J)^{1/2}}{(n^*)^{1/2}}+ \frac{(N\vee J)^{3}(N\wedge J)^{1/2} (K^*)^{3/2}}{(n^*)^{2}} \big]$ \\
  \cite{ni2016optimal}  &$\text{Rank}(M) \leq K^*$ &Can diverge  & $O_p\big[\big\{\frac{K^*(N\vee J)\log(N+J)}{n^*}\big\}^{1/2}\big]$ \\
   \cite{cai2013max} &  $\Vert M\Vert_* \leq \alpha\sqrt{K^* N J}$ & Can diverge &   $O_p\big[\big\{ \frac{K^*(N\vee J)}{n^*}\big\}^{1/4}\big]$      \\
   \cite{davenport20141}&   $\Vert M\Vert_* \leq \alpha\sqrt{K^* N J}$ & Can diverge &   $O_p\big[\big\{ \frac{K^*(N\vee J)}{n^*}\big\}^{1/4}\big]$      \\
  \hline
\end{tabular}
\caption{Comparison with existing results on the recovery of $M$. Here, $\Vert \cdot \Vert_*$ denotes the matrix nuclear norm and $\alpha$ is a positive constant.}\label{tab:comparison}
\end{table}

We further compare the current development with \cite{chen2019joint} and \cite{chen2019structured} that also concern likelihood-based analysis of generalized latent factor models. We discuss the similarities and differences below.

\begin{enumerate}
  \item Model: \cite{chen2019structured} and the current work consider the same generalized latent factor model \eqref{eq:models} and \cite{chen2019joint} consider the special case  for binary data as given in Example~\ref{emp:binary}.
  \item Estimation versus selection: The current work establishes results on both the estimation of generalized latent factor model and information criteria for the selection of factors. In contrast, \cite{chen2019joint} and \cite{chen2019structured} focus on the estimation problem.
  \item Confirmatory versus exploratory setting: \cite{chen2019joint} and the current work consider an exploratory factor analysis setting, for which no prior knowledge is assumed on the factor structure. \cite{chen2019structured} focus on a confirmatory factor analysis setting though its results are also generally applicable under an exploratory setting.
  \item Setting on missingness: The current work considers a flexible setting for the missingness of data entries that allows the entries to be non-uniformly missing. In contrast, \cite{chen2019joint} and \cite{chen2019structured} consider a uniformly missing setting which can be viewed as a special case of the current setting.
  \item Optimality: Both the current work and \cite{chen2019structured} establish minimax optimality results on the estimation of generalized latent factor models. The current optimality result, which is established under a more general setting, can be viewed as an extension of that of \cite{chen2019structured}. Minimax optimality is not considered in \cite{chen2019joint}.
\end{enumerate}

In summary, the new contribution of the current paper is of twofold. First, we propose information criteria for selecting the number of factors in high-dimensional generalized latent factor models and establish conditions under which selection consistency is guaranteed. Second, we substantially extend the results on the estimation of
generalized latent factor models under a general setting where the data entries can be non-uniformly missing and the number of factors can also grow to infinity.

\section{Additional Simulation Results}\label{app:addsim}

\subsection{Additional Results for Simulation in Section~4.1}

The average running time for  one independent replication for each of the 48 simulation settings is given in Table~\ref{tab:simtime}, where the computation is run on a computer with an Intel(R) Xeon(R) CPU \@ 2.30GHz.
The computation code for our simulations can be found on the author's Github page: \url{https://github.com/yunxiaochen/JML_IC}.

\begin{table}
\centering
\begin{tabular}{r|ccc|ccc|ccc|cccccccccccc}
  \hline
  & \multicolumn{6}{c|}{$N = J$} & \multicolumn{6}{c}{$N = 5J$} \\
  \hline
 & \multicolumn{3}{c|}{S1} & \multicolumn{3}{c|}{S2} & \multicolumn{3}{c|}{S1} & \multicolumn{3}{c}{S2}\\
  \hline
Average time &M1 & M2 & M3 &M1 & M2 & M3 &M1 & M2 & M3 &M1 & M2 & M3 \\
  \hline
  $J = 100$&6&    9 &  11&   5   & 9&   11&     36 &  48&   73             &33 &  44&   67    \\
  $J = 200$& 21  & 25 &  37&  17   &22&   36&  217 & 195&  314                 &201& 175&  296 \\
  $J = 300$&   53 &  58&   86& 46   &49&   78&   509&  522&  785              &483& 486&  714 \\
  $J = 400$&106  &108 & 161& 92  & 97&  142&    966& 1144& 1423             &869& 1131& 1329 \\
  \hline
\end{tabular}
\caption{The average computation time (in seconds) for running one independent replication for each of the 48 simulation settings.}
\label{tab:simtime}
\end{table}

\subsection{Simulation under Poisson Factor Model}\label{subsec:poisson}
We further provide a simulation study under the Poisson factor model as given in Example~\ref{emp:poisson}.
Similar to the simulation study in Section~4.1, we consider the same factor strength settings S1 and S2, and the
same missing data settings M1-M3. Again, we consider two relationships between $N$ and $J$, including $N = J$ and $N = 5J$. We consider $J = 100, 200, 300, 400$. Again, we let $K^* = 3$ and the true model parameters be generated the similarly as the simulation study in Section~4.1. More precisely, under the setting S1,
the true parameters $d_j^*$, $a_{j1}^*$, ..., $a_{j3}^*$ are generated by sampling independently from the uniform distribution over the interval $[-1,1]$ and the true factor values are generated $f_{i1}^*$, ..., $f_{i3}^*$ are
generated by sampling independently from the uniform distribution over the interval $[-1,1]$. Under the setting S2, $f_{i3}^*$ is generated from the uniform distribution over the interval $[-0.4,0.4]$ and the rest of the parameters are generated the same as those in S1. We use the proposed JIC to select $K$ from the candidate set $\{1,2,3,4, 5\}$ and the constraint constant $C$ in \eqref{eq:jml} is set to be 3. The true model parameters satisfy this constraint. There are 48 simulation settings in total and 100 independent replications are run for each setting.
Figure~\ref{fig:merr} below shows the value of
$\max_{3\leq K\leq 5} \{(NJ)^{-1/2} \|\hat{M}^{(K)}-M^* \|_F \}$ under different settings.   Table~\ref{tab:tabF1}
shows the accuracy on determining the number of factors. Finally, Table~\ref{tab:simtime2} gives the average
 average running time for  one independent replication for each of the 48 simulation settings.

\begin{figure}[h]
  \centering
  \begin{subfigure}[b]{0.3\textwidth}
         \centering
         \includegraphics[width=\textwidth]{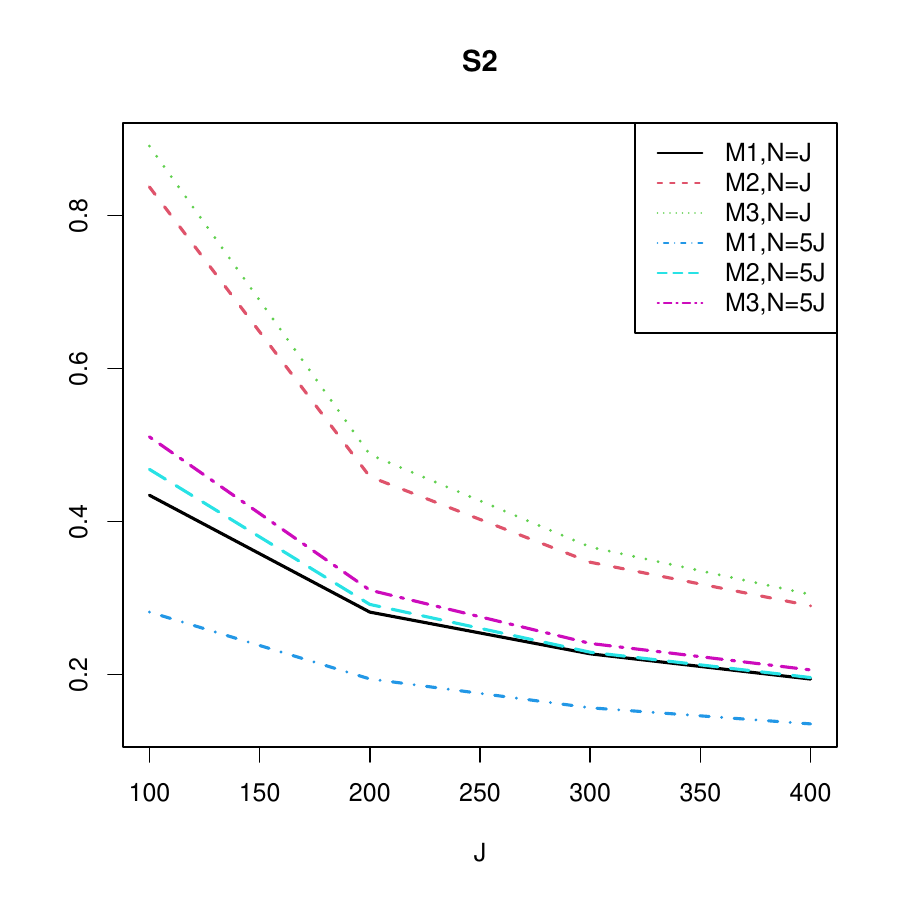}
         \label{fig:y equals x}
     \end{subfigure}
     \begin{subfigure}[b]{0.3\textwidth}
         \centering
         \includegraphics[width=\textwidth]{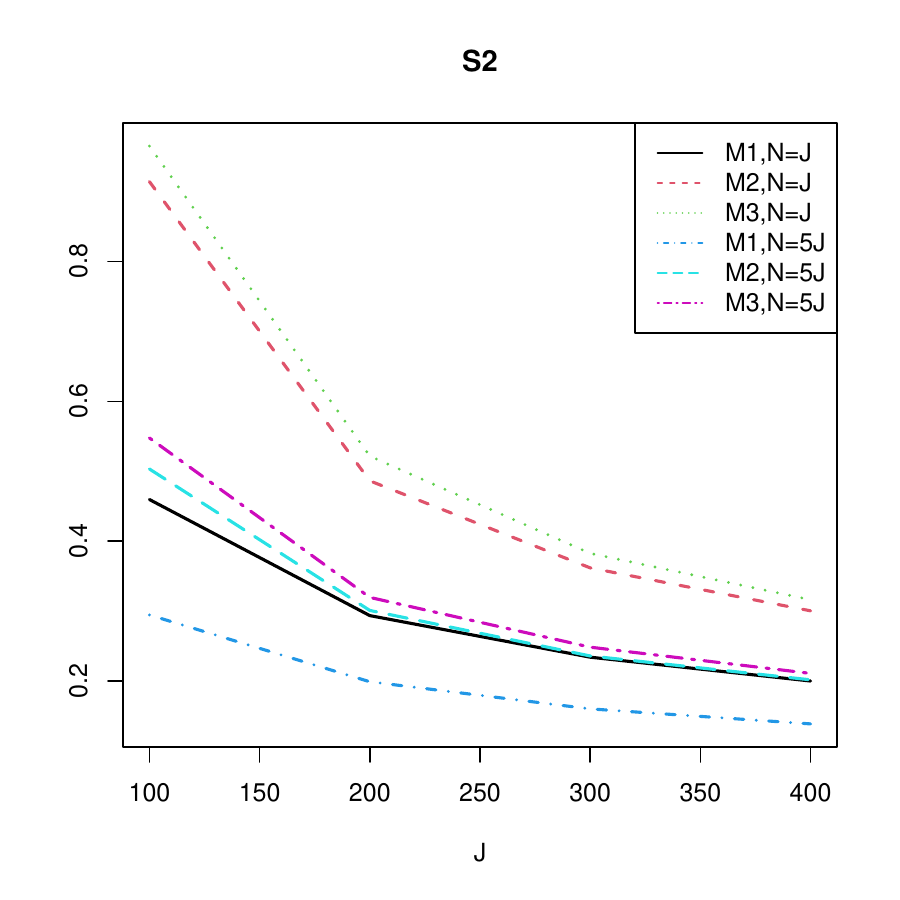}
         \label{fig:three sin x}
     \end{subfigure}
  \caption{The loss $\max_{3\leq K\leq 5} \{(NJ)^{-1/2} \|\hat{M}^{(K)}-M^* \|_F \}$ for the recovery of the low-rank matrix $M^*$, where each point is the mean loss calculated by averaging over 100 independent replications. Panels (a) and (b) show the results under the two different factor strength settings, S1 and S2, respectively.}\label{fig:merr}
\end{figure}

\begin{table}
\centering
\begin{tabular}{r|ccc|ccc|ccc|cccccccccccc}
  \hline
  & \multicolumn{6}{c|}{$N = J$} & \multicolumn{6}{c}{$N = 5J$} \\
  \hline
 & \multicolumn{3}{c|}{S1} & \multicolumn{3}{c|}{S2} & \multicolumn{3}{c|}{S1} & \multicolumn{3}{c}{S2}\\
  \hline
 Under-selection &M1 & M2 & M3 &M1 & M2 & M3 &M1 & M2 & M3 &M1 & M2 & M3 \\
  \hline
  $J = 100$& 0&0&0& 58 & 40&  33& 0&0&0&100& 100& 100 \\
  $J = 200$& 0&0&0& 3  &47 & 49 &0&0&0&12& 100& 100 \\
  $J = 300$& 0&0&0&0   &2  & 3& 0&0&0& 0&  87&  83\\
  $J = 400$&0&0&0& 0   &0  & 0 & 0&0&0& 0&   2&   2 \\
  \hline
   Over-selection&  \\
   \hline
  $J = 100$& 0&19&13& 0&19&9& 0&0&0&0&0&0 \\
  $J = 200$& 0&0&0& 0&0&0& 0&0&0&0&0&0 \\
  $J = 300$& 0&0&0& 0&0&0& 0&0&0& 0&0&0\\
  $J = 400$& 0&0&0&  0&0&0& 0&0&0&0&0&0 \\
  \hline
\end{tabular}
\caption{The number of times that the true number of factors is under- or over-selected selected among 100 independent replications under each of the 48 simulation settings.}
\label{tab:tabF1}
\end{table}

           \begin{table}
\centering
\begin{tabular}{r|ccc|ccc|ccc|cccccccccccc}
  \hline
  & \multicolumn{6}{c|}{$N = J$} & \multicolumn{6}{c}{$N = 5J$} \\
  \hline
 & \multicolumn{3}{c|}{S1} & \multicolumn{3}{c|}{S2} & \multicolumn{3}{c|}{S1} & \multicolumn{3}{c}{S2}\\
  \hline
Average time &M1 & M2 & M3 &M1 & M2 & M3 &M1 & M2 & M3 &M1 & M2 & M3 \\
  \hline
  $J = 100$&  1 &  1  & 1 &  1&   1&   1&      12  & 6&   7 & 11&   6&   7 \\
  $J = 200$&  10&   5 &  5&    9  & 5 &  5 &    79&  40&  43 & 73 & 35&  39  \\
  $J = 300$&  28&  14 & 16&  25  &13 & 14 &   249 &122 &125 &  222& 107& 110    \\
  $J = 400$&  60&  30 & 32&   53  &28&  29&   536& 267 &278 &475 &229& 242\\
  \hline
\end{tabular}
\caption{The average computation time (in seconds) for running one independent replication for each of the 48 simulation settings.}
\label{tab:simtime2}
\end{table}

\subsection{A Scree Plot Example}

Scree plots are a widely used tool for selecting the number of factors in factor analysis \citep{cattell1966scree}. A scree plot displays the eigenvalues of the covariance matrix of data in a downward curve, ordering the eigenvalues from largest to smallest. The number of factors is then determined by finding the ``elbow" of the graph. The ``elbow" is the eigenvalue where the eigenvalues seem to level off and the number of factors is determined by the number of eigenvalues that are greater than the elbow. This approach typically works well for data following a linear factor model. This is because, under a linear factor model, the covariance matrix of data is approximately a low-rank matrix plus a diagonal matrix, where the low-rank part drives the ``elbow" phenomenon. When data are generated from a nonlinear factor model, such as the logistic or Poisson factor models considered in the current work, the factor structure of data cannot be fully characterized by the covariance matrix. In particular, the covariance matrix cannot be approximated by a low-rank matrix plus a diagonal matrix. As a result, the elbow of the scree plot may no longer correspond to the number of factors. We provide a simulated example to illustrate this point. Figure~\ref{fig:figF1} shows the scree plot for data generated from a Poisson factor model under a setting when $N = J = 200$, $K^* = 3$, and there are no missing entries.
Based on the scree plot, one may tend to choose seven or eight factors, which is larger than the true number of factors. 

\begin{figure}[h]
  \centering
  \includegraphics[scale = 0.5]{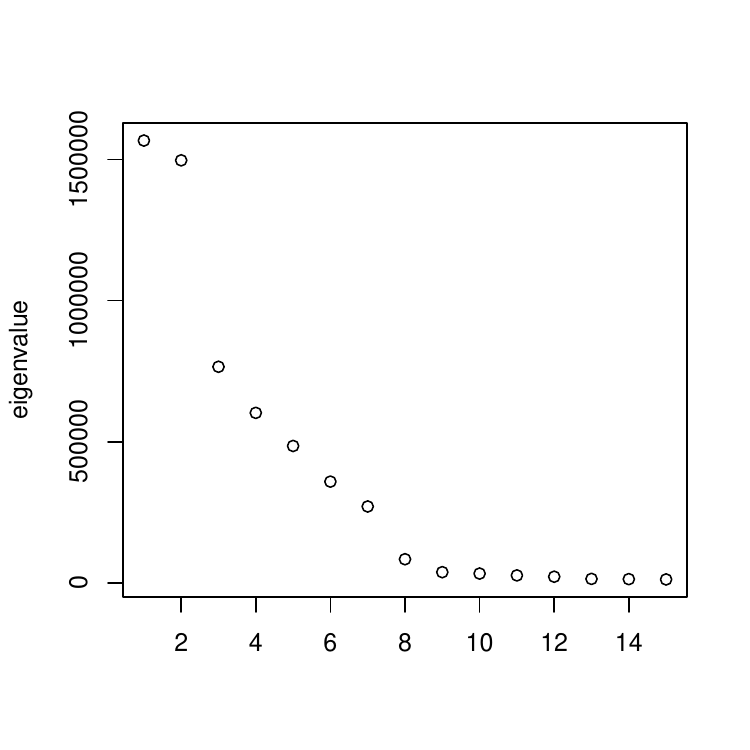}
  \caption{The scree plot for data that are generated from a Poisson factor model with three factors}\label{fig:figF1}
\end{figure}

\subsection{Comparison with \cite{bai2018consistency}}

We now compare the proposed JIC with the method proposed in \cite{bai2018consistency} via a simulation study. In this study, data are generated from a linear factor model, with $N = 2J$. More specifically, $y_{ij}$ follows a normal distribution with mean $d_j + A_j^T F_i$ and variance 1, so that the data follow a spike covariance structure model assumed in \cite{bai2018consistency}, with the eigenvalues of the covariance matrix of $(y_{i1}, ..., y_{iJ})^T$ satisfying $\lambda_{K^*} > \lambda_{K^*+1} = \cdots =\lambda_{J} = 1$. Again, we let $K^* = 3$ and the true model parameters be generated under the setting S1  the simulation study in Section~4.1, where the three factors are of the same strength. More precisely,
the true parameters $d_j^*$, $a_{j1}^*$, ..., $a_{j3}^*$ are generated by sampling independently from the uniform distribution over the interval $[-2,2]$ and the true factor values are generated $f_{i1}^*$, ..., $f_{i3}^*$ are
generated by sampling independently from the uniform distribution over the interval $[-2,2]$. We consider $J = 10, 20, ..., 50$ and $N = 2J$. Note that under this linear factor model with no missing data and assuming that the variance of $y_{ij}$ is known to be 1, then the proposed JIC is the same as the $PC_{p3}$ criterion proposed in \cite{bai2002determining}.

We use the proposed JIC to select $K$ from the candidate set $\{1,2,3,4, 5\}$ and the constraint constant $C$ in \eqref{eq:jml} is set to be 5. In addition, we also use the AIC and BIC proposed in \cite{bai2018consistency} to select $K$ from the candidate set $\{1,2,3,4, 5\}$. The results are given in Table~\ref{tab:tabF2}. As we can see, all three information criteria become more accurate when $N$ and $J$ simultaneously grow. Specifically, the proposed JIC and the BIC in \cite{bai2018consistency} perform similarly. When $J \geq 20$ and $N = 2J$, both methods correctly identify the true number of factors all the time. When $J = 10$ and $N = 20$, both the JIC and the BIC are  correct 92\% of the times, though the two methods are slightly different in the numbers of over- and under-selections. Finally, consistent with the observations in \cite{bai2018consistency}, the AIC is less accurate and tends to over-select.

\begin{table}
\centering
\begin{tabular}{r|ccc|ccc}
  \hline
  & \multicolumn{3}{c|}{Under-selection} &\multicolumn{3}{c}{Over-selection}   \\
  \hline
 &JIC & BIC & AIC &JIC & BIC & AIC \\
  \hline
  $J = 10$&2&4&2&6&4&31\\
  $J = 20$&0&0&0&0&0&18\\
  $J = 30$&0&0&0&0&0&7 \\
  $J = 40$&0&0&0&0&0&3 \\
  $J = 50$&0&0&0&0&0&2 \\
  \hline
\end{tabular}
\caption{The number of times that the true number of factors is under- or over-selected selected among 100 independent replications under each of the 5 simulation settings, under a linear factor model with a spike covariance structure.}
\label{tab:tabF2}
\end{table}

\section{Additional Results for Real Data Analysis}\label{app:real}

In what follows, we provide additional results for the real data analysis. In Tables~\ref{tab:3factor_full} and \ref{tab:corr_3factor}, we show the loading matrix and the sample covariance matrix for the estimated factor scores, after applying the oblimin rotation. Note that the items have been reordered, with items 1-32, 33-55, and 56-79 designed to measure the  psychoticism, extraversion, and neuroticism traits, respectively. The content of the items can be found in \cite{eysenck1985revised}. Note that our data have been pre-processed so that the negatively worded items are reversely scored. As we can see, items 1-32, 33-55, and 56-79 tend to have high loadings on F2, F1, and F3, respectively. According to Table \ref{tab:corr_3factor}, the correlations between the three estimated factors are relatively small, suggesting that the three factors tend to be uncorrelated.

\begin{table}[h]
  \centering
  \footnotesize
  \begin{tabular}{r|ccc}
    \hline
   \multicolumn{1}{r|}{Item } & \multicolumn{1}{c}{F1} & \multicolumn{1}{c}{F2} & \multicolumn{1}{c}{F3}\\
\hline
\textbf{1 }      &	0.31&	2.33&	0.42	\\
\textbf{2 }      &	0.34&	1.37&	-0.11	\\
\textbf{3 }      &	0.53&	1.18&	0.49	\\
\textbf{4 }      &	0.27&	1.47&	0.79	\\
\textbf{5 }      &	0.89&	1.37&	0.03	\\
\textbf{6 }      &	0.44&	1.11&	0.23	\\
\textbf{7 }      &	-0.25&	1.99&	0.04	\\
\textbf{8 }      &	0.35&	0.83&	-0.23	\\
\textbf{9 }      &	-0.58&	1.16&	0.50	\\
\textbf{10}      &	-0.04&	1.59&	0.71	\\
\textbf{11}      &	0.22&	0.85&	-0.10	\\
\textbf{12}      &	0.03&	1.78&	0.36	\\
\textbf{13}      &	0.03&	0.45&	0.50	\\
\textbf{14}      &	0.92&	0.95&	0.27	\\
\textbf{15}      &	-0.15&	1.04&	-0.97	\\
\textbf{16}      &	0.55&	1.13&	-0.53	\\
\textbf{17}      &	0.08&	0.63&	-0.01	\\
\textbf{18}      &	-0.06&	0.93&	-0.35	\\
\textbf{19}      &	0.13&	0.58&	-0.31	\\
\textbf{20}      &	0.08&	1.78&	-0.22	\\
\textbf{21}      &	-0.50&	2.37&	-0.63	\\
\textbf{22}      &	-0.49&	2.17&	-0.64	\\
\textbf{23}      &	-0.54&	1.55&	0.02	\\
\textbf{24}      &	0.23&	1.15&	-0.47	\\
\textbf{25}      &	0.18&	0.77&	-0.06	\\
\textbf{26}      &	-0.35&	1.15&	0.10	\\
\textbf{27}      &	0.44&	1.85&	0.13	\\
\textbf{28}      &	0.95&	1.02&	0.38	\\
\textbf{29}      &	-0.16&	0.50&	0.45	\\
\textbf{30}      &	0.16&	1.31&	-0.23	\\
\textbf{31}      &	-0.08&	1.25&	-0.05	\\
\textbf{32}      &	-0.18&	0.58&	-0.25	\\
\hline				
\textbf{33}      &	0.33&	-0.17&	-0.24	\\
\textbf{34}      &	2.75&	-0.19&	0.48	\\
\textbf{35}      &	3.61&	-0.31&	-0.08	\\
\textbf{36}      &	2.05&	0.08&	-0.10	\\
\textbf{37}      &	2.08&	-0.34&	-0.41	\\
\textbf{38}      &	1.59&	0.03&	0.03	\\
\textbf{39}      &	1.97&	-0.77&	-0.44	\\
\textbf{40}      &	1.00&	0.18&	-0.58	\\
\hline
\end{tabular}
\quad
  \begin{tabular}{c|ccc}
    \hline
   \multicolumn{1}{c|}{Item } & \multicolumn{1}{c}{F1} & \multicolumn{1}{c}{F2} & \multicolumn{1}{c}{F3} \\
\hline
\textbf{41}       &	1.84&	-0.23&	-0.09	\\
\textbf{42}       &	2.98&	0.36&	-0.07	\\
\textbf{43}       &	0.91&	-0.07&	-0.05	\\
\textbf{44}       &	2.59&	-0.98&	0.15	\\
\textbf{45}       &	1.19&	0.96&	0.65	\\
\textbf{46}       &	0.49&	-0.02&	-0.10	\\
\textbf{47}       &	0.79&	0.36&	-0.33	\\
\textbf{48}       &	0.93&	0.59&	0.18	\\
\textbf{49}       &	0.43&	-0.02&	0.11	\\
\textbf{50}       &	2.59&	-0.01&	-0.12	\\
\textbf{51}       &	1.92&	-0.12&	0.00	\\
\textbf{52}       &	3.78&	-0.02&	0.10	\\
\textbf{53}       &	3.79&	0.54&	-0.16	\\
\textbf{54}       &	1.81&	-0.18&	-0.01	\\
\textbf{55}       &	2.73&	0.08&	-0.08	\\
\hline				
\textbf{56}       &	0.34&	0.73&	2.29	\\
\textbf{57}       &	0.13&	0.41&	1.58	\\
\textbf{58}       &	0.37&	-1.10&	2.13	\\
\textbf{59}       &	0.01&	0.67&	1.64	\\
\textbf{60}       &	-0.01&	-0.18&	1.78	\\
\textbf{61}       &	0.01&	0.45&	2.10	\\
\textbf{62}       &	0.39&	-0.02&	1.68	\\
\textbf{63}       &	-0.47&	0.11&	2.10	\\
\textbf{64}       &	-0.35&	-0.54&	2.84	\\
\textbf{65}       &	-0.09&	-0.18&	1.38	\\
\textbf{66}       &	-0.23&	0.49&	1.91	\\
\textbf{67}       &	0.13&	-0.07&	0.88	\\
\textbf{68}       &	0.04&	0.34&	0.59	\\
\textbf{69}       &	0.16&	0.40&	1.25	\\
\textbf{70}       &	0.04&	0.61&	1.36	\\
\textbf{71}       &	0.71&	-0.16&	1.17	\\
\textbf{72}       &	-0.11&	0.73&	0.77	\\
\textbf{73}       &	-0.28&	-0.58&	2.25	\\
\textbf{74}       &	-0.23&	0.30&	2.02	\\
\textbf{75}       &	-0.17&	0.70&	1.55	\\
\textbf{76}       &	-0.26&	-0.42&	1.81	\\
\textbf{77}       &	0.85&	0.38&	1.48	\\
\textbf{78}       &	0.31&	0.08&	1.32	\\
\textbf{79}       &	0.45&	0.53&	1.00	\\
&&&\\
\hline
\end{tabular}
\caption{Estimated loading matrix for the three-factor model after applying the oblimin rotation. }\label{tab:3factor_full}
\end{table}

\begin{table}[h]
  \centering
  \begin{tabular}{cccc}
    \hline
  &F1 &F2 & F3\\
  \hline
  F1 &  1.00 &-0.03& -0.19\\
  F2 &  -0.03& 1.00& -0.02\\
  F3 & -0.19&-0.02&  1.00\\
    \hline
  \end{tabular}
  \caption{The sample covariance matrix for the estimated factor scores, under the three-factor model after applying the oblimin rotation. Note that the model parameters have been rescaled, so that the sample variance for each factor is one. }\label{tab:corr_3factor}
\end{table}

We further provide results for the two- and four-factor models, whose JIC values are also relatively small. These results may provide us further insights about the latent structure of this personality inventory. Tables~\ref{tab:2factor_full} and \ref{tab:4factor_full} provide the the loading matrices for the two models, respectively,
after applying the oblimin rotation. Moreover, Tables \ref{tab:corr_2factor} and \ref{tab:corr_4factor}
show the sample covariance matrices for the estimated factor scores, from the two models, respectively.
According to Table~\ref{tab:2factor_full}, the items that are designed to measure the extraversion trait tend to
have high loadings for the first factor and items designed to measure neuroticism tend to have high loadings for the second factor, while most items designed to measure psychoticism have small loadings for both factors.
These results suggest that the psychoticism factor may not be captured by the two-factor model.

From the loading structure given in Table~\ref{tab:4factor_full}, the extracted factors F4, F2, and F3 tend to correspond to the psychoticism, extraversion, and neuroticism traits, respectively. In addition, most items have small loadings on F1, except for items 14. ``Do you stop to think things over before doing anything?", 28. ``Do you generally ‘look before you leap’?", 45. ``Have people said that you sometimes act too rashly?", and 48. ``Do you often make decisions on the spur of the moment?", where items 14 and 28 are negatively worded and thus reversely scored. It seems a minor factor about impulsive decision.


\begin{table}[h]
  \centering
  \footnotesize
  \begin{tabular}{r|ccc}
    \hline
   \multicolumn{1}{r|}{Item } & \multicolumn{1}{c}{F1} & \multicolumn{1}{c}{F2} \\
\hline
\textbf{1 }      &	0.59&	0.80	\\
\textbf{2 }      &	0.46&	0.17	\\
\textbf{3 }      &	0.72&	0.78	\\
\textbf{4 }      &	0.48&	1.01	\\
\textbf{5 }      &	1.04&	0.38	\\
\textbf{6 }      &	0.55&	0.39	\\
\textbf{7 }      &	0.16&	0.40	\\
\textbf{8 }      &	0.45&	0.01	\\
\textbf{9 }      &	-0.36&	0.68	\\
\textbf{10}      &	0.22&	0.89	\\
\textbf{11}      &	0.38&	0.11	\\
\textbf{12}      &	0.45&	0.68	\\
\textbf{13}      &	0.19&	0.64	\\
\textbf{14}      &	0.98&	0.48	\\
\textbf{15}      &	0.11&	-0.53	\\
\textbf{16}      &	0.73&	-0.20	\\
\textbf{17}      &	0.25&	0.18	\\
\textbf{18}      &	0.19&	-0.10	\\
\textbf{19}      &	0.27&	-0.11	\\
\textbf{20}      &	0.32&	0.15	\\
\textbf{21}      &	0.11&	0.04	\\
\textbf{22}      &	0.12&	0.02	\\
\textbf{23}      &	-0.14&	0.39	\\
\textbf{24}      &	0.43&	-0.13	\\
\textbf{25}      &	0.35&	0.12	\\
\textbf{26}      &	-0.11&	0.32	\\
\textbf{27}      &	0.71&	0.55	\\
\textbf{28}      &	1.06&	0.63	\\
\textbf{29}      &	-0.02&	0.54	\\
\textbf{30}      &	0.35&	0.09	\\
\textbf{31}      &	0.14&	0.21	\\
\textbf{32}      &	-0.03&	-0.09	\\
\hline			
\textbf{33}      &	0.24&	-0.28	\\
\textbf{34}      &	2.56&	0.39	\\
\textbf{35}      &	3.29&	-0.14	\\
\textbf{36}      &	2.12&	-0.07	\\
\textbf{37}      &	1.98&	-0.50	\\
\textbf{38}      &	1.60&	0.06	\\
\textbf{39}      &	1.62&	-0.59	\\
\textbf{40}      &	1.05&	-0.48	\\
\hline
\end{tabular}
\quad
  \begin{tabular}{c|ccc}
    \hline
   \multicolumn{1}{c|}{Item } & \multicolumn{1}{c}{F1} & \multicolumn{1}{c}{F2}   \\
\hline
\textbf{41}       &	1.69&	-0.13	\\
\textbf{42}       &	2.54&	-0.01	\\
\textbf{43}       &	0.85&	-0.08	\\
\textbf{44}       &	2.18&	-0.14	\\
\textbf{45}       &	1.24&	0.81	\\
\textbf{46}       &	0.46&	-0.08	\\
\textbf{47}       &	0.86&	-0.22	\\
\textbf{48}       &	1.04&	0.31	\\
\textbf{49}       &	0.40&	0.10	\\
\textbf{50}       &	2.38&	-0.10	\\
\textbf{51}       &	1.88&	-0.02	\\
\textbf{52}       &	3.51&	0.10	\\
\textbf{53}       &	3.79&	0.03	\\
\textbf{54}       &	1.74&	-0.06	\\
\textbf{55}       &	2.85&	-0.05	\\
\hline			
\textbf{56}       &	0.43&	2.38	\\
\textbf{57}       &	0.15&	1.62	\\
\textbf{58}       &	-0.09&	1.33	\\
\textbf{59}       &	0.12&	1.69	\\
\textbf{60}       &	-0.14&	1.54	\\
\textbf{61}       &	0.08&	2.10	\\
\textbf{62}       &	0.24&	1.46	\\
\textbf{63}       &	-0.55&	1.93	\\
\textbf{64}       &	-0.54&	2.07	\\
\textbf{65}       &	-0.16&	1.17	\\
\textbf{66}       &	-0.18&	1.99	\\
\textbf{67}       &	0.06&	0.79	\\
\textbf{68}       &	0.10&	0.63	\\
\textbf{69}       &	0.22&	1.34	\\
\textbf{70}       &	0.09&	1.44	\\
\textbf{71}       &	0.53&	0.99	\\
\textbf{72}       &	0.02&	0.88	\\
\textbf{73}       &	-0.48&	1.65	\\
\textbf{74}       &	-0.24&	1.97	\\
\textbf{75}       &	-0.07&	1.59	\\
\textbf{76}       &	-0.40&	1.48	\\
\textbf{77}       &	0.86&	1.52	\\
\textbf{78}       &	0.22&	1.26	\\
\textbf{79}       &	0.51&	1.13	\\
&&&\\
\hline
\end{tabular}
\caption{Estimated loading matrix for the two-factor model after applying the oblimin rotation. }\label{tab:2factor_full}
\end{table}

\begin{table}[h]
  \centering
  \begin{tabular}{cccc}
    \hline
  &F1 &F2  \\
  \hline
  F1 &  1.00 &-0.22 \\
  F2 &  -0.22& 1.00 \\
    \hline
  \end{tabular}
  \caption{The sample covariance matrix for the estimated factor scores, under the two-factor model after applying the oblimin rotation.   }\label{tab:corr_2factor}
\end{table}

\begin{table}[h]
  \centering
  \footnotesize
  \begin{tabular}{r|cccc}
    \hline
   \multicolumn{1}{r|}{Item } & \multicolumn{1}{c}{F1} & \multicolumn{1}{c}{F2} & \multicolumn{1}{c}{F3} & \multicolumn{1}{c}{F4} \\
\hline
\textbf{1 }      &		0.48&	0.39&	0.48&	2.31	\\
\textbf{2 }      &		0.50&	0.25&	-0.09&	1.31	\\
\textbf{3 }      &		0.31&	0.56&	0.54&	1.43	\\
\textbf{4 }      &		0.22&	0.34&	0.86&	1.49	\\
\textbf{5 }      &		0.67&	0.74&	0.02&	1.17	\\
\textbf{6 }      &		0.53&	0.25&	0.20&	0.92	\\
\textbf{7 }      &		-0.19&	0.12&	0.17&	2.68	\\
\textbf{8 }      &		0.06&	0.41&	-0.20&	0.96	\\
\textbf{9 }      &		0.00&	-0.45&	0.53&	1.36	\\
\textbf{10}      &		0.07&	0.08&	0.81&	1.89	\\
\textbf{11}      &		0.37&	0.08&	-0.12&	0.73	\\
\textbf{12}      &		-0.12&	0.30&	0.46&	2.36	\\
\textbf{13}      &		0.24&	-0.01&	0.50&	0.38	\\
\textbf{14}      &		4.87&	-0.27&	-0.08&	0.32	\\
\textbf{15}      &		0.30&	-0.26&	-1.00&	1.03	\\
\textbf{16}      &		0.98&	0.15&	-0.67&	0.73	\\
\textbf{17}      &		0.43&	-0.07&	-0.03&	0.53	\\
\textbf{18}      &		0.24&	-0.05&	-0.47&	1.23	\\
\textbf{19}      &		-0.06&	0.24&	-0.27&	0.69	\\
\textbf{20}      &		0.07&	0.22&	-0.17&	2.06	\\
\textbf{21}      &		-0.47&	0.07&	-0.44&	3.15	\\
\textbf{22}      &		-0.38&	-0.03&	-0.51&	3.08	\\
\textbf{23}      &		0.30&	-0.52&	0.08&	1.55	\\
\textbf{24}      &		0.23&	0.21&	-0.46&	1.18	\\
\textbf{25}      &		0.12&	0.15&	-0.06&	0.78	\\
\textbf{26}      &		-0.07&	-0.24&	0.14&	1.35	\\
\textbf{27}      &		0.84&	0.23&	0.11&	1.51	\\
\textbf{28}      &		5.39&	-0.19&	0.11&	0.35	\\
\textbf{29}      &		0.11&	-0.13&	0.47&	0.48	\\
\textbf{30}      &		0.13&	0.25&	-0.18&	1.52	\\
\textbf{31}      &		-0.07&	0.15&	0.04&	1.55	\\
\textbf{32}      &		-0.18&	-0.01&	-0.20&	0.79	\\
\hline						
\textbf{33}      &		-0.24&	0.46&	-0.20&	-0.12	\\
\textbf{34}      &		0.46&	2.55&	0.45&	-0.44	\\
\textbf{35}      &		0.12&	3.67&	-0.05&	-0.48	\\
\textbf{36}      &		-0.06&	2.19&	-0.05&	0.13	\\
\textbf{37}      &		-0.85&	2.48&	-0.25&	0.00	\\
\textbf{38}      &		-0.07&	1.66&	0.06&	0.04	\\
\textbf{39}      &		-0.27&	2.08&	-0.42&	-0.71	\\
\textbf{40}      &		0.71&	0.71&	-0.71&	-0.13	\\
\hline
\end{tabular}
\quad
  \begin{tabular}{c|cccc}
    \hline
   \multicolumn{1}{r|}{Item } & \multicolumn{1}{c}{F1} & \multicolumn{1}{c}{F2} & \multicolumn{1}{c}{F3} & \multicolumn{1}{c}{F4} \\
\hline
\textbf{41}       &	-0.19&	1.97&	-0.06&	-0.11	\\
\textbf{42}       &	-0.29&	3.96&	0.03&	0.72	\\
\textbf{43}       &	-0.20&	1.05&	0.00&	0.00	\\
\textbf{44}       &	-0.48&	2.76&	0.22&	-0.68	\\
\textbf{45}       &	2.25&	0.60&	0.58&	0.32	\\
\textbf{46}       &	0.10&	0.48&	-0.10&	-0.04	\\
\textbf{47}       &	0.61&	0.57&	-0.39&	0.16	\\
\textbf{48}       &	5.43&	0.32&	-0.14&	-0.69	\\
\textbf{49}       &	0.36&	0.31&	0.09&	-0.25	\\
\textbf{50}       &	-0.32&	3.33&	-0.05&	0.22	\\
\textbf{51}       &	0.35&	1.78&	-0.04&	-0.30	\\
\textbf{52}       &	0.49&	3.55&	0.08&	-0.37	\\
\textbf{53}       &	0.24&	3.96&	-0.12&	0.52	\\
\textbf{54}       &	-0.26&	1.97&	0.04&	-0.06	\\
\textbf{55}       &	0.23&	2.61&	-0.08&	-0.02	\\
\hline					
\textbf{56}       &	0.84&	0.07&	2.20&	0.33	\\
\textbf{57}       &	0.57&	-0.08&	1.50&	0.14	\\
\textbf{58}       &	-0.15&	0.47&	2.14&	-1.22	\\
\textbf{59}       &	0.53&	-0.13&	1.65&	0.43	\\
\textbf{60}       &	0.26&	-0.08&	1.74&	-0.41	\\
\textbf{61}       &	0.42&	-0.09&	2.05&	0.24	\\
\textbf{62}       &	-0.01&	0.44&	1.73&	-0.05	\\
\textbf{63}       &	-0.42&	-0.26&	2.48&	0.35	\\
\textbf{64}       &	-0.65&	-0.02&	3.18&	-0.36	\\
\textbf{65}       &	-0.27&	0.10&	1.52&	-0.05	\\
\textbf{66}       &	-0.05&	-0.12&	2.13&	0.60	\\
\textbf{67}       &	-0.17&	0.25&	0.94&	0.03	\\
\textbf{68}       &	0.23&	-0.02&	0.60&	0.26	\\
\textbf{69}       &	0.61&	-0.10&	1.19&	0.17	\\
\textbf{70}       &	0.56&	-0.16&	1.32&	0.41	\\
\textbf{71}       &	0.11&	0.69&	1.16&	-0.25	\\
\textbf{72}       &	0.11&	-0.09&	0.83&	0.72	\\
\textbf{73}       &	-0.05&	-0.29&	2.25&	-0.63	\\
\textbf{74}       &	-0.29&	-0.01&	2.42&	0.47	\\
\textbf{75}       &	0.20&	-0.17&	1.57&	0.63	\\
\textbf{76}       &	0.32&	-0.40&	1.80&	-0.73	\\
\textbf{77}       &	0.94&	0.46&	1.41&	-0.07	\\
\textbf{78}       &	0.57&	0.08&	1.29&	-0.26	\\
\textbf{79}       &	0.53&	0.29&	0.99&	0.36	\\
&&&\\
\hline
\end{tabular}
\caption{Estimated loading matrix for the four-factor model after applying the oblimin rotation. }\label{tab:4factor_full}
\end{table}

\begin{table}[h]
  \centering
  \begin{tabular}{ccccc}
    \hline
  &F1 &F2 & F3  & F4  \\
  \hline
F1& 1.00  &0.22 &-0.07 & 0.07\\
F2& 0.22  &1.00 &-0.20 &-0.03\\
F3& -0.07 &-0.20&  1.00&  0.01\\
F4& 0.07  &-0.03&  0.01&  1.00\\
    \hline
  \end{tabular}
  \caption{The sample covariance matrix for the estimated factor scores, under the four-factor model after applying the oblimin rotation.   }\label{tab:corr_4factor}
\end{table}

We compare the proposed method with the classical Akaike information criterion (AIC) and Bayesian information criterion (BIC) calculated based on the marginal likelihood function, where the latent factors are treated as random variables. More specifically, the latent factors are assumed to follow a multivariate normal distribution, in the calculation of the marginal likelihood. The marginal maximum likelihood estimator is computed using the R package ``mirt" \citep{chalmers2012mirt}, where the computation for the marginal maximum likelihood estimator is  carried out using an Expectation-Maximization (EM) algorithm. The EM algorithm is very time-consuming when only involving a moderate number of factors \citep{reckase2009multidimensional}.
The AIC and BIC values for the one- through five-factor models are given in Table~\ref{tab:marglik} below. In calculating the AIC and BIC values,
the number of parameters for a $K$-factor model is
$J(K+1) - K(K-1)/2$, recalling that $J$ is the number of items.
The three-factor model fits best according to the BIC value, which is consistent with the selection based on the proposed JIC. On the other hand, AIC selects the four-factor model.
Note that under the classical asymptotic regime and the true model is one of the candidate models, the BIC guarantees consistency for model selection, while the AIC tends to over-select \citep{shao1997asymptotic}.

\begin{table}[h]
  \centering
  \begin{tabular}{cccccccc}
    \hline
    $K$ & 1 & 2 & 3 & 4 &5 \\
    \hline
    AIC & 66304 & 63434& 61942          &\textbf{61732} &61750\\
    BIC & 67049 & 64547& \textbf{63418} & 63566&63937\\
    \hline
  \end{tabular}
  \caption{AIC and BIC values based on the marginal likelihood for the one- through five-factor models.}\label{tab:marglik}
\end{table}

Finally, we provide the estimation results for the three-factor model from the marginal-likelihood approach, in comparison with those from the joint-likelihood approach. The results are given in Tables~\ref{tab:3factor_full_mar} and \ref{tab:corr_3factor_mar}. Similar to the analysis above, the results are under the oblimin rotation. As we can see, although the estimates are slightly different from those given by the joint likelihood, the loading structure
is similar and suggests that the three factors correspond to the extraversion, psychoticism, and neuroticism traits, respectively.



\begin{table}[h]
  \centering
  \footnotesize
  \begin{tabular}{r|ccc}
    \hline
   \multicolumn{1}{r|}{Item } & \multicolumn{1}{c}{F1} & \multicolumn{1}{c}{F2} & \multicolumn{1}{c}{F3}\\
\hline
\textbf{1 }      &		0.16&	1.57&	0.37	\\
\textbf{2 }      &		0.20&	1.12&	-0.15	\\
\textbf{3 }      &		0.47&	1.10&	0.57	\\
\textbf{4 }      &		0.14&	1.11&	0.73	\\
\textbf{5 }      &		0.84&	1.08&	0.04	\\
\textbf{6 }      &		0.27&	0.91&	0.15	\\
\textbf{7 }      &		-0.30&	1.38&	0.02	\\
\textbf{8 }      &		0.22&	0.72&	-0.24	\\
\textbf{9 }      &		-0.63&	0.91&	0.48	\\
\textbf{10}      &		-0.20&	1.21&	0.59	\\
\textbf{11}      &		0.17&	0.70&	-0.07	\\
\textbf{12}      &		-0.12&	1.39&	0.30	\\
\textbf{13}      &		0.00&	0.43&	0.47	\\
\textbf{14}      &		0.72&	0.77&	0.23	\\
\textbf{15}      &		-0.14&	0.85&	-0.81	\\
\textbf{16}      &		0.50&	0.94&	-0.48	\\
\textbf{17}      &		0.04&	0.54&	0.00	\\
\textbf{18}      &		-0.26&	1.13&	-0.54	\\
\textbf{19}      &		0.12&	0.48&	-0.29	\\
\textbf{20}      &		0.00&	1.22&	-0.16	\\
\textbf{21}      &		-0.39&	1.41&	-0.42	\\
\textbf{22}      &		-0.45&	1.37&	-0.44	\\
\textbf{23}      &		-0.55&	1.15&	0.10	\\
\textbf{24}      &		0.19&	0.90&	-0.41	\\
\textbf{25}      &		0.11&	0.61&	-0.06	\\
\textbf{26}      &		-0.37&	0.91&	0.10	\\
\textbf{27}      &		0.31&	1.35&	0.11	\\
\textbf{28}      &		0.74&	0.81&	0.36	\\
\textbf{29}      &		-0.19&	0.40&	0.41	\\
\textbf{30}      &		0.08&	1.04&	-0.23	\\
\textbf{31}      &		-0.14&	0.92&	-0.06	\\
\textbf{32}      &		-0.14&	0.49&	-0.24	\\
\hline					
\textbf{33}      &		0.33&	-0.14&	-0.23	\\
\textbf{34}      &		1.94&	-0.11&	0.36	\\
\textbf{35}      &		2.58&	-0.13&	-0.10	\\
\textbf{36}      &		1.59&	0.11&	-0.11	\\
\textbf{37}      &		1.75&	-0.22&	-0.40	\\
\textbf{38}      &		1.28&	0.04&	0.04	\\
\textbf{39}      &		1.55&	-0.55&	-0.39	\\
\textbf{40}      &		0.86&	0.24&	-0.51	\\
\hline
\end{tabular}
\quad
  \begin{tabular}{c|ccc}
    \hline
   \multicolumn{1}{c|}{Item } & \multicolumn{1}{c}{F1} & \multicolumn{1}{c}{F2} & \multicolumn{1}{c}{F3} \\
\hline
\textbf{41}       &		1.41&	-0.13&	-0.08	\\
\textbf{42}       &		2.02&	0.31&	-0.11	\\
\textbf{43}       &		0.78&	-0.05&	-0.09	\\
\textbf{44}       &		2.24&	-0.72&	0.06	\\
\textbf{45}       &		0.87&	0.76&	0.50	\\
\textbf{46}       &		0.49&	0.01&	-0.07	\\
\textbf{47}       &		0.72&	0.34&	-0.34	\\
\textbf{48}       &		0.74&	0.54&	0.13	\\
\textbf{49}       &		0.39&	-0.02&	0.10	\\
\textbf{50}       &		1.85&	0.06&	-0.14	\\
\textbf{51}       &		1.50&	-0.06&	-0.01	\\
\textbf{52}       &		2.36&	0.07&	0.06	\\
\textbf{53}       &		2.45&	0.46&	-0.15	\\
\textbf{54}       &		1.44&	-0.14&	-0.01	\\
\textbf{55}       &		1.92&	0.09&	-0.09	\\
\hline					
\textbf{56}       &		0.17&	0.48&	1.76	\\
\textbf{57}       &		0.01&	0.25&	1.31	\\
\textbf{58}       &		0.16&	-0.86&	1.64	\\
\textbf{59}       &		-0.09&	0.43&	1.34	\\
\textbf{60}       &		-0.10&	-0.22&	1.45	\\
\textbf{61}       &		-0.08&	0.23&	1.64	\\
\textbf{62}       &		0.24&	-0.09&	1.35	\\
\textbf{63}       &		-0.46&	-0.02&	1.61	\\
\textbf{64}       &		-0.34&	-0.52&	1.99	\\
\textbf{65}       &		-0.08&	-0.25&	1.17	\\
\textbf{66}       &		-0.30&	0.30&	1.54	\\
\textbf{67}       &		0.09&	-0.11&	0.78	\\
\textbf{68}       &		0.01&	0.24&	0.55	\\
\textbf{69}       &		0.04&	0.28&	1.08	\\
\textbf{70}       &		-0.11&	0.42&	1.14	\\
\textbf{71}       &		0.57&	-0.18&	0.99	\\
\textbf{72}       &		-0.19&	0.56&	0.66	\\
\textbf{73}       &		-0.33&	-0.54&	1.71	\\
\textbf{74}       &		-0.28&	0.15&	1.59	\\
\textbf{75}       &		-0.25&	0.46&	1.29	\\
\textbf{76}       &		-0.29&	-0.41&	1.46	\\
\textbf{77}       &		0.62&	0.26&	1.24	\\
\textbf{78}       &		0.07&	0.13&	1.10	\\
\textbf{79}       &		0.19&	0.25&	0.93	\\
&&&\\
\hline
\end{tabular}
\caption{Estimated loading matrix for the three-factor model  based on the marginal likelihood.  The results are obtained after applying the oblimin rotation.  }\label{tab:3factor_full_mar}
\end{table}

\begin{table}[h]
  \centering
  \begin{tabular}{cccc}
    \hline
  &F1 &F2 & F3\\
  \hline
F1&  1.00& 0.03& -0.16\\
F2&  0.03& 1.00&  0.06\\
F3& -0.16& 0.06&  1.00\\
    \hline
  \end{tabular}
  \caption{The estimated covariance matrix for the latent factors based on the marginal likelihood.  The results are obtained after applying the oblimin rotation.  }\label{tab:corr_3factor_mar}
\end{table}

%
%
%
%
%
%
%
%
\clearpage

\bibliographystyle{apalike}
\bibliography{paper-ref}

\end{document}